\documentclass[11pt]{article}

\usepackage{ae}

\usepackage{wide}%

\usepackage{salgorithm}%

\usepackage{graphicx}

\usepackage[thmmarks]{ntheorem}%
\theoremseparator{.}%

\usepackage{titlesec}
\titlelabel{\thetitle. }

\usepackage{picins}

\usepackage{bbm}

\usepackage{hyperref}%
\hypersetup{%
   breaklinks,%
   ocgcolorlinks,
   colorlinks=true,%
   linkcolor=[rgb]{0.45,0.0,0.0},%
   citecolor=[rgb]{0,0,0.45}
}

\usepackage[hypcap=true]{caption}
\usepackage{paralist}%

\usepackage{amsmath}%
\usepackage{color}%
\usepackage{xspace}%
\usepackage{euscript}%
\usepackage{amstext}
\usepackage{amssymb}%

\numberwithin{figure}{section}

\definecolor{blue25}{rgb}{0,0,0.7}
\newcommand{\emphic}[2]{%
     \textcolor{blue25}{%
         \textbf{\emph{#1}}}%
         \index{#2}}
\newcommand{\emphi}[1]{\emphic{#1}{#1}}

\newcommand{\obslab}[1]{\label{observation:#1}}
\newcommand{\obsref}[1]    {Observation~\ref{observation:#1}}
\newcommand{\obsrefpage}[1]{Observation~\ref{observation:#1}%
   $_\text{p\pageref{observation:#1}}$}

\newcommand{\apndlab}[1]{\label{apnd:#1}}
\newcommand{\apndref}[1]{Appendix~\ref{apnd:#1}}

\providecommand{\lemlab}[1]{\label{lemma:#1}}
\providecommand{\lemref}[1]{Lemma~\ref{lemma:#1}}
\newcommand{\lemrefpage}[1]{Lemma~\ref{lemma:#1}%
   $_\text{p\pageref{lemma:#1}}$}

\newcommand{\figlab}[1]{\label{fig:#1}}
\newcommand{\figref}[1]{Figure~\ref{fig:#1}}
\newcommand{\figrefpage}[1]{Figure~\ref{fig:#1}%
   $_\text{p\pageref{fig:#1}}$}

\newcommand{\Term}[1]{\textsf{#1}}
\newcommand{\TermI}[1]{\Term{#1}\index{#1@\Term{#1}}}

\newcommand{\aftermathA}{\par\vspace{-\baselineskip}}

\newcommand{\myqedsymbol}{\rule{2mm}{2mm}}

                  {\unskip\nobreak\hskip 1em plus 1fil\nobreak%
                           \myqedsymbol
                           \parfillskip=0pt%
                           \endtrivlist}
\newcommand{\thmlab}[1]{{\label{theo:#1}}}
\newcommand{\thmref}[1]{Theorem~\ref{theo:#1}}
\newcommand{\thmrefpage}[1]{Theorem~\ref{theo:#1}%
   $_\text{p\pageref{theo:#1}}$}

\newcommand{\eqlab}[1]{\label{equation:#1}}
\newcommand{\Eqref}[1]{Eq.~(\ref{equation:#1})}
\newcommand{\seclab}[1]{{\label{section:#1}}}
\newcommand{\secref}[1]{Section~\ref{section:#1}}
\newcommand{\remlab}[1]{\label{remark:#1}}
\newcommand{\remref}[1]{Remark~\ref{remark:#1}}

\newcommand{\deflab}[1]{\label{defn:#1}}
\newcommand{\defref}[1]{Definition~\ref{defn:#1}}

\newcommand{\ccondlab}[1]{\label{c:cond:#1}}
\newcommand{\ccondref}[1]{(C\ref{c:cond:#1})}
\newcommand{\ccondrefpage}[1]{(C\ref{c:cond:#1})%
   $_\text{p\pageref{c:cond:#1}}$}

\newcommand{\pcondlab}[1]{\label{p:cond:#1}}
\newcommand{\pcondref}[1]{(P\ref{p:cond:#1})}
\newcommand{\pcondrefpage}[1]{(P\ref{p:cond:#1})%
   $_\text{p\pageref{p:cond:#1}}$}

%

\newcommand{\MakeBig}{\rule[-.2cm]{0cm}{0.4cm}}
\newcommand{\brc}[1]{\left\{ {#1} \right\}}
\newcommand{\sep}[1]{\,\left|\, {#1} \MakeBig\right.}
\newcommand{\pth}[2][\!]{#1\left({#2}\right)}
\newcommand{\epsA}{\tau}%
\newcommand{\nA}{m}%
\newcommand{\DS}{\mathsf{D}}%
\newcommand{\ceiling}[1]{\left \lceil {#1} \right \rceil}
\newcommand{\norm}[1]{\left\lVert {#1} \right \rVert}

\newcommand{\floor}[1]{\left\lfloor {#1} \right\rfloor}
\newcommand{\nfrac}[2]{{#1}/{#2}}
\newcommand{\cardin}[1]{\left\lvert {#1} \right\rvert}
\newcommand{\order}[1]{O\pth{#1}}

\newcommand{\ordereq}[1]{\Theta \left ( {#1} \right )}
\newcommand{\ordergeq}[1]{\Omega \left ( {#1} \right )}
\newcommand{\eps}{{\varepsilon}}%

\newcommand{\WSPD}{\TermI{WSPD}\xspace}
\newcommand{\NN}{\TermI{NN}\xspace}
\newcommand{\ANN}{\TermI{ANN}\xspace}

\newcommand{\AVD}{\TermI{AVD}\xspace}
\newcommand{\MEB}{\TermI{MEB}\xspace}
\newcommand{\etal}{\textit{et~al.}\xspace}
\renewcommand{\Re}{{\rm I\!\hspace{-0.025em} R}}
\newcommand{\diameter}[1]{\mathsf{diam}\pth{ {#1} }}

\newcommand{\PntSet}{\mathsf{P}}
\newcommand{\PntSetB}{\mathsf{Q}}
\newcommand{\PntSetC}{\mathsf{S}}

\newcommand{\FuncSet}{\mathcal{F}}
\newcommand{\FuncSetA}{\mathcal{G}}
\newcommand{\FuncSetB}{\mathcal{H}}

\newcommand{\query}{\mathtt{q}}

\newcommand{\pnt} {\mathsf{p}}
\newcommand{\pntA}{\mathsf{u}}
\newcommand{\pntB}{\mathsf{v}}
\newcommand{\pntC}{\mathsf{s}}
\newcommand{\pntD}{\mathsf{t}}
\newcommand{\distM}[2]{\mathsf{d}\pth{#1,#2}}
\newcommand{\dist}[2]{\norm{{#1}- {#2}}}

\newcommand{\radA}{\mathsf{r}}

\newcommand{\Otilde}{\widetilde{O}}
\newcommand{\remove}[1]{}

\newcommand{\dotP}[2]{\left \langle #1, #2 \right \rangle}

\providecommand{\si}[1]{#1}

\newcommand{\func}{f}
\newcommand{\funcA}{g}

\newcommand{\constA}{c_a}

\newcommand{\constC}{c}
\newcommand{\grconst}{\zeta}
\newcommand{\grfunc}{\lambda}

\newcommand{\subLevel}[2]{{#1}_{\preceq {#2}}}

\newcommand{\st}{~\vert~}

\DeclareMathAlphabet{\mathpzc}{OT1}{pzc}{m}{it}

\newcommand{\DistFromFunc}{\mathbbm{d}}

\newcommand{\sepM}[3][\!]{\DistFromFunc\pth[#1]{#2,#3}}

\newcommand{\sepMA}[3][\!]{\DistFromFunc_{\approx}\pth[#1]{#2,#3}}

\newcommand{\Tree}{T}
\newcommand{\HTree}{\ensuremath{\mathcal{H}}\xspace}

\newcommand{\LHST}[1]{\mathrm{L}_\HTree\pth{#1}}
\newcommand{\LHSTX}[2]{\mathrm{L}_{#1}\pth{#2}}

\newcommand{\Mplus}{\oplus}
\newcommand{\ball}[2]{\mathsf{B}\pth{#1,#2}}

\newcommand{\ballE}[2]{\mathsf{ball}\pth{#1,#2}}

\newcommand{\wt}{w}
\newcommand{\CellSet}{\mathcal{C}}
\newcommand{\CellSetA}{\mathcal{A}}
\newcommand{\cell}{\mathsf{c}}
\newcommand{\MST}{\TermI{MST}\xspace}
\newcommand{\llimit}{\alpha}
\newcommand{\ulimit}{\beta}
\newcommand{\oset}{\alpha}
\newcommand{\setA}{A}
\newcommand{\setB}{B}
\newcommand{\setC}{C}
\newcommand{\obj}{O}
\newcommand{\ObjSet}{\mathcal{O}}
\newcommand{\iradius}{r}

\newcommand{\fatness}{\alpha}
\newcommand{\bdry}[1]{\partial #1}

\newcommand{\objfunc}[1]{F_{#1}}

\newcommand{\polylog}{\text{polylog}~}

\newcommand{\yes}{\TermI{yes}\xspace}
\newcommand{\no}{\TermI{no}\xspace}
\newcommand{\Tleq}[1]{T_{\leq}(#1)}
\newcommand{\Tirq}[1]{T_{r}(#1)}

\newcommand{\refines}{\sqsubseteq}

\newcommand{\CHX}[1]{\mathcal{CH}\pth{#1}}
\renewcommand{\th}{th\xspace}

\newcommand{\num}{\ell}
\newcommand{\numA}{x}

\newcommand{\DSnn}{\mathcal{D}_{near}}
\newcommand{\Grid}{\mathsf{G}\index{grid}}
\newcommand{\Cell}{\Box}

\providecommand{\ds}{\displaystyle}

\newcommand{\mtrA}{\mathcal{X}}
\newcommand{\nnA}{\mathsf{n}_{\query}}
\newcommand{\algonnA}{y}
\newcommand{\approxOrder}[1]{\widetilde{O}\pth{#1}}
\newcommand{\Conv}{C}

\newcommand{\Qtree}{\mathcal{T}}
\newcommand{\Partn}[2]{\left \langle {#1} \right \rangle_{#2}}
\newcommand{\CCS}[2][\!]{\mathsf{C}\pth[#1]{#2}}

\newcommand{\CRinner}{\mathsf{cl}}

\newcommand{\CR}[1]{\CRinner(#1)}

\newcommand{\PF}[3]{\phi\pth{#1,#2,#3}}

\newcommand{\mebr}{\mathrm{z}}
\newcommand{\mebc}{\mathrm{u}}

\newcommand{\SarielThanks}[1]{\thanks{Department of Computer
      Science; 
      University of Illinois; 
      201 N. Goodwin Avenue;
      Urbana, IL, 61801, USA;
      {\tt sariel\atgen{}uiuc.edu}; {\tt
         \url{http://www.uiuc.edu/\string~sariel/}.} #1}}
\newcommand{\NirmanThanks}[1]{\thanks{Department of Computer
      Science; 
      University of Illinois; 
      201 N. Goodwin Avenue;
      Urbana, IL, 61801, USA;
      {\tt \si{nkumar5}\atgen{}illinois.edu}; {\tt
         \url{http://www.cs.uiuc.edu/\string~\si{nkumar5}/}.} #1}}
\newcommand{\atgen}{\symbol{'100}}

\newcommand{\GridCellsX}[2]{\Grid_{\approx #2}\pth{#1}}%

\newcommand{\slGrid}[3]{{ \subLevel{#1}{#3, \approx{#2}}}}

\newcommand{\Union}[1]{\cup #1}%
\newcommand{\DSNearN}[3]{\mathcal{D}_{\mathrm{nr}} \pth{ #1, #2, #3}}%
\newcommand{\obrc}[2][\!\!]{#1\MakeBig \left ( \left. {#2} \MakeBig
       \right ] \right.}%
\newcommand{\pbrc}[2][\!\!]{#1\left[ {#2} \MakeBig \right]}%
\newcommand{\nDS}{L}

\newcommand{\Partition}{\Pi}%
\newcommand{\PartitionA}{\Xi}
\newcommand{\PartitionB}{\Upsilon}

\newcommand{\DFS}{\Algorithm{DFS}\xspace}
\newcommand{\Cluster}{C}
\newcommand{\ClusterA}{D}

\newcommand{\Family}{\EuScript{F}}

\newcommand{\QTree}{\mathcal{Q}}

\newcommand{\constSk}{{\mathsf{c}_{\mathrm{sk}}}}

\newcommand{\corlab}[1]{\label{cor:#1}}
\newcommand{\corref}[1]{Corollary~\ref{cor:#1}}
\newcommand{\correfpage}[1]{Corollary~\ref{cor:#1}%
   $_\text{p\pageref{cor:#1}}$}

\newcommand{\myparagraph}[1]{\noindent \emph{#1}.}

\newcommand{\WeightX}[1]{\omega_{#1}}
\newcommand{\cntr}{\rho}

\renewcommand{\myparagraph}[1]{\paragraph{#1}}%

\newtheorem{theorem}{Theorem}[section] 

\newtheorem{defn}[theorem]{Definition}
\newtheorem{observation}[theorem]{Observation}

\newtheorem{lemma}[theorem]{Lemma}
\newtheorem{corollary}[theorem]{Corollary}

\newtheorem{remark}[theorem]{Remark}%
\theoremstyle{remark}{\theorembodyfont{\rm}
   \newtheorem{example}[theorem]{Example}
}

\theoremheaderfont{\em}%
\theorembodyfont{\upshape}%
\theoremstyle{nonumberplain}%
\theoremseparator{}%
\theoremsymbol{\rule{2mm}{2mm}}%
\newtheorem{proof}{{P}roof:}%

\newcommand{\fmin}{\func_{\min}}
   
\newcommand{\Ell}{\EuScript{E}}

\newcommand{\Decider}{\mathcal{D}}

\newcommand{\RestateTheorem}[2]{%
   \begin{trivlist}
       \item[] \noindent\textbf{Restatement of \thmrefpage{#1}.}
       {#2}
   \end{trivlist}}

\newcommand{\RestateGeneric}[2]{%
   \begin{trivlist}
       \item[] \noindent\textbf{Restatement of #1.}
       {#2}
   \end{trivlist}}

\newcommand{\fn}{\mathcal{F}}
\newcommand{\PartAprxC}{\Psi}%
\newcommand{\PartAprx}[3]{\PartAprxC_{#2} \pth{#1, #3}}%
\newcommand{\separation}{distance$_f$\xspace}
\newcommand{\SearchNaive}{\Algorithm{Search{}}\xspace}
\newcommand{\hRec}{\mathsf{h}}
\newcommand{\funcsX}[1]{\mathrm{F}\pth{#1}}
\newcommand{\IndSet}{{\mathcal I}}

\begin{document}

\title{Approximating Minimization Diagrams and Generalized %
   Proximity Search%
   \footnote{%
      Work on this paper was partially supported by NSF AF award
      CCF-0915984, and NSF AF award CCF-1217462.}%
}%

\author{%
   Sariel Har-Peled\SarielThanks{}%
   \and%
   Nirman Kumar\NirmanThanks{}}%

\date{\today}

\maketitle

\begin{abstract}
    We investigate the classes of functions whose minimization
    diagrams can be approximated efficiently in $\Re^d$.  We present a
    general framework and a data-structure that can be used to
    approximate the minimization diagram of such functions.  The
    resulting data-structure has near linear size and can answer
    queries in logarithmic time. Applications include approximating
    the Voronoi diagram of (additively or multiplicatively) weighted
    points. Our technique also works for more general distance
    functions, such as metrics induced by convex bodies, and the nearest
    furthest-neighbor distance to a set of point sets. 
    Interestingly, our framework works also for
    distance functions that do not comply with the triangle
    inequality. For many of these functions no near-linear size
    approximation was known before.  
\end{abstract}

\section{Introduction}

Given a set of functions $\FuncSet = \brc{\func_i : \Re^d \to \Re
   \sep{ i=1,\ldots, n }}$, their minimization diagram is the function
$\func_{\min}(\query) = \min \limits_{i=1,\ldots, n}
\func_i(\query)$, for any $\query \in \Re^d$. By viewing the graphs of
these functions as manifolds in $\Re^{d+1}$, the graph of the
minimization diagram, also known as the \emphi{lower envelope} of
$\FuncSet$, is the manifold that can be viewed from an observer at
$-\infty$ on the $x_{d+1}$ axis.  Given a set of functions $\FuncSet$
as above, many problems in Computational Geometry can be viewed as
computing the minimization diagram; that is, one preprocesses
$\FuncSet$, and given a query point $\query$, one needs to compute
$\func_{\min}(\query)$ quickly.  This typically requires $n^{O(d)}$
space if one is interested in logarithmic query time. If one is
restricted to using linear space, then the query time deteriorates to
$O\pth{n^{1-O(1/d)}}$ \cite{m-ept-92, c-opt-10}. There is substantial
work on bounding the complexity of the lower envelope in various
cases, how to compute it efficiently, and performing range search
on them; see the book by Sharir and Agarwal \cite{sa-dsstg-95}.

\myparagraph{Nearest neighbor.}  One natural problem that falls into
this framework is the nearest neighbor (\NN) search problem. Here,
given a set $\PntSet$ of $n$ data points in a metric space $\mtrA$, we
need to preprocess $\PntSet$, such that given a query point $\query
\in \mtrA$, one can find (quickly) the point $\nnA \in \PntSet$
closest to $\query$.  Nearest neighbor search is a fundamental task
used in numerous domains including machine learning, clustering,
document retrieval, databases, statistics, and many others.

To see the connection to lower envelopes, consider a set of data
points $\PntSet = \brc{\pnt_1, \dots, \pnt_n}$ in $\Re^d$. Next,
consider the set of functions $\FuncSet = \brc{\func_1, \dots,
   \func_n}$, where $\func_i(\query) = \norm{\query - \pnt_i}$, for
$i=1,\ldots, n$.  The graph of $\func_i$ is the set of points
$\brc{(\query,\func_i(\query)) \st \query \in \Re^d}$ (which is a cone
in $\Re^{d+1}$ with apex at $(\pnt_i,0)$).  Clearly the \NN problem is
to evaluate the minimization diagram of the functions at a query point
$\query$.

More generally, given a set of $n$ functions, one can think of the
minimization diagram defining a ``distance function'', by analogy with
the above. The distance of a query point here is simply the ``height''
of the lower envelope at that point.

\myparagraph{Exact nearest neighbor.} %
The exact nearest neighbor problem has a naive linear time algorithm
without any preprocessing. However, by doing some nontrivial
preprocessing, one can achieve a sub-linear query time. In $\Re^d$,
this is facilitated by answering point location queries using a
Voronoi diagram \cite{bcko-cgaa-08}.  However, this approach is only
suitable for low dimensions, as the complexity of the Voronoi diagram
is $\ordereq{n^{\ceiling{d/2}}}$ in the worst case.  Specifically,
Clarkson \cite{c-racpq-88} showed a data-structure with query time
$O(\log n )$ time, and $\order{n^{\ceiling{d/2} + \delta}}$ space,
where $\delta > 0$ is a prespecified constant (the $O(\cdot)$ notation
here hides constants that are exponential in the dimension). One can
trade-off the space used and the query time \cite{am-rsps-93}. Meiser
\cite{m-plah-93} provided a data-structure with query time $\order{d^5
   \log n }$ (which has polynomial dependency on the dimension), where
the space used is $\order{ n^{ d + \delta}}$. These solutions are
impractical even for data-sets of moderate size if the dimension is
larger than two.

\myparagraph{Approximate nearest neighbor.} %
In typical applications, however, it is usually sufficient to return
an \emphi{approximate nearest neighbor} (\emphi{\ANN{}}). Given an
$\eps > 0$, a $(1 + \eps)$-\ANN, to a query point $\query$, is a point
$\algonnA \in \PntSet$, such that
\begin{equation*}
    \dist{\query}{\algonnA} \leq (1 + \eps)\dist{\query}{\nnA},
\end{equation*}
where $\nnA \in \PntSet$ is the nearest neighbor to $\query$ in
$\PntSet$. Considerable amount of work was done on this problem, see
\cite{c-nnsms-06} and references therein.

In high dimensional Euclidean space, Indyk and Motwani showed that
\ANN can be reduced to a small number of near neighbor queries
\cite{im-anntr-98, him-anntr-12}. Next, using locality sensitive
hashing they provide a data-structure that answers \ANN queries in
time (roughly) $\approxOrder{n^{1/(1+\eps)}}$ and preprocessing time
and space $\approxOrder{n^{1+1/(1+\eps)}}$; here the
$\approxOrder{\cdot}$ hides terms polynomial in $\log n$ and
$1/\eps$. This was improved to $\approxOrder{n^{1/(1+\eps)^2}}$ query
time, and preprocessing time and space
$\approxOrder{n^{1+1/(1+\eps)^2}}$ \cite{ai-nohaa-08}.  These bounds
are near optimal \cite{mnp-lblsh-06}.

In low dimensions (i.e., $\Re^d$ for small $d$), one can use linear
space (independent of $\eps$) and get \ANN query time $O(\log n +
1/\eps^{d-1})$ \cite{amnsw-oaann-98, h-gaa-11}. The trade-off for this
logarithmic query time is of course an exponential dependence on $d$.
Interestingly, for this data-structure, the approximation parameter
$\eps$ is not prespecified during the construction; one needs to
provide it only during the query.  An alternative approach, is to use
Approximate Voronoi Diagrams (\AVD), introduced by Har-Peled
\cite{h-rvdnl-01}, which is a partition of space into regions, of
near-linear total complexity, typically with a representative point
for each region that is an \ANN for any point in the region. In
particular, Har-Peled showed that there is such a decomposition of
size $O\pth{(n /\eps^d)\log^2 n}$, such that \ANN queries can be
answered in $O(\log n)$ time.  Arya and Malamatos \cite{am-lsavd-02}
showed how to build \AVD{}'s of linear complexity (i.e.,
$O(n/\eps^d)$). Their construction uses Well-Separated Pair
Decomposition \cite{ck-dmpsa-95}. Further trade-offs between query
time and space for \AVD{}'s were studied by Arya \etal
\cite{amm-sttan-09}.

\myparagraph{Generalized distance functions: motivation.} %
The algorithms for approximate nearest neighbor, extend to various
metrics in $\Re^d$, for example the well known $\ell_p$ metrics. In
particular, previous constructions of \AVD{}'s extend to $\ell_p$
metrics \cite{h-rvdnl-01, am-lsavd-02} as well. However, these
constructions fail even for a relatively simple and natural extension;
specifically, multiplicative weighted Voronoi diagrams. Here, every
site $\pnt$, in the given point set $\PntSet$, has a weight
$\WeightX{\pnt}$, and the ``distance'' of a query point $\query$ to
$\pnt$ is $\func_\pnt\pth{\query} =
\WeightX{\pnt}\dist{\query}{\pnt}$. The function $ \func_\pnt$ is the
natural distance function induced by $\pnt$.  As with ordinary Voronoi
diagrams, one can define the weighted Voronoi diagram as a partition
of space into disjoint regions, one for each site $\pnt$, such that in
the region for $\pnt$ the function $\func_\pnt$ is the one realizing
the minimum among all the functions induced by the points of
$\PntSet$.  It is known that, even in the plane, multiplicative
Voronoi diagrams can have quadratic complexity, and the minimizing
distance function usually does not comply with the triangle
inequality. Intuitively, such multiplicative Voronoi diagrams can be
used to model facilities where the price of delivery to a client
depends on the facility and the distance. Of course, this is only one
possible distance function, and there are many other such functions
that are of interest (e.g., multiplicative, additive, etc.).

\myparagraph{When fast proximity and small space is not possible.} %
Consider a set of segments in the plane, and we are interested in the
nearest segment to a query point. Given $n$ such segments and $n$ such
query points, this is an extension of Hopcroft's problem, which
requires only to decide if there is any of the given points on any of
the segments. There are lower bounds (in reasonable models) that show
that Hopcroft's problem cannot be solved faster than $\Omega
\pth{n^{4/3}}$ time \cite{e-nlbhp-96}. This implies that no
multiplicative-error approximation for proximity search in this case
is possible, if one insists on near linear preprocessing, and
logarithmic query time.

\myparagraph{When is fast \ANN possible.} %
So, consider a set of geometric objects where each one of them induces
a natural distance function, measuring how far a point in space is
from this object.  Given such a collection of functions, the nearest
neighbor for a query point is simply the function that defines the
lower envelope ``above'' the query point (i.e., the object closest to
the query point under its distance function). Clearly, this approach
allows a generalization of the proximity search problem. In
particular, the above question becomes, for what classes of functions,
can the lower envelope be approximated up to $(1+\eps)$-multiplicative
error, in logarithmic time? Here the preprocessing space used by the
data structure should be near linear.

\subsection{Our results}
We characterize the conditions that are sufficient to approximate
efficiently the minimization diagram of functions.  Using this
framework, one can quickly, approximately evaluate the lower envelope
for large classes of functions that arise naturally from proximity
problems. Our data-structure can be constructed in near linear time,
uses near linear space, and answers proximity queries in logarithmic
time (in constant dimension).  Our framework is quite general and
should be applicable to many distance functions, and in particular we
present the following specific cases where the new data-structure can
be used: \smallskip
\begin{compactenum}[(A)]
    \item \textbf{Multiplicative Voronoi diagrams. }%
    Given a set of points $\PntSet$, where the $i$\th point $\pnt_i$
    has associated weight $\wt_i > 0$, for $i=1,\ldots, n$, consider
    the functions $\func_i(\query) = \wt_i \norm{\query -
       \pnt_i}$. The minimization diagram for this set of functions, 
    corresponds to
    the multiplicative weighted Voronoi diagram of the points.  The
    approach of Arya and Malamatos \cite{am-lsavd-02} to construct
    \AVD's using \WSPD's fails for this problem, as that construction
    relies on the triangle inequality that the regular Euclidean
    distance posseses, which does not hold in this case.
    
    We provide a near linear space \AVD construction for this case.
    We are unaware of any previous results on \AVD for
    multiplicatively weighted Voronoi diagrams.
    
    \item \textbf{Minkowski norms of fat convex bodies. }%
    Given a bounded symmetric convex body $\Conv$ centered at the
    origin, it defines a natural metric; that is, for points $\pntA$
    and $\pntB$ their distance, as induced by $\Conv$, denoted by
    $\norm{\pntA - \pntB}_{\Conv}$, is the minimum $x$ such that $x
    \Conv + \pntA$ contains $\pntB$.  So, given a set of $n$ data
    points $\PntSet = \brc{\pnt_1, \dots, \pnt_n}$ and $n$ centrally
    symmetric and bounded convex bodies $\Conv_1, \dots, \Conv_n$, we
    define $\func_i(\query) = \norm{\pnt_i - \query}_{\Conv_i}$, for
    $i = 1, \ldots, n$.  Since each point induces a distance by a
    different convex body, this collection no longer defines a metric,
    and this makes the problem significantly more challenging.  In
    particular, existing techniques for \AVD and \ANN cannot be
    readily applied. Intuitively, the fatness of the associated convex
    bodies turns out to be sufficient to approximate the associated
    distance function, see \secref{fat:bodies}. The negative example
    for the case of segments presented above, indicates that this
    condition is also necessary.

    \item \textbf{Nearest furthest-neighbor. }%
    Consider a situation where the given input is uncertain;
    specifically, for the $i$\th point we are given a set of points
    $\PntSet_i \subseteq \Re^d$ where it might lie (the reader might
    consider the case where the $i$\th point randomly chooses its
    location out of the points of $\PntSet_i$). There is a growing
    interest in how to handle such inputs, as real world measurements
    are fraught with uncertainty, see \cite{drs-pddd-09, a-mmud-09,
       aesz-nnsuu-12, aahpy-nnsuu-13} and references therein.  In
    particular, in the worst case, the distance of the query point
    $\query$ to the $i$\th point, is the distance from $\query$ to the
    furthest-neighbor of $\query$ in $\PntSet_i$; that is,
    $\fn_i(\query) = \max_{\pnt \in \PntSet_i}
    \dist{\query}{\pnt}$. Thus, in the worst case, the nearest point
    to the query is $\fn(\query) = \min_i \fn_i(\query)$. Using our
    framework we can approximate this function efficiently, using
    space $\Otilde(n)$, and providing logarithmic query time. Note,
    that surprisingly, the space requirement is independent of the
    original input size, and only depends on the number of uncertain
    points.

\end{compactenum}

\myparagraph{Paper organization.} %
In \secref{prelims} we define our framework and prove some basic
properties. Since we are trying to make our framework as inclusive as
possible, its description is somewhat abstract.  In \secref{build}, we
describe the construction of the \AVD and its associated
data-structure.  We describe in \secref{applications} some specific
cases where the new \AVD construction can be used.  We conclude in
\secref{conclusions}.

\section{Preliminaries}
\seclab{prelims}

For the sake of simplicity of exposition, throughout the paper we
assume that all the ``action'' takes place in the unit cube
$[0,1]^d$. Among other things this implies that all the queries are in
this region. This can always be guaranteed by an appropriate scaling
and translation of space. The scaling and translation, along with the
conditions on functions in our framework, implies that outside the
unit cube the approximation to the lower envelope can be obtained in
constant time.

\subsection{Informal description of the technique}

Consider $n$ points in the plane $\pnt_1, \ldots, \pnt_n$, where the
``distance'' from the $i$\th point to a query $\query$, is the minimum
scaling of an ellipse $\Ell_i$ (centered at $\pnt_i$), till it covers
$\query$, and let $\func_i$ denote this distance function. Assume that
these ellipses are fat. Clearly each function $\func_i$ defines a
deformed cone. Given a query point $\query \in \Re^2$, we are
interested in the first function graph being hit by a vertical ray
shoot upward from $(\query,0)$. In particular, let $\fmin(\query) =
\min \limits_{i=1,\ldots, n} \func_i(\query)$ be the minimization
diagram of these functions.

\parpic[r]{\includegraphics{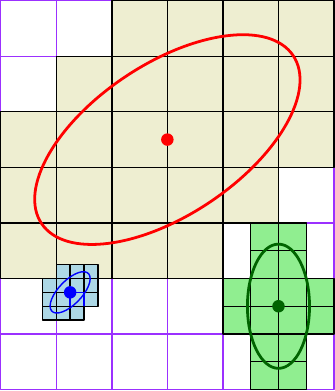}}

As a first step to computing $\fmin(\query)$, consider the decision
version of this problem. Given a value $r$, we are interested in
deciding if $\fmin(\query) \leq r$. That is, we want to decide if
$\query \in \bigcup_i \pth{\pnt_i + r \Ell_i}$. Of course, this is by
itself a computationally expensive task, and as such we satisfy
ourselves with an approximate decision to this procedure. Formally, we
replace every ellipse by a collection of grid cells (of the right
resolution), such that approximately it is enough to decide if the
query point lies inside any of these grid cells -- if it does, we know
that $\fmin(\query) \leq (1+\eps)r$, otherwise $\fmin(\query) > r$.
Of course, as depicted in the right, since the ellipses are of
different sizes, the grid cells generated for each ellipse might
belong to different resolutions, and might be of different
sizes. Nevertheless, one can perform this point-location query among
the marked grid squares quickly using a compressed quadtree.

If we were interested only in the case where $\fmin(\query)$ is
guaranteed to be in 
some interval $[\alpha, \beta]$, then the problem would be easily
solvable. Indeed, build a sequence of the above deciders $\Decider_1,
\ldots, \Decider_m$, where $\Decider_i$ is for the distance
$(1+\eps)^i \alpha$, and $m = \log_{1+\eps} (\beta/\alpha)$. Clearly,
doing a binary search over these deciders with the query point would
resolve the distance query. 

\parpic[r]{\includegraphics{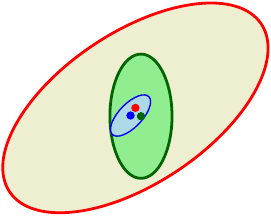}}

\noindent%
\textbf{Sketchable.}  Unfortunately, in general, there is no such
guarantee -- which makes the problem significantly more
challenging. Fortunately, for truly ``large'' distances a collection
of such ellipses looks like a constant number of ellipse (at least in
the approximate case). In the example of the figure above, for large
enough distance, the ellipses looks like a single ellipse, as
demonstrated in the figure on the right. Slight more formally, if
$\bigcup_i \pth{\pnt_i + r \Ell_i}$ is connected, then the set
$\bigcup_i \pth{\pnt_i + R \Ell_i}$ can be $(1+\eps)$-approximated by
a constant number of these ellipses, if $R > \Omega(n r /\eps)$.  A
family of functions having this property is
sketchable. This suggests the problem is easy for very large
distances. 

\paragraph{Critical values to search over.} %
The above suggests that connectivity is the underlying property
that enables us to simplify and replace a large set of ellipses, by a
few ellipses, if we are
looking at them from sufficiently far. This implies that the critical
values when the level-set of the functions changes its connectivity
are the values we should search over during the nearest neighbor
search. Specifically, let $r_i$ be the minimal $r$ when the set
$\bigcup_{k=1}^n \pth{\pnt_k + r_i \Ell_k}$ has $n-i$ connected
components, and let $r_1\leq r_2 \leq \cdots \leq r_n$ be the
resulting sequence. Using the above decision procedure, and a binary
search, we can find the index $j$, such that $r_j \fmin(\query) \leq
r_{j+1}$. Furthermore, the decision procedure for the distance $r_j$,
reports which connected components of $\bigcup_{k=1}^n \pth{\pnt_k +
   r_j \Ell_k}$ contains the query point $\query$. Assume this
connected components is formed by the first $t$ functions; that is,
$\bigcup_{k=1}^t \pth{\pnt_k + r_j \Ell_k}$ is connected and contains
$\query$.  There are two possibilities:
\begin{compactenum}[\quad(A)]
    \item If $\fmin(\query) \in \pbrc{r_{j}, \constA (t/\eps)r_{j}}$,
    then a binary search with the decision procedure would
    approximation $\fmin(\query)$, where $\constA$ is some constant.

    \item If $\fmin(\query)> (t/\eps)r_{j}$ then this whole cluster of
    functions can be sketched and replaced by constant number of
    representative functions, and the nearest-neighbor search can now
    resolve directly by checking for each function in the sketch, what is
    the distance of the query point from it.
\end{compactenum}

\subsubsection{Challenges}

There are several challenges in realizing the above scheme:
\begin{compactenum}[\quad(A)]
    \item We are interested in more general distance functions. To
    this end, we carefully formalize what conditions the underlying
    distance functions induced by each point has to fulfill so that
    our framework applies.

    \item The above scheme requires (roughly) quadratic space to be
    realized. To reduce the space to near linear, we need be more
    aggressive about replacing clusters of points/functions by
    sketches.  To this end, we replace our global scheme by a
    recursive scheme that starts with the ``median'' critical value,
    and fork the search at this value using the decision
    procedure. Now, when continuing the search above this value, we
    replace every cluster (at this resolution) by its sketch.

    \item Computing this ``median'' value directly is too
    expensive. Instead we randomly select a function, we compute the
    connectivity radius of this single distance function with the
    remaining functions. With good probability this value turns out to
    be good.

    \item We need to be very careful to avoid accumulation in the
    error as we replace clusters by sketches.
\end{compactenum}

\subsection{Notations and basic definitions}
\seclab{notation:basic:defns}%

Given $\query \in \Re^d$ and $\PntSet \subseteq \Re^d$ a non-empty
closed set, the \emphi{distance} of $\query$ to $\PntSet$ is
$\distM{\query}{\PntSet} = \min \limits_{x \in \PntSet} \norm{\query -
   x}$.  For a number $\num >0$, the \emphi{grid} of side-length
$\num$, denoted by $\Grid_\num$, is the natural tiling of $\Re^d$,
with cubes of side-length $\num$ (i.e. with a vertex at the origin).
A cube $\Cell$ is \emphi{canonical} if it belongs to $\Grid_\num$,
$\num$ is a power of $2$, and $\Cell \subseteq [0,1]^d$. Informally, a
canonical cube (or cell) is a region that might correspond to a cell
in a quadtree having the unit cube as the root region.

\begin{defn}%
    \deflab{grid:approx}%
    To approximate a set $X \subseteq [0,1]^d$, up to distance $r$,
    consider the set $\GridCellsX{X}{r}$ of all the canonical grid
    cells of $\Grid_\num$ that have a non-empty intersection with $X$,
    where $\num = 2^{\floor{\log_2 (r/\sqrt{d})}}$.  Let $\Union{
       \GridCellsX{X}{r}} = \bigcup_{\Cell \in \GridCellsX{X}{r}}
    \Cell$, denote the union of cubes of $\GridCellsX{X}{r}$.
\end{defn}

Observe that $X \subseteq \Union{ \GridCellsX{X}{r}} \subseteq X
\oplus \ball{0}{r}$, where $\oplus$ denotes the Minkowski sum, and
$\ball{0}{r}$ is the ball of radius $r$ centered at the origin.
\begin{defn}
    For $\num \geq 0$ and a function $\func:\Re^d \rightarrow \Re$,
    the $\num$ \emphi{sublevel set} of $\func$ is the set
    $\subLevel{\func}{\num} = \brc{\pnt \in \Re^d \sep{ \func(\pnt)
          \leq \num}}$.  For a set of functions $\FuncSet$, let
    $\subLevel{\FuncSet}{\num} = \bigcup_{\func \in \FuncSet}
    \subLevel{\func}{\num}$.
\end{defn}

\begin{defn}
    Given a function $\func$ and $\query \in \Re^d$ their
    {\separation} is $\sepM{\query}{\func} = \func(\query)$.  Given
    two functions $\func$ and $\funcA$, their \emphi{\separation{}}
    $\sepM{\func}{\funcA}$ is the minimum $l \geq 0$ such that
    $\subLevel{\func}{l} \cap \subLevel{\funcA}{l} \neq \emptyset$.
    Similarly, for two sets of function, $\FuncSet$ and $\FuncSetA$,
    their \emphi{\separation{}} is
    \begin{align*}
        \sepM{\FuncSet}{\FuncSetA} = \min_{\func \in \FuncSet, \funcA
           \in \FuncSetA} \sepM{\func}{\funcA}.
    \end{align*}
\end{defn}

\begin{example}
    To decipher these somewhat cryptic definitions, the reader might
    want to consider the standard settings of regular Voronoi
    diagrams. Here, we have a set $\PntSet$ of $n$ points. The $i$\th
    point $\pnt_i \in \PntSet$ induces the natural function $\func_i
    \pth{\query} = \dist{\query}{\pnt_i}$.  We have:
    \begin{compactenum}[\qquad(A)]
        \item The graph of $\func_i$ in $\Re^{d+1}$ is a cone
        ``opening upwards'' with an apex at $(\pnt_i, 0)$.
        
        \item The $\num$ sublevel set of $\func_i$ (i.e.,
        $\subLevel{\pth[]{\func_i}}{\num}$) is a ball of radius $\num$
        centered at $\pnt_i$.
        
        \item The \separation of $\query$ from $\func_i$ is the
        Euclidean distance between $\query$ and $\pnt_i$.
        
        \item Consider two subsets of points $X, Y \subseteq \PntSet$
        and let $\FuncSet_X$ and $\FuncSet_Y$ be the corresponding
        sets of functions. The \separation $\num =
        \sepM{\FuncSet_X}{\FuncSet_Y}$ is the minimum radius of balls
        centered at points of $X$ and $Y$, such that there are two
        balls from the two sets that intersect; that is, $\num$ is
        half the minimum distance between a point of $X$ and a point
        of $Y$. In particular, if the union of balls of radius $\num$
        centered at $X$ is connected i.e.
        $\subLevel{\pth[]{\FuncSet_X}}{\num}$ is connected, and
        similarly for $Y$, then $\subLevel{\pth[]{\FuncSet_X \cup
              \FuncSet_Y}}{\num}$ is connected. This is the critical
        value where two connected components of the sublevel set
        merge.
    \end{compactenum}
\end{example}

The \separation function behaves to some extent like a distance
function:
\begin{inparaenum}[(i)]
    \item $\sepM{\func}{\funcA}$ always exists, and
    \item (symmetry) $\sepM{\func}{\funcA} = \sepM{\funcA}{\func}$,
\end{inparaenum}
Also, we have $\subLevel{\func}{\sepM[]{\func}{\funcA}} \neq
\emptyset$.  We extend the above definition to sets of functions.
Note that the triangle inequality does not hold for
$\sepM{\cdot}{\cdot}$.
\begin{observation}%
    \obslab{notbothnear}%
    Suppose that $\func$ and $\funcA$ are two functions such that
    $\sepM{\func}{\funcA} > 0$ and $\query \in \Re^d$. Then,
    $\max\pth{\sepM{\query}{\func},\sepM{\query}{\funcA}} \geq
    \sepM{\func}{\funcA}$.
\end{observation}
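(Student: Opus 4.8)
The plan is to simply unpack the definitions and exhibit $\query$ as a witness point. Set
$l_0 = \max\pth{\sepM{\query}{\func},\sepM{\query}{\funcA}}$. By the definition of the {\separation} of a point from a function we have $\sepM{\query}{\func} = \func(\query)$ and $\sepM{\query}{\funcA} = \funcA(\query)$, so $l_0 = \max\pth{\func(\query),\funcA(\query)}$.

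First I would check that $\query$ lies in both sublevel sets at level $l_0$. Since $\func(\query) \le l_0$, the definition of the sublevel set gives $\query \in \subLevel{\func}{l_0}$, and since $\funcA(\query) \le l_0$ we similarly get $\query \in \subLevel{\funcA}{l_0}$. Hence $\subLevel{\func}{l_0} \cap \subLevel{\funcA}{l_0} \neq \emptyset$, as this intersection contains $\query$.

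Finally, recall that $\sepM{\func}{\funcA}$ is by definition the minimum $l \ge 0$ for which $\subLevel{\func}{l} \cap \subLevel{\funcA}{l} \neq \emptyset$ (and this minimum is attained — one of the basic properties recorded just before the statement). Applying minimality to the value $l_0$, which we have just shown to be admissible, yields $\sepM{\func}{\funcA} \le l_0 = \max\pth{\sepM{\query}{\func},\sepM{\query}{\funcA}}$, which is exactly the claimed inequality. Note that the hypothesis $\sepM{\func}{\funcA} > 0$ is not actually used.

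There is essentially no obstacle here; the only point requiring a moment's care is that $\sepM{\func}{\funcA}$ is a genuine minimum rather than merely an infimum, so that ``$l_0$ is admissible'' really does force $\sepM{\func}{\funcA} \le l_0$. This is precisely the well-definedness fact (``$\sepM{\func}{\funcA}$ always exists'') stated in the excerpt, so it may be invoked directly.
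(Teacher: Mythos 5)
Your proof is correct. The paper states this as an observation without proof, and your argument is the natural unpacking of the definitions that the authors evidently had in mind: $\query$ itself witnesses that $\subLevel{\func}{l_0} \cap \subLevel{\funcA}{l_0} \neq \emptyset$ at $l_0 = \max(\func(\query),\funcA(\query))$, so minimality of $\sepM{\func}{\funcA}$ gives the inequality. Your side remark that the hypothesis $\sepM{\func}{\funcA}>0$ is not actually needed is also accurate.
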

\begin{defn}
    \deflab{sets:connected}
    Let $\setB_1, \setB_2, \dots, \setB_m$ be $n$ connected, nonempty 
    sets in $\Re^d$. This collection of sets is \emphi{connected} 
    if $\,\cup_i \setB_i$ is connected.
\end{defn}

\subsubsection{Sketches}
A key idea underlying our approach is that is that any set of
functions of interest should look like a single (or a small number of
functions) from ``far'' enough.  Indeed, given a set of points
$\PntSet \subseteq \Re^d$, they look like a single point (as far as
distance), if the distance from $\CHX{\PntSet}$ is at least
$2\diameter{\PntSet}/\eps$.
\begin{defn}[$\CR{\FuncSet}$]
    \deflab{connect:level}%
    Given a set of functions $\FuncSetA$, if ${\FuncSetA}$ contains a
    single function then the \emphi{connectivity level}
    $\CR{\FuncSetA}$ is $0$; otherwise, it is the minimum $\num \geq
    0$, such that the collection of sets $\subLevel{\func}{\num}$ for
    $\func \in \FuncSetA$ is connected, see \defref{sets:connected}.
\end{defn}

\begin{remark}
  It follows from \defref{connect:level} that
  at level $\num = \CR{\FuncSetA}$, each of the sets $\subLevel{\func}{\num}$
  for $\func \in \FuncSetA$ are nonempty and connected and further their
  union $\subLevel{\FuncSetA}{\num}$ is also connected. 
  This can be relaxed to require that the 
  intersection graph of the sets 
  $\subLevel{\func}{\num}$ for $\func \in \FuncSetA$ is connected 
  (this also implies they are nonempty). Notice that, if at level $\num$, 
  the sublevel sets are connected, then the relaxed definition is 
  equivalent to \defref{connect:level}.
  However, the relaxed definition introduces more technical baggage, and
  for all the interesting applications we have, the sublevel sets
  $\subLevel{\func}{y}$ are connected at all levels $y$ they are nonempty. 
  Therefore, in the interest of brevity, and to keep
  the presentation simple, we mandate that the sublevel sets be connected
  at $\num$. In fact, it would not harm to assume that the sublevel sets
  are connected whenever nonempty.
\end{remark}

\begin{defn}
    \deflab{sketch:def}%
    Given a set of functions $\FuncSetA$ and $\delta \geq 0, y_0 \geq
    0$, a \emphi{$\pth[]{\delta,y_0}$-sketch} for $\FuncSetA$ is a
    (hopefully small) subset $\FuncSetB \subseteq \FuncSetA$, such
    that $ \subLevel{\FuncSetA}{y} \subseteq \subLevel{\FuncSetB}{(1 +
       \delta)y}, $ for all $y \geq y_0$.
\end{defn}
It is easy to see that for any $\FuncSetA, \delta \geq 0, y_0 \geq 0$,
if $\FuncSetB \subseteq \FuncSetA$ is a $(\delta,y_0)$-sketch, then
for any $\delta' \geq \delta, y'_0 \geq y_0, \FuncSetB' \supseteq
\FuncSetB$ it is true that $\FuncSetB'$ is a $(\delta',y'_0)$-sketch
for $\FuncSetA$. Trivially, for any $\delta \geq 0, y_0 \geq 0$, it is
true that $\FuncSetB = \FuncSetA$ is a $(\delta,y_0)$-sketch.

\subsection{Conditions on the functions }
\seclab{conditions}%

We require that the set of functions under consideration satisfy the
following conditions.

\begin{compactenum}[\quad(P1)]
    \item \pcondlab{sublevel:compact}%
    \textbf{Compactness.} For any $y \geq 0$ and $i=1,\ldots,n$, the
    set $\subLevel{\pth[]{\func_i}}{y}$ is compact.
    
    \item \pcondlab{bounded:growth}%
    \textbf{Bounded growth.}  For any $\func \in \FuncSet$, there is a
    function $\grfunc_\func : \Re^+ \to \Re^+$, called the
    \emphi{growth function}, such that for any $y \geq 0$ and $\eps >
    0$, if $\subLevel{\func}{y} \neq \emptyset$, then
    $\grfunc_\func(y) \geq \diameter{\subLevel{\func}{y}}/\grconst$,
    where $\grconst$ is an absolute constant, the \emphi{growth
       constant}, depending only on the family of functions and not on
    $n$ and such that if $\query \in \Re^d$ with
    $\distM{\query}{\subLevel{\func}{y}} \leq \eps\grfunc_\func(y)$,
    then $\func(\query) \leq (1+\eps) y$. This is equivalent to
    $\subLevel{\func}{y} \Mplus \ball{0}{\eps\grfunc_\func(y)}
    \subseteq \subLevel{\func}{(1+\eps)y}$, where $\ball{\pntA}{r}$ is
    the ball of radius $r$ centered at $\pntA$.
    

    \item \pcondlab{sketch:small}%
    \textbf{Existence of a sketch.} %
    Given $\delta > 0$ and a subset $\FuncSetA \subseteq \FuncSet$,
    there is a $\FuncSetB \subseteq \FuncSetA$ with
    $\cardin{\FuncSetB} = \order{\Big. \nfrac{1}{\delta^{\constSk}}}$
    and $y_0 = \order{\CR{\FuncSetA} \pth{\big. \nfrac{
             \cardin{\FuncSetA}}{\delta}}^{\constSk}}$ such that,
    $\FuncSetB$ is an $(\delta,y_0)$-sketch, where $\constSk$ is some
    positive integer constant that depends on the given family of
    functions.
\end{compactenum}

\medskip
\noindent We also require some straightforward properties from the
computation model: \medskip
\begin{compactenum}[\quad(C1)]
    \item \ccondlab{s:computable} $\forall \query \in \Re^d$ and $1
    \leq i \leq n$, the value $\func_i(\query)=\sepM{\query}{\func_i}$
    is computable in $O(1)$ time.

    \item \ccondlab{grid:computable}%
    For any $y \geq 0, r > 0$ and $i$, the set of grid cells
    approximating the sublevel set $\subLevel{\pth[]{\func_i}}{r}$ of
    $\func_i$, that is $\slGrid{\pth{\func_i}}{r}{y} = \GridCellsX{
       \subLevel{\pth{\func_i}}{y}}{r}$ (see \defref{grid:approx}), is
    computable in linear time in its size.
    
    \item \ccondlab{f:s:computable} For any $\func_i, \func_j \in
    \FuncSet, 1 \leq i, j \leq n$ the \separation
    $\sepM{\func_i}{\func_j}$ is computable in $O(1)$ time.
\end{compactenum}

\smallskip

We also assume that the growth function $\grfunc_{\pth[]{\func_i}}(y)$
from Condition \pcondref{bounded:growth} be in fact computable easily
i.e.  in $O(1)$ time.

\begin{remark}%
    \remlab{grid:set:bound}%
    We will use Condition \ccondref{grid:computable} for a given $y$
    and $i$ only for $r$ at least $\Omega\pth{\eps
       \grfunc_{\pth[]{\func_i}}(y)}$ i.e. we will use a grid on the
    sublevel set at a low enough resolution typically $\eps$ times its
    growth function value at that point, which by Condition
    \ccondref{grid:computable} is also $\Omega\pth{\eps
       \diameter{\subLevel{\pth[]{\func_i}}{y}}}$. As such the number
    of grid cells in the grid used is $O(1/\eps^d)$.
\end{remark}

\subsubsection{Properties}

The following are basic properties that the functions under
consideration have. Since these properties are straightforward but
their proof is somewhat tedious, we delegate their proof to
\apndref{p:proof}.

In the following, let $\FuncSet$ be a set of functions that satisfy
the conditions above.  \smallskip
\begin{compactenum}[\quad(L1)]
    \item For any $\func \in \FuncSet$, either $\subLevel{\func}{0} =
    \emptyset$ or $\subLevel{\func}{0}$ consists of a single point.
    (See \lemrefpage{s:l:zero:trivial}.)
    
    \item If $\CR{\FuncSetA} = 0$ for any non-empty subset $\FuncSetA$
    then $\cardin{\FuncSetA} = 1$. (See \defref{connect:level} and
    \obsrefpage{c:l:zero}.)
    
    \item Let $\func \in \FuncSetA$ and $y \geq 0$. For any $\pntA,
    \pntB \in \subLevel{\func}{y}$, we have $\pntA \pntB \subseteq
    \subLevel{\FuncSetA}{(1+\grconst/2)y}$, where $\pntA \pntB$
    denotes the segment joining $\pntA$ to $\pntB$.  (See
    \lemrefpage{segment:contained}.)

    \item Let $\setA_1, \dots, \setA_m \subseteq \Re^d$ be compact
    connected sets, $\pntA \pntB$ be a segment such that $\pntA \pntB \cap
    \setA_i \neq \emptyset$, for $i=1, \ldots, k$ and $\pntA \pntB
    \subseteq \bigcup_{i=1}^k \setA_i$. Then, the sets $\setA_1,
    \ldots, \setA_k$ are connected. (See
    \lemrefpage{segment:compact:cover}.)
    
    \item For any $\FuncSetB \subseteq \FuncSetA \subseteq \FuncSet$,
    $\delta \geq 0$ and $y \geq 0$, such that $\FuncSetB$ is a
    $(\delta,y)$-sketch for $\FuncSetA$, we have that,
    $\CR{\FuncSetB} \leq
    (1+\delta)(1+\grconst/2)\max(y,\CR{\FuncSetA})$.  (See
    \lemrefpage{sketch:conn:level}.)

    \item 
    Let $\FuncSetB \subseteq \FuncSetA \subseteq \FuncSet$, such that
    $\FuncSetB$ is a $(\delta,y_0)$-sketch for $\FuncSetA$ for some
    $\delta \geq 0$ and $y_0 \geq 0$.  Let $\query$ be a point such
    that $\sepM{\query}{\FuncSetA} \geq y_0$. Then we have that
    $\sepM{\query}{\FuncSetB} \leq (1 + \delta)
    \sepM{\query}{\FuncSetA}$.  (See \lemrefpage{sketch:rule}.)
    
\end{compactenum}

\subsubsection{Computing the connectivity level}
\seclab{sketch:time}

We implicitly assume that the above relevant quantities can be
computed efficiently.  For example given some $\delta >0$, and $y_0$
as per the bound in condition \pcondref{sketch:small}, a
$(\delta,y_0)$-sketch can be computed in time
$\order{\cardin{\FuncSetA} / \delta^{\constSk}}$ time. We also assume
that $\CR{\FuncSetA}$ can be computed efficiently without resorting to
the ``brute force'' method. The brute force method computes the
individual \separation of the functions and then computes a \MST on the
graph defined by vertices as the functions and edge lengths as their
\separation. Then $\CR{\FuncSetA}$ is the longest edge of this \MST.

\section{Summary of results}

Our main result is the following, the details of which are delegated
to \secref{build}.

\newcommand{\bodyMain}{%
   Let $\FuncSet$ be a set of $n$ functions in $\Re^d$ that complies
   with our assumptions, see \secref{conditions}, and has sketch
   constant $\constSk \geq d$. Then, one can build a data-structure to
   answer \ANN for this set of functions, with the following
   properties:
    \begin{compactenum}[\qquad(A)]
        \item The query time is $O\pth{ \log n + 1/\eps^{\constSk}}$.

        \item The preprocessing time is $O\pth{ n \eps^{-2\constSk}
           \log^{2\constSk+1} n}$.
        
        \item The space used is $O \pth{ n\eps^{-d-1 -\constSk} \log^2
           n }$.
    \end{compactenum}
}

\begin{theorem}%
    \thmlab{main}%
    \bodyMain
\end{theorem}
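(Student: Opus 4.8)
\emph{An approximate decider.} The workhorse is a structure that, for a subset $\FuncSetA\subseteq\FuncSet$ and a threshold $r>0$, answers \yes\ if $\fmin(\query)\le(1+\eps)r$, \no\ if $\fmin(\query)>r$, and also names the connected component of $\subLevel{\FuncSetA}{r}$ that (approximately) contains $\query$. Build it as follows: for each $\func_i\in\FuncSetA$ take the canonical cells $\GridCellsX{\subLevel{\func_i}{r}}{c\,\eps\,\grfunc_{\func_i}(r)}$ approximating $\subLevel{\func_i}{r}$ at resolution $\Theta(\eps\,\grfunc_{\func_i}(r))$; by \ccondref{grid:computable} and \remref{grid:set:bound} there are only $O(1/\eps^d)$ of them, each tagged with $\func_i$. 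Insert all $O(\cardin{\FuncSetA}/\eps^d)$ tagged cells into a compressed quadtree and run a union--find over quadtree-adjacent cells to give each cell a component index (this approximates the true components of $\subLevel{\FuncSetA}{r}$ up to the $\Theta(\eps\,\grfunc)$ slack, which is harmless). A point location finds the leaf containing $\query$ in $O(\log(\cardin{\FuncSetA}/\eps))$ time, and its tags answer the query: by Condition \pcondref{bounded:growth} a query in a cell of $\func_i$ has $\distM{\query}{\subLevel{\func_i}{r}}\le\diameter{\text{cell}}\le\eps\,\grfunc_{\func_i}(r)$, hence $\func_i(\query)\le(1+\eps)r$, while conversely $\fmin(\query)\le r$ forces $\query\in\subLevel{\func_i}{r}$ for some $i$, which is covered by its cells. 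Stacking $O(\eps^{-1}\log(\beta/\alpha))$ deciders at the levels $\alpha,(1+\eps)\alpha,\dots,\beta$ and merging their cells into one compressed quadtree in which each node records the least level of a covering cell yields a \emph{window gadget} that returns, in one point location, a $(1+\eps)$-approximation of $\fmin(\query)$ and a witness function whenever $\fmin(\query)\in[\alpha,\beta]$; its size is $O(\cardin{\FuncSetA}\,\eps^{-d-1}\log(\beta/\alpha))$.

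\emph{The recursive tree.} Each node carries a subset $\FuncSetA\subseteq\FuncSet$ and a distance window $[\alpha,\beta]$ it is responsible for; at the root $\FuncSetA=\FuncSet$, and we may assume $\beta/\alpha=\poly(n/\eps)$ (the relevant distances have polynomial spread; recall that outside $[0,1]^d$ the answer is obtained in $O(1)$ time). If $\cardin{\FuncSetA}=O(1)$, or if $\beta/\alpha$ is polynomially bounded in $\cardin{\FuncSetA}/\eps$, the node is a leaf: it stores a window gadget for $[\alpha,\beta]$ together with the size-$O(1/\delta^{\constSk})$ $(\delta,y_0)$-sketch of $\FuncSetA$ from \pcondref{sketch:small} with $y_0=O\bigl(\CR{\FuncSetA}(\cardin{\FuncSetA}/\delta)^{\constSk}\bigr)=O(\beta)$, so that queries with $\fmin(\query)>\beta$ are answered by evaluating the $O(1/\delta^{\constSk})$ sketch functions (Condition \ccondref{s:computable}). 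Otherwise, as in the overview, we draw a uniformly random $\func\in\FuncSetA$, compute the \separation{}s $\sepM{\func}{\funcA}$, $\funcA\in\FuncSetA$, by \ccondref{f:s:computable}, and let $r^\ast$ be their median. We build a decider $\Decider^\ast$ at level $r^\ast$ and branch on it. If $\fmin(\query)\le r^\ast$, the decider names the component $\Cluster$ of $\subLevel{\FuncSetA}{r^\ast}$ with $\query\in\Cluster$; the nearest function lies in $\Cluster$, so we recurse into the child $(\Cluster,[\alpha,r^\ast])$. If $\fmin(\query)>r^\ast$, each component $\Cluster_j$ of $\subLevel{\FuncSetA}{r^\ast}$ has $\CR{\Cluster_j}\le r^\ast$, so by \pcondref{sketch:small} it has a size-$O(1/\delta^{\constSk})$ $(\delta,y_0^{(j)})$-sketch $\widehat{\Cluster_j}$ with $y_0^{(j)}=O\bigl(r^\ast(\cardin{\Cluster_j}/\delta)^{\constSk}\bigr)$; we place a window gadget for $[r^\ast,Y]$ on $\bigcup_j\widehat{\Cluster_j}$, where $Y=\max_j y_0^{(j)}$ (valid since $\bigcup_j\widehat{\Cluster_j}$ is a $(\delta,Y)$-sketch of $\FuncSetA$, by property (L6)), and recurse on $\bigl(\bigcup_j\widehat{\Cluster_j},[Y,\beta]\bigr)$; because the components re-merge as the level grows, each further step of this up-branch at least halves the number of surviving components, hence also the number of representative functions, so it closes in $O(\log n)$ steps. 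We use $\delta=\Theta(\eps/\log n)$ everywhere except at the final leaf (where $\delta=\Theta(\eps)$), so the $(1+\delta)$ distortions from the nested sketches — controlled via properties (L5) and (L6), which bound how sketching perturbs $\CR{\cdot}$ and $\sepM{\query}{\cdot}$ — multiply over the $O(\log n)$ levels to at most $1+O(\eps)$.

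\emph{Analysis, and the hard part.} The crucial probabilistic claim is that with constant probability the median of the \separation{}s of a uniformly random function of $\FuncSetA$ is a \emph{balanced critical value}: every component of $\subLevel{\FuncSetA}{r^\ast}$ then has at most $\tfrac23\cardin{\FuncSetA}$ functions, while a constant fraction of the $\cardin{\FuncSetA}-1$ critical values of $\FuncSetA$ exceed $r^\ast$. Re-sampling $O(\log n)$ times and certifying a candidate with a decider makes this hold with high probability, so each branch shrinks its responsible subproblem by a constant factor and the tree has depth $O(\log n)$. For space: the down-nodes at any fixed depth partition $\FuncSet$; a down-node with $\cardin{\FuncSetA}=N$ carries a decider of size $O(N/\eps^d)$ and spawns an up-branch whose window gadgets telescope (their component counts halve) to total size $O(N\,\delta^{-\constSk}\eps^{-d-1}\,\polylog n)$, so summing over the $O(\log n)$ depths gives $O(n\,\delta^{-\constSk}\eps^{-d-1}\log^{O(1)}n)$, which for $\delta=\Theta(\eps/\log n)$ is the stated $O(n\,\eps^{-d-1-\constSk}\log^2 n)$. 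Preprocessing is the total size times the $O(1/\eps^d)$-per-function cell work plus the $O(\log n)$-fold resampling, giving $O(n\,\eps^{-2\constSk}\log^{2\constSk+1}n)$; a query descends $O(\log n)$ tree edges, performs one $O(\log n)$-time point location in a window gadget, and finally evaluates a size-$O(1/\eps^{\constSk})$ sketch, for $O(\log n+1/\eps^{\constSk})$ total. I expect the main obstacle to be precisely the design and analysis of the up-branch — making it terminate with geometric progress, keeping the global size near-linear, and preventing the cascade of $(1+\delta)$ sketch errors from exceeding $1+\eps$ (which is why $\delta$ must scale with $1/\log n$ and why properties (L5)--(L6) are indispensable) — together with proving that a randomly sampled median \separation\ is a balanced critical value with constant probability.
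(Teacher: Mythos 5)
Your overall architecture tracks the paper's quite closely: canonical-cell deciders with a union--find for component labelling (essentially \lemref{near:neighbor} plus \lemref{connectivity}), a stacked/overlaid ``window gadget'' (this is \lemref{interval}), a recursion that branches on a decider and passes sketches upward, $\delta=\Theta(\eps/\log n)$ to control the cascade via properties (L5)--(L6), and a per-edge charging argument for time and space. That part is sound and matches the paper in spirit and in bounds. The gap is in the one step you yourself flag as ``the hard part,'' and it is a genuine error, not merely a missing proof.

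You propose to pick a uniformly random $\func\in\FuncSetA$, compute $\sepM{\func}{\funcA}$ for all $\funcA\in\FuncSetA$, and let $r^\ast$ be their median, claiming that with constant probability this is a ``balanced critical value'' (every component of $\subLevel{\FuncSetA}{r^\ast}$ has at most $\tfrac23\cardin{\FuncSetA}$ functions, and a constant fraction of the critical values of $\FuncSetA$ exceed $r^\ast$). This is false, and re-sampling plus certification cannot rescue it. Take $n$ ordinary Euclidean distance functions with centers at $1,2,\dots,n$ on a line: for every $i$ and $j$, $\sepM{\func_i}{\func_j}=|i-j|/2$, so for \emph{every} choice of $\func=\func_i$ the median of $\{\sepM{\func_i}{\func_j}\}_{j\ne i}$ is $\Theta(n)$. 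But at any level $r=\Theta(n)$ the sublevel sets are intervals of length $\Theta(n)$ at unit spacing, so $\subLevel{\FuncSetA}{r}$ is a single component containing all $n$ functions, and the $n-1$ critical (MST-merge) values are all $1/2$, so none exceed $r^\ast$. Both halves of your claim fail for every sample, so certification rejects every candidate and the construction never terminates on this instance. The paper avoids this by carrying a \emph{partition} $\PartitionB$ at each recursion node and sampling a random \emph{cluster} of $\PartitionB$, using that cluster's separation to the rest (not a median of per-function separations) and \defref{splitting}/\lemref{splitting} to certify a split; progress is measured by the number of clusters, which drops by a constant factor in both the down- and up-branch, giving depth $O(\log n)$.

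Two secondary issues follow from this. First, your leaf criterion (``$\beta/\alpha$ polynomially bounded in $\cardin{\FuncSetA}/\eps$'') together with the opening ``we may assume $\beta/\alpha=\poly(n/\eps)$'' is not justified: the paper deliberately makes no bounded-spread assumption (the separations can be arbitrarily spread out even inside $[0,1]^d$, e.g.\ via multiplicative weights), and handling unbounded spread via the cluster structure is precisely the point. Second, your assertion that the up-branch ``at least halves the number of surviving components'' each step is, again, exactly what the splitting-distance property is engineered to guarantee; with a median-of-separations $r^\ast$ you have no such bound. Replacing your sampling/progress step with the cluster-sampling of \lemref{splitting}, and tracking partition size rather than $\cardin{\FuncSetA}$ or $\beta/\alpha$, would close the gap.
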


One can transform the data-structure into an \AVD, and in the process
improve the query time (the space requirement slightly deteriorates).
See \secref{build} for details.

\newcommand{\bodyMainCor}{%
    Let $\FuncSet$ be a set of $n$ functions in $\Re^d$ that complies
    with our assumptions, see \secref{conditions}, and has sketch
    constant $\constSk \geq d$. Then, one can build a data-structure
    to answer \ANN for this set of functions, with the following
    properties:
    \begin{compactenum}[\qquad(A)]
        \item The improved query time is $O\pth{ \log n }$.
        \item The preprocessing time is $O\pth{ n /\eps^{O(1)}
           \log^{2\constSk+1} n}$.
        \item The space used is $S = O \pth{ n/\eps^{O(1)} \log^2 n }$.
    \end{compactenum}
    
    In particular, we can compute an \AVD of complexity $O(S)$ for the
    given functions. That is, one can compute a space decomposition,
    such that every region has a single function associated with it,
    and for any point in this region, this function is the
    $(1+\eps)$-\ANN among the functions of $\FuncSet$. Here, a region
    is either a cube, or the set difference of two cubes.  }

\begin{corollary}%
    \corlab{main}%
    \bodyMainCor
\end{corollary}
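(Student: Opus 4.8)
The plan is to refine the data-structure $\DS$ of \thmref{main} so that answering a query becomes a pure point-location; this eliminates the additive $\order{1/\eps^{\constSk}}$ term in the query time. Recall (see \secref{build}) that $\DS$ is built around a compressed quadtree $\PPTree$ with $\order{n\eps^{-\order{1}}\log^2 n}$ cells: a query $\query$ is routed through $\PPTree$ in $\order{\log n}$ time to a leaf region $R$ --- a cube, or the set-difference of two cubes --- after which it is resolved by a brute-force evaluation of a set $\FuncSetA_R \subseteq \FuncSet$ of $\order{1/\eps^{\constSk}}$ candidate functions. This last step is the only part costing more than $\order{\log n}$, so it suffices to precompute its outcome everywhere in $R$.

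First I would bound which scales matter inside a leaf. By the construction of \secref{build}, for every query routed to $R$ the value $\fmin(\query)$ lies in a window $[\ell_R,\, \ell_R\eps^{-\order{1}}]$: below $\ell_R$ the query would have descended past $R$, and above the upper end the cluster that produced $R$ has already been replaced by its sketch (Condition~\pcondref{sketch:small}), so there the best function among $\FuncSetA_R$ is settled. For each of the $\order{\log(1/\eps)}$ dyadic scales $s = 2^{j}\ell_R$ in this window, overlay on $R$ the canonical grid of \defref{grid:approx} of resolution $\Theta(\eps s)$; since canonical grids are nested this cuts $R$ into $\order{\log(1/\eps)/\eps^{d}}$ canonical pieces, each again a cube or a difference of two cubes. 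For each such piece $\cell'$, with center $\pnt_{\cell'}$, I would compute by brute force the function $\func_{\cell'} \in \FuncSetA_R$ minimizing $\func(\pnt_{\cell'})$ and store it with $\cell'$. By the bounded-growth property \pcondref{bounded:growth}, $\func_{\cell'}(\query) \leq (1+\order{\eps})\func_{\cell'}(\pnt_{\cell'}) \leq (1+\order{\eps})\fmin(\query)$ for every $\query \in \cell'$, so the stored function is a genuine $(1+\order{\eps})$-\ANN throughout $\cell'$.

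To obtain a final bound of $(1+\eps)$ rather than $(1+\order{\eps})$, I would run $\DS$ with parameter $\eps' = c\eps$ for a small enough constant $c$; since point-location in the refined tree is exact and only one replacement of $\query$ by a cell center occurs, the error does not compound. The refined compressed quadtree $\PPTree'$ has $\order{\log(1/\eps)/\eps^{d}}$ pieces per cell of $\PPTree$, hence $\order{S}$ cells overall with $S = \order{n\eps^{-\order{1}}\log^2 n}$, and each stores a single function; a query is therefore point-location in $\PPTree'$ followed by a lookup, which takes $\order{\log n}$ time (assuming, as is standard, $1/\eps = n^{\order{1}}$). The preprocessing is that of \thmref{main}, namely $\order{n\eps^{-2\constSk}\log^{2\constSk+1}n}$, plus $\order{1/\eps^{\constSk}}$ work to label each of the $\order{S}$ pieces, for a total of $\order{n\eps^{-\order{1}}\log^{2\constSk+1}n}$. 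Finally, the pieces of $\PPTree'$, each tagged with its function, form a decomposition of $[0,1]^d$ into regions --- every region a cube or the set-difference of two cubes --- on which the tagged function is a $(1+\eps)$-\ANN among $\FuncSet$; this is precisely the asserted \AVD of complexity $\order{S}$.

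The main obstacle is the first step: showing that inside each leaf $R$ the relevant distances span only a factor $\eps^{-\order{1}}$ --- so that the local refinement uses only $\order{\log(1/\eps)}$ scales and thus $\poly(1/\eps)$ pieces --- while charging only a single $(1+\order{\eps})$ rounding per query. This is exactly the error-accumulation difficulty flagged among the challenges in \secref{prelims} (challenge~(D)), and it is what dictates the choice of the sketch thresholds $y_0$ in the construction of \secref{build}.
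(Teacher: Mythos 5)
Your approach departs substantially from the paper's. The paper obtains the $\order{\log n}$ query time with a single, local change: at each leaf, instead of linearly scanning the $\order{1/\eps^{\constSk}}$ sketch functions, it computes the \emph{exact} lower envelope of that small sketch set and preprocesses it for vertical ray shooting (point location); a query then costs $\order{\log(1/\eps^{O(1)})} = \order{\log n}$. The \AVD follows by intersecting each leaf with the lower-envelope cells of its sketch, giving $\order{1/\eps^{O(1)}}$ regions per leaf. This never requires knowing the range of $\fmin$ inside a leaf -- the envelope is defined everywhere -- and so it sidesteps entirely the issue you struggle with.

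Your alternative -- refining each leaf into canonical subcells of scale $\Theta(\eps s)$ for dyadic $s$ in a window $[\ell_R,\ell_R\eps^{-O(1)}]$ and snapping each subcell to the function that minimizes at its center -- hinges on that window being polynomially bounded in $1/\eps$. You state this as an assertion and then, in your closing paragraph, concede it is ``the main obstacle'' and leave it unproven. That is a genuine gap, not a technicality. The recursion in \SearchNaive can terminate at a single cluster (the case $\cardin{\PartitionB}=1$) after an arbitrary number of coarsening steps, and at such a leaf the only guaranteed lower bound on $\sepM{\query}{\FuncSetA}$ is the sketch-validity threshold $\num N$; there is no complementary upper bound that is only $\eps^{-O(1)}$ larger. (The top-level sketch, for instance, handles all queries up to distance $\diameter{[0,1]^d}$, while $\num N$ can be arbitrarily small relative to that.) Hence your refinement could need $\Theta(\log(\text{spread})/\eps^d)$ subcells per leaf, which blows up both space and preprocessing. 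The argument you offer for the upper end -- that ``the cluster has already been replaced by its sketch, so the best function among $\FuncSetA_R$ is settled'' -- does not help: replacing a cluster by a sketch shrinks the candidate \emph{set}, but it does not bound the \emph{distance}, and the winning function in a multi-element sketch still varies with $\query$ (e.g., the $\order{1/\delta^d}$-size sketches of \lemref{fat:sk:small}). A plausible repair would be to note that for distances exceeding $y_0/\eps$ the sketch can itself be re-sketched to $\order{1}$ functions, or to refine by direction rather than by distance scale for the very far regime; but neither is in the proposal. In contrast, the paper's lower-envelope trick gives the query bound and the \AVD in a few lines. (One incidental point in your favor: your canonical-cell refinement actually yields regions that are cubes or cube differences as the corollary's statement literally promises, whereas the paper's lower-envelope regions are not axis-aligned boxes; the paper's statement is slightly loose there. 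But that does not rescue the unproved window bound.)
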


\subsection{Distance functions for which the framework applies}

\subsubsection{Multiplicative distance functions with additive
   offsets}

We are given $n$ points in $\Re^d$, where the point $\pnt_i$ has
weight $\wt_i > 0$, and an \emphi{offset} $\oset_i \geq 0$ associated
with it, for $i=1,\ldots, n$. The \emphi{multiplicative distance with
   offset} induced by the $i$\th point is $\func_i(\query) = \wt_i
\norm{\query-\pnt_i} + \oset_i$.  In \secref{mult:plus:offset} we
prove that these distance functions comply with the conditions of
\secref{conditions}, and in particular we get the following result.

\newcommand{\bodyMultOffestMain}{%
   Consider a set $\PntSet$ of $n$ points in $\Re^d$, where the $i$\th
   point $\pnt_i$ has additive weight $\oset_i \geq 0$ and
   multiplicative weight $\wt_i > 0$. The $i$\th point induces the
   additive/multiplicative distance function $\func_i(\query) = \wt_i
   \norm{\query-\pnt_i} + \oset_i$.  Then one can compute a
   $(1+\eps)$-\AVD for these distance functions, with near linear
   space complexity, and logarithmic query time. See \thmrefpage{main}
   for the exact bounds.%
}%
\begin{theorem}%
    \thmlab{mult:offset:main}%
    \bodyMultOffestMain{}
\end{theorem}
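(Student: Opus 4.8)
The plan is to verify that the additive/multiplicative distance functions $\func_i(\query)=\wt_i\norm{\query-\pnt_i}+\oset_i$ satisfy the conditions of \secref{conditions}, and then apply \thmref{main} (equivalently \corref{main}); this is carried out in detail in \secref{mult:plus:offset}. The sublevel set $\subLevel{\func_i}{y}$ equals the closed ball $\ball{\pnt_i}{(y-\oset_i)/\wt_i}$ when $y\ge\oset_i$ and is empty otherwise, so it is compact, giving \pcondref{sublevel:compact}. For \pcondref{bounded:growth} I would take growth function $\grfunc_{\pth[]{\func_i}}(y)=y/\wt_i$ and growth constant $\grconst=2$: then $\diameter{\subLevel{\func_i}{y}}=2(y-\oset_i)/\wt_i\le 2\grfunc_{\pth[]{\func_i}}(y)$, and if $\distM{\query}{\subLevel{\func_i}{y}}\le\eps\grfunc_{\pth[]{\func_i}}(y)=\eps y/\wt_i$, then taking the point $z$ of the ball closest to $\query$ and using the triangle inequality gives $\func_i(\query)=\wt_i\norm{\query-\pnt_i}+\oset_i\le\wt_i(\eps y/\wt_i)+\wt_i\norm{z-\pnt_i}+\oset_i\le\eps y+y=(1+\eps)y$. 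Both $\func_i$ and $\grfunc_{\pth[]{\func_i}}$ are evidently computable in $O(1)$ time.

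The heart of the matter is \pcondref{sketch:small}, and here I claim that a \emph{single} function is always a valid sketch. Given $\FuncSetA\subseteq\FuncSet$, let $\func^*\in\FuncSetA$ be a function of minimum multiplicative weight $\wt^*$, with associated point $\pnt^*$ and offset $\oset^*$, and put $\FuncSetB=\brc{\func^*}$; write $\rho=\CR{\FuncSetA}$ and $k=\cardin{\FuncSetA}$. By \defref{connect:level} and the remark following it, at level $\rho$ each $\subLevel{\func}{\rho}$ (a closed ball) is nonempty and $\subLevel{\FuncSetA}{\rho}$ is connected; a short chain argument then shows that a connected union of compact connected sets has diameter at most the sum of the diameters, so the diameter $\DS$ of the points of $\FuncSetA$ satisfies $\DS\le\sum_{\func_i\in\FuncSetA}2(\rho-\oset_i)/\wt_i\le 2\rho\sum_i 1/\wt_i$, and moreover $\oset^*\le\rho$ (since $\subLevel{\func^*}{\rho}\ne\emptyset$). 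Now fix $y\ge y_0:=(2k+1)\rho/\delta$ and any $\query\in\subLevel{\func_i}{y}$ for some $\func_i\in\FuncSetA$: then $\norm{\query-\pnt_i}\le y/\wt_i\le y/\wt^*$, so by the triangle inequality $\norm{\query-\pnt^*}\le y/\wt^*+\DS$, whence $\func^*(\query)=\wt^*\norm{\query-\pnt^*}+\oset^*\le y+\wt^*\DS+\oset^*\le y+2\rho\sum_i(\wt^*/\wt_i)+\rho\le y+(2k+1)\rho\le(1+\delta)y$, that is, $\query\in\subLevel{\func^*}{(1+\delta)y}$. Thus $\FuncSetB$ is a $(\delta,y_0)$-sketch with $\cardin{\FuncSetB}=1=\order{1/\delta^{\constSk}}$ and $y_0=\order{\CR{\FuncSetA}\,\cardin{\FuncSetA}/\delta}$, which fits \pcondref{sketch:small} with sketch constant $\constSk=d$ (enlarging the constant only weakens the requirement).

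It remains to dispatch the computational conditions, all routine since every relevant region is a ball. \ccondref{s:computable} is immediate; for \ccondref{grid:computable}, $\subLevel{\func_i}{y}$ is a ball, and at the resolutions $r=\Omega\pth{\eps\grfunc_{\pth[]{\func_i}}(y)}$ we ever use (\remref{grid:set:bound}) its approximating canonical cells form a box of $\order{1/\eps^d}$ cells, enumerable in linear time; for \ccondref{f:s:computable}, $\sepM{\func_i}{\func_j}$ is the smallest $l$ for which the balls $\ball{\pnt_i}{(l-\oset_i)/\wt_i}$ and $\ball{\pnt_j}{(l-\oset_j)/\wt_j}$ meet, obtained in $O(1)$ time by solving the linear equation $(l-\oset_i)/\wt_i+(l-\oset_j)/\wt_j=\norm{\pnt_i-\pnt_j}$ and taking the maximum with $\max(\oset_i,\oset_j)$; and $\CR{\FuncSetA}$ is the bottleneck edge of a minimum spanning tree of $\FuncSetA$ under $\sepM{\cdot}{\cdot}$, hence efficiently computable. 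With all conditions checked, \corref{main} yields the claimed near-linear-space, logarithmic-query-time $(1+\eps)$-\AVD. The only real obstacle is the sketch step — pinning down the dependence of $y_0$ on $\CR{\FuncSetA}$ and $\cardin{\FuncSetA}$ — which relies on the observation that from far enough the lightest ball swallows all the others, combined with the connected-union diameter bound.
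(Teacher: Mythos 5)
Your proposal is correct and follows essentially the same approach as the paper: verify compactness and bounded growth (with the ball-shaped sublevel sets $\ball{\pnt_i}{(y-\oset_i)/\wt_i}$), establish the computability conditions via the explicit \separation formula, and prove the sketch property \pcondref{sketch:small} by taking the single lightest-weight function as the sketch and bounding the diameter of the cluster through a chain argument at level $\CR{\FuncSetA}$. The only difference is cosmetic: the paper routes the final containment through \lemref{containCond2} (an explicit ball-inclusion criterion) and uses growth function $(y-\oset_i)/\wt_i$, whereas you argue the containment $\subLevel{\func_i}{y}\subseteq\subLevel{\func^*}{(1+\delta)y}$ directly by the triangle inequality and use the marginally larger growth function $y/\wt_i$ — both give $y_0 = O(\CR{\FuncSetA}\cardin{\FuncSetA}/\delta)$ and the same conclusion via \corref{main}.
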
%

\subsubsection{Scaling distance}

Somewhat imprecisely, a connected body $\obj$ centered at a point
$\cntr$ is $\fatness$-rounded fat if it is $\fatness$-fat (that
is, there is radius $r$ such that $\ballE{\cntr}{r} \subseteq \obj
\subseteq \ballE{\cntr}{\fatness r}$), and from any point $\pnt$ on
the boundary of $\obj$ the ``cone'' $\CHX{\ballE{\cntr}{r} \cup \pnt}$
is contained inside $\obj$ (i.e., every boundary point sees a large
fraction of the ``center'' of the object).  We also assume that the
boundary of each object $\obj$ has constant complexity.

For such an object, its \emphi{scaling distance} to a point $\query$
from $\obj$ is the minimum $t$, such that $\query \in t \obj$ (where the
scaling is done around its center $\cntr$).  Given $n$
$\fatness$-rounded fat objects, it is natural to ask for the
Voronoi diagram induced by their scaling distance.

These natural distance functions induced by such a set of objects
complies with the framework of \secref{conditions}, see
\secref{fat:bodies} for details. As such, we get the following result.

\newcommand{\bodyAFat}{%
   Consider a set $\ObjSet$ of $\fatness$-rounded fat objects in
   $\Re^d$, for some constant $\fatness$.  Then one
   can compute a $(1+\eps)$-\AVD for the scaling distance functions
   induced by $\ObjSet$, with near linear space complexity, and
   logarithmic query time. See \thmrefpage{main} and \corref{main} for
   the exact bounds.%
}

\begin{theorem}
    \thmlab{a:fat}%
    \bodyAFat{}
\end{theorem}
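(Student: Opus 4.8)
The plan is to instantiate the general machinery of \secref{conditions}: I verify that the scaling‑distance functions $\func_1,\dots,\func_n$ induced by the $\fatness$‑rounded fat objects $\obj_1,\dots,\obj_n$ (with centers $\cntr_i$ and base radii $r_i$, i.e.\ $\ballE{\cntr_i}{r_i}\subseteq\obj_i\subseteq\ballE{\cntr_i}{\fatness r_i}$, the same $r_i$ appearing in the cone condition) satisfy \pcondref{sublevel:compact}--\pcondref{sketch:small} and \ccondref{s:computable}--\ccondref{f:s:computable} with sketch constant $\constSk=d$, and then invoke \thmref{main} and \corref{main}. Everything rests on one elementary fact: the cone condition makes each $\obj_i$ star‑shaped with respect to $\cntr_i$ --- given $q\in\obj_i$, extend the ray from $\cntr_i$ through $q$ to its farthest intersection $\pnt\in\partial\obj_i$ with $\obj_i$; then the whole segment from $\cntr_i$ to $\pnt$ (hence $q$) lies in $\CHX{\ballE{\cntr_i}{r_i}\cup\{\pnt\}}\subseteq\obj_i$. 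Therefore $t\obj_i\subseteq y\obj_i$ whenever $0\le t\le y$, so the sublevel set is exactly $\subLevel{\func_i}{y}=y\obj_i$, a uniform scaling of a compact, constant‑complexity body. This gives \pcondref{sublevel:compact} at once and reduces the remaining conditions to concrete statements about scaled bodies.

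For \pcondref{bounded:growth} I take the growth function $\grfunc_{\func_i}(y)=r_i y$ and growth constant $\grconst=2\fatness$; since $y\obj_i\subseteq\ballE{\cntr_i}{\fatness r_i y}$ we get $\diameter{\subLevel{\func_i}{y}}\le 2\fatness r_i y=\grconst\,\grfunc_{\func_i}(y)$. The substantive half is that $\distM{\query}{y\obj_i}\le\eps r_i y$ forces $\func_i(\query)\le(1+\eps)y$. Assuming $\query\notin y\obj_i$, let $\pnt$ be its nearest point of $y\obj_i$; then $\pnt\in\partial(y\obj_i)=y\,\partial\obj_i$, so scaling the cone condition by $y$ gives $\CHX{\ballE{\cntr_i}{r_i y}\cup\{\pnt\}}\subseteq y\obj_i$, and in particular the ball of radius $\tfrac{\eps}{1+\eps}r_i y$ about the convex combination $\tfrac{1}{1+\eps}\pnt+\tfrac{\eps}{1+\eps}\cntr_i$ lies in $y\obj_i$. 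Since $\norm{\query-\pnt}\le\eps r_i y$, the point $\cntr_i+\tfrac{1}{1+\eps}(\query-\cntr_i)$ falls inside that ball, hence inside $y\obj_i$; scaling back around $\cntr_i$ by $1+\eps$ puts $\query\in(1+\eps)y\obj_i$, i.e.\ $\func_i(\query)\le(1+\eps)y$.

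The existence of a small sketch, \pcondref{sketch:small}, is the heart of the matter and the step I expect to fight with. Fix $\FuncSetA\subseteq\FuncSet$; if $\cardin{\FuncSetA}=1$ take $\FuncSetB=\FuncSetA$, so assume $\cardin{\FuncSetA}\ge2$ and put $\num=\CR{\FuncSetA}>0$. Let $r_{\max}$ be the largest base radius among objects of $\FuncSetA$, attained by $\obj^*$ with center $\cntr^*$. At level $\num$ the family $\{\num\obj\}$ is connected and each $\num\obj$ has diameter $\le 2\fatness r_{\max}\num$, so $\subLevel{\FuncSetA}{\num}$ --- and every center of an object of $\FuncSetA$ --- lies within $\rho:=2\fatness\cardin{\FuncSetA}r_{\max}\num$ of $\cntr^*$. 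Set $y_0=\rho/(\delta r_{\max})=2\fatness\cardin{\FuncSetA}\num/\delta$, so $\rho\le\delta r_{\max}y$ for $y\ge y_0$. For such $y$, call an object \emph{small} if $\fatness r\le r_{\max}$ and \emph{large} otherwise. A small object has $y\obj\subseteq\ballE{\cntr}{r_{\max}y}\subseteq\ballE{\cntr^*}{(1+\delta)r_{\max}y}\subseteq(1+\delta)y\obj^*$, so $\obj^*$ alone $(1+\delta)$‑covers all small objects. All of $\subLevel{\FuncSetA}{y}$ lies in $\ballE{\cntr^*}{(1+\delta)\fatness r_{\max}y}$; overlay a grid of cells of side $\delta r_{\max}y/(\fatness\sqrt d)$, which meets this ball in $O((\fatness^2/\delta)^d)=O(1/\delta^d)$ cells, and into $\FuncSetB$ put $\obj^*$ together with one large object per cell that is touched by a large object. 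If $\query\in y\obj$ with $\obj$ large, then $\obj$ touches $\query$'s cell, so $\FuncSetB$ contains a large $\obj'$ whose scaling $y\obj'$ also meets that cell, whence $\distM{\query}{y\obj'}\le\delta r_{\max}y/\fatness\le\delta r' y=\delta\,\grfunc_{\func'}(y)$, and \pcondref{bounded:growth} gives $\func'(\query)\le(1+\delta)y$. Because $\subLevel{\func_i}{y}=y\obj_i$ is a uniform $y$‑scaling, the subset $\FuncSetB$ produced does not depend on $y$, so it is a genuine $(\delta,y_0)$‑sketch with $\cardin{\FuncSetB}=O(1/\delta^d)$ and $y_0=O(\CR{\FuncSetA}\cardin{\FuncSetA}/\delta)$, matching \pcondref{sketch:small} with $\constSk=d$. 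The delicate points I would have to check carefully are the clean small/large dichotomy --- so that no object slips through both covers --- and making sure only a single $(1+\delta)$ factor is paid rather than a cascade of them, exactly the error‑accumulation concern flagged among the challenges.

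Finally, \ccondref{s:computable}--\ccondref{f:s:computable} are routine because each $\obj_i$ has constant‑complexity boundary: $\func_i(\query)$ is the least $t$ with $\query\in t\obj_i$, the solution of $O(1)$ bounded‑degree equations; $\sepM{\func_i}{\func_j}$ is the least $l$ with $l\obj_i\cap l\obj_j\neq\emptyset$, again a constant‑size algebraic problem; and $\slGrid{\func_i}{r}{y}=\GridCellsX{y\obj_i}{r}$ can be listed in time linear in its size by tracing $\partial(y\obj_i)$, which by \remref{grid:set:bound} is $O(1/\eps^d)$ cells in the only regime it is used; the growth function $r_i y$ is trivially $O(1)$‑computable. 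With \pcondref{sublevel:compact}--\pcondref{sketch:small} and \ccondref{s:computable}--\ccondref{f:s:computable} established and $\constSk=d\ge d$, \thmref{main} and \corref{main} yield the near‑linear‑space, logarithmic‑query $(1+\eps)$‑\AVD for the scaling distance, with the bounds of \thmrefpage{main} and \corref{main}.
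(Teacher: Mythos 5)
Your overall plan follows the paper: verify \pcondref{sublevel:compact}--\pcondref{sketch:small} and \ccondref{s:computable}--\ccondref{f:s:computable} for the scaling distance functions, then invoke \thmref{main} and \corref{main}.  The compactness, bounded-growth, and computability parts are essentially the paper's own arguments (its \lemref{fatexpand} is the same ball-in-cone estimate you give, with the growth function taken as the diameter of $\obj_i$ rather than its inner radius --- a harmless constant-factor difference).

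The gap is in the sketch construction, and it is not either of the two worries you flag.  You grid $\ballE{\cntr^*}{(1+\delta)\fatness r_{\max} y}$ with cells of side proportional to $y$, pick one large object per cell that is touched \emph{at scale $y$}, and then assert that ``because $\subLevel{\func_i}{y}=y\obj_i$ is a uniform $y$-scaling, the subset $\FuncSetB$ produced does not depend on $y$.''  That assertion is false: $y\obj_i$ scales about $\cntr_i$ while your grid scales about $\cntr^*$, so the position of $y\obj_i$ inside the grid drifts with $y$.  Concretely, $\norm{\cntr_i - \cntr^*}$ is a fixed length up to $\rho$, while the cell side is $\delta r_{\max} y/(\fatness\sqrt d)$; near $y=y_0$ an object's center sits as much as $\fatness\sqrt{d}$ cells away from where a body with the same shape centered at $\cntr^*$ would sit, and that discrepancy shrinks as $y$ grows.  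Hence the set of cells a given large object touches --- and therefore your chosen $\FuncSetB$ --- can genuinely change with $y$, while \defref{sketch:def} requires one fixed $\FuncSetB$ valid for every $y\geq y_0$.  The paper avoids this precisely by splitting the argument in two: \lemref{ucover} proves the grid-cover claim for bodies all centered at a single point, where $\bigcup_i y\obj_i = y\bigcup_i\obj_i$ and $y$-independence is immediate by uniform scaling; then \lemref{smalldisp} together with the connectivity lower bound \lemref{connlb} absorbs the constant-size center displacements into a second $(1+\delta/4)$ factor once $y\geq y_0$.  Your construction can be repaired along the same lines --- choose $\FuncSetB$ using the shapes translated to a common center, then show the true displacements are dominated by $\delta\,\grfunc_\func(y)$ for $y\geq y_0$ --- but as written this step is missing.  (For what it's worth, the small/large dichotomy and the cascading-$\delta$ concerns you name are actually fine: every object is cleanly one or the other, and the two $(1+\delta/4)$-type factors compose to a single $(1+\delta)$.)
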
%

\subsubsection{Nearest furthest-neighbor}

For a set of points $\PntSetC \subseteq \Re^d$ and a point $\query$,
the \emphi{furthest-neighbor distance} of $\query$ from $\PntSetC$, is
$\fn_{\PntSetC}(\query) = \max_{\pntC \in \PntSetC}
\dist{\query}{\pntC}$; that is, it is the furthest one might have to
travel from $\query$ to arrive to a point of $\PntSetC$.  For example,
$\PntSetC$ might be the set of locations of facilities, where it is
known that one of them is always open, and one is interested in the
worst case distance a client has to travel to reach an open facility.
The function $\fn_{\PntSetC}(\cdot)$ is known as the
\emphi{furthest-neighbor Voronoi} diagram, and while its worst case
combinatorial complexity is similar to the regular Voronoi diagram, it
can be approximated using constant size representation (in low
dimensions), see \cite{h-caspm-99}.

Given $n$ sets of points $\PntSet_1, \dots, \PntSet_n$ in $\Re^d$, we
are interested in the distance function $\fn(\query) = \min_i
\fn_{i}\pth{\query}$, where $\fn_i(\query) =
\fn_{\PntSet_i}(\query)$. This quantity arises naturally when one tries
to model uncertainty \cite{aahpy-nnsuu-13}; indeed, let $\PntSet_i$ be
the set of possible locations of the $i$\th point (i.e., the location
of the $i$\th point is chosen randomly, somehow, from the set
$\PntSet_i$). Thus, $\fn_i(\query)$ is the worst case distance to the
$i$\th point, and $\fn(\query)$ is the worst-case nearest neighbor
distance to the random point-set generated by picking the $i$\th point
from $\PntSet_i$, for $i=1,\ldots, n$. We refer to $\fn(\cdot)$ as the
\emphi{nearest furthest-neighbor} distance, and we are interested in
approximating it.

We prove in \secref{f:n:neighbor} that the distance functions $\fn_1,
\ldots, \fn_n$ comply with the conditions of the framework, and we get
the following result.

\newcommand{\bodyFNNMain}{%
   Given $n$ point sets $\PntSet_1, \dots, \PntSet_n$ in $\Re^d$ with
   a total of $m$ points, and a parameter $\eps >0$, one can
   preprocess the points into an \AVD, of size $\Otilde(n)$, for the
   nearest furthest-neighbor distance defined by these point sets.
   One can now answer $(1+\eps)$-approximate \NN queries for this
   distance in $O(\log n)$ time. (Note, that the space and query time
   used, depend only on $n$, and not on the input size.) }

\begin{theorem}
    \thmlab{fnn:main}%
    \bodyFNNMain
\end{theorem}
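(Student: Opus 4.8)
The goal is to instantiate the general machinery of \thmref{main} and \corref{main} for the nearest furthest-neighbor functions $\fn_1,\dots,\fn_n$, where $\fn_i(\query)=\max_{\pnt\in\PntSet_i}\dist{\query}{\pnt}$. So the proof really amounts to verifying that this family satisfies conditions \pcondref{sublevel:compact}--\pcondref{sketch:small} (and the computational conditions \ccondref{s:computable}--\ccondref{f:s:computable}), and then reading off the stated bounds. The one thing to be careful about is the claim that the space and query time depend only on $n$ and not on $m=\sum_i\cardin{\PntSet_i}$; this will follow because $\constSk$ ends up being an absolute constant (independent of $m$), and the only place $m$ enters is in preprocessing time. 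Let me walk through the conditions.

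For \pcondref{sublevel:compact}: the sublevel set $\subLevel{\fn_i}{y}=\bigcap_{\pnt\in\PntSet_i}\ball{\pnt}{y}$ is a finite intersection of closed balls, hence compact (and convex). For \pcondref{bounded:growth}: here I would take the growth function to be essentially $\grfunc_{\fn_i}(y)=y$ (up to the absolute constant $\grconst$). The key geometric fact is that $\subLevel{\fn_i}{y}$, being an intersection of radius-$y$ balls, has diameter at most $2y$, and more importantly is ``$y$-rounded'' in the sense that pushing out from it by $\eps y$ lands inside $\subLevel{\fn_i}{(1+\eps)y}$: if $\distM{\query}{\subLevel{\fn_i}{y}}\le \eps y$, pick $\pnt^*\in\subLevel{\fn_i}{y}$ achieving the distance, then for every $\pntC\in\PntSet_i$, $\dist{\query}{\pntC}\le\dist{\query}{\pnt^*}+\dist{\pnt^*}{\pntC}\le \eps y + y$, so $\fn_i(\query)\le(1+\eps)y$. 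This is exactly \pcondref{bounded:growth} with $\grconst$ an absolute constant. The computational conditions \ccondref{s:computable} and \ccondref{grid:computable} are routine: $\fn_i(\query)$ is a max over the (at most $m$) points of $\PntSet_i$, and gridding $\subLevel{\fn_i}{y}$ at resolution $r=\Omega(\eps y)$ produces $O(1/\eps^d)$ cells which can be enumerated by checking membership in $\bigcap_{\pnt\in\PntSet_i}\ball{\pnt}{y}$ (or, better, one can replace $\PntSet_i$ by its $O(1/\eps^d)$-size $\eps$-approximation / coreset for furthest-neighbor at preprocessing time, which is where the $m$-dependence is absorbed). Condition \ccondref{f:s:computable}, computing $\sepM{\fn_i}{\fn_j}=\min\{l:\subLevel{\fn_i}{l}\cap\subLevel{\fn_j}{l}\neq\emptyset\}$, is a low-dimensional convex feasibility/parametric-search computation over the balls defining the two sublevel sets; it is $O(1)$ once the point sets have constant-size coresets.

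The main obstacle is \pcondref{sketch:small}: given $\FuncSetA\subseteq\{\fn_1,\dots,\fn_n\}$ and $\delta>0$, produce a sketch $\FuncSetB\subseteq\FuncSetA$ of size $\order{1/\delta^{\constSk}}$ that is a $(\delta,y_0)$-sketch with $y_0=\order{\CR{\FuncSetA}(\cardin{\FuncSetA}/\delta)^{\constSk}}$. The intuition in the informal overview is exactly right: once $y$ is large compared to $\CR{\FuncSetA}$ by a polynomial factor, the union $\subLevel{\FuncSetA}{y}=\bigcup_i\subLevel{\fn_i}{y}$ is a connected blob of diameter $O(y)$ (using (L5) to bound the diameter growth once connectivity is achieved), and moreover each individual $\subLevel{\fn_i}{y}$ is itself a ``fat'' convex set of diameter $\le 2y$ sitting inside a bounded region of diameter $O(y)$. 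I would argue: (i) each $\subLevel{\fn_i}{y}$ contains a ball of radius $\Omega(\delta y)$ once $y\ge y_0$ with $y_0$ chosen appropriately in terms of $\CR{\FuncSetA}$ — actually more carefully, $\subLevel{\fn_i}{y}$ for $y$ much larger than the ``intrinsic scale'' of $\PntSet_i$ contains a ball of radius $\approx y - \mathrm{diam}(\PntSet_i)$, and since connectivity at $\CR{\FuncSetA}$ already forces every $\PntSet_i$ to have diameter $O(\cardin{\FuncSetA}\,\CR{\FuncSetA})$, we get a radius-$\Omega(y)$ ball inside each sublevel set for $y$ past a $\poly(\cardin{\FuncSetA}/\delta)\cdot\CR{\FuncSetA}$ threshold; (ii) therefore the union of $O(1)$ well-chosen sublevel sets — say, take a $(\delta y_0)$-net of a small set of representative centers, or simply keep the functions whose sublevel-set centers form an $\Omega(\delta)$-separated packing in the diameter-$O(y_0)$ region — already covers any single $\subLevel{\fn_j}{y}$ up to a $(1+\delta)$ scaling, because $\subLevel{\fn_j}{y}$ is sandwiched between concentric balls of radii differing by a $(1+\delta)$ factor and its center is within $\delta y$ of a kept center. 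A packing argument in $\Re^d$ bounds $\cardin{\FuncSetB}=O(1/\delta^d)$, giving $\constSk = d$ (consistent with the hypothesis $\constSk\ge d$ in \thmref{main}), and the threshold computation gives $y_0$ of the required polynomial form. The delicate bookkeeping is tracking how the connectivity radius $\CR{\FuncSetA}$ controls the common scale of all the $\PntSet_i$'s in $\FuncSetA$ — this is where I expect to spend the most effort, and it is the analogue for intersections-of-balls of the ``$2\mathrm{diam}/\eps$'' statement for point sets quoted just before \defref{connect:level}.

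With all conditions verified and $\constSk$ a constant (in fact $\constSk=\order{d}$, i.e. $d$ up to the additive $1$ needed in some estimates, but independent of $m$), \thmref{main} and \corref{main} apply verbatim: we get an \AVD of complexity $O(S)$ with $S=\order{n/\eps^{O(1)}\log^2 n}=\Otilde(n)$ and $O(\log n)$ query time, and the only quantity that scales with the total input size $m$ is the preprocessing time (through the initial pass that computes, for each $\PntSet_i$, a constant-size $\eps$-coreset for the furthest-neighbor distance, after which everything is expressed in terms of $n$). This establishes \thmref{fnn:main}.
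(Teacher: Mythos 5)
Your proposal is structurally the same as the paper's: verify the framework conditions for $\fn_1,\dots,\fn_n$, apply \thmref{main}/\corref{main}, and get $m$-independence of the space and query time by first replacing each $\PntSet_i$ with a constant-size coreset for the furthest-neighbor distance (the paper uses an exponential-grid coreset of size $O(\eps^{-d}\log(1/\eps))$, after which only preprocessing time depends on $m$). Your verifications of \pcondref{sublevel:compact} and \pcondref{bounded:growth} (intersection of radius-$y$ balls is compact; the triangle-inequality argument gives $\grconst=2$) match the paper exactly.

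Where you genuinely diverge is the sketch condition \pcondref{sketch:small}. You propose a packing argument among the $\subLevel{\fn_i}{y}$ giving a sketch of size $O(1/\delta^d)$, with some delicate sandwiching of each sublevel set between concentric balls. The paper instead proves something much cleaner: a \emph{singleton} sketch suffices, i.e. $\FuncSetB=\{\fn_1\}$ for an arbitrary member of $\FuncSetA$. The reason is specific to this family and is a point you did not exploit: $\subLevel{\fn_i}{y}$ is an intersection of balls that all have \emph{the same radius} $y$, so $\subLevel{\fn_i}{y}\subseteq\ball{\pntB}{y}$ for any $\pntB\in\PntSet_i$, and since connectivity at level $z=\CR{\FuncSetA}$ forces $\diameter{\bigcup_i \PntSet_i}\le 4mz$, one has $\ball{\pntB}{y}\subseteq\ball{\pntA}{(1+\delta)y}$ for every $\pntA\in\PntSet_1$ once $y\ge 4mz/\delta$; intersecting over $\pntA\in\PntSet_1$ gives $\subLevel{\fn_i}{y}\subseteq\subLevel{\fn_1}{(1+\delta)y}$. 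This sidesteps entirely the ``fatness of the sublevel set'' issues your packing sketch flirts with (a $\subLevel{\fn_i}{y}$ can be a thin lens, so sandwiching it between concentric balls of ratio $1+\delta$ is not free and actually requires $y\gtrsim \mebr_i/\delta$, which you'd then have to re-derive from $\CR{\FuncSetA}$). Also note that once you have $y_0=\Omega(\CR{\FuncSetA}\cdot m/\delta)$, all the candidate ``centers'' already fit in a region of diameter $O(\delta y_0)$, so your packing argument would in fact return $O(1)$ functions, not $O(1/\delta^d)$ — the $1/\delta^d$ is not needed here. Both routes satisfy \pcondref{sketch:small} with $\constSk=O(d)$ and hence both feed into \thmref{main}, so your proof is correct in outline; the paper's singleton-sketch observation is simply the tighter and shorter way to discharge \pcondref{sketch:small} for this particular family.
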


\section{Constructing the \AVD}
\seclab{build}

The input is a set $\FuncSet$ of $n$ functions satisfying the
conditions of \secref{conditions}, and a number $0 < \eps \leq 1$.  We
preprocess $\FuncSet$, such that given a query point $\query$ one can
compute a $\func \in \FuncSet$, where $\sepM{\query}{\FuncSet} \leq
\sepM{\query}{\func} \leq (1+\eps) \sepM{\query}{\FuncSet}$.

\subsection{Building blocks}

\subsubsection{Near neighbor}

Given a set of functions $\FuncSetA$, a real number $\llimit \geq 0$,
and a parameter $\eps > 0$, a \emphi{near-neighbor} data-structure
$\DSnn = \DSNearN{\FuncSetA}{\eps}{\llimit}$ can decide
(approximately) if a point has \separation larger or smaller than
$\llimit$. Formally, for a query point $\query$ a near-neighbor query
answers \yes if $\sepM{\query}{\FuncSetA} \leq \llimit$, and \no if
$\sepM{\query}{\FuncSetA} > (1+\eps) \llimit$. It can return either
answer if $\sepM{\query}{\FuncSetA}\in \!\! \obrc[]{\llimit,
   (1+\eps)\llimit}$. If it returns \yes, then it also returns a
function $\func \in \FuncSetA$ such that $\sepM{\query}{\func} \leq
(1+\eps)\llimit$.  The query time of this data-structure is denoted by
$\Tleq{m}$, where $m = \cardin{\FuncSetA}$.
\begin{lemma}%
    \lemlab{near:neighbor}%
    Given a set of $\nA$ functions $\FuncSetA \subseteq \FuncSet$,
    $\llimit > 0$ and $\eps > 0$. One can construct a data structure
    (which is a compressed quadtree), of size $\order{\nA /\eps^d }$,
    in $\order{\nA \eps^{-d} \log \pth{\nA/\eps}}$ time, such that
    given any query point $\query \in \Re^d$ one can answer a
    $(1+\eps)$-approximate near-neighbor query for the distance
    $\llimit$, in time $\Tleq{\nA} = \order{\log \pth{\nA /\eps}}$.
\end{lemma}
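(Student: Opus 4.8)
The plan is to turn the approximate near-neighbor test into a single point-location query in a compressed quadtree over $O(\nA/\eps^d)$ canonical grid cells whose union is squeezed between $\subLevel{\FuncSetA}{\llimit}$ and $\subLevel{\FuncSetA}{(1+\eps)\llimit}$. Concretely, for each $\func\in\FuncSetA$ with $\subLevel{\func}{\llimit}\neq\emptyset$ I set the resolution $r_\func=\eps\,\grfunc_\func(\llimit)$ and compute, using Condition \ccondref{grid:computable}, the set $\CellSet_\func=\GridCellsX{\subLevel{\func}{\llimit}}{r_\func}$ of canonical cells approximating the $\llimit$-sublevel set of $\func$ at resolution $r_\func$; this costs time linear in $\cardin{\CellSet_\func}$. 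By the observation following \defref{grid:approx} and by the bounded-growth condition \pcondref{bounded:growth} (applied with parameter $\eps$ at level $y=\llimit$) we obtain the chain
\[
   \subLevel{\func}{\llimit}\;\subseteq\;\Union{\CellSet_\func}\;\subseteq\;\subLevel{\func}{\llimit}\oplus\ball{0}{r_\func}\;\subseteq\;\subLevel{\func}{(1+\eps)\llimit}.
\]
Since \pcondref{bounded:growth} also guarantees $\grfunc_\func(\llimit)\geq\diameter{\subLevel{\func}{\llimit}}/\grconst$, the side length of the grid used is $\Omega(\eps\,\diameter{\subLevel{\func}{\llimit}})$, so by compactness (\pcondref{sublevel:compact}) and \remref{grid:set:bound} we get $\cardin{\CellSet_\func}=O(1/\eps^d)$. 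Let $\CellSet=\bigcup_{\func\in\FuncSetA}\CellSet_\func$; then $\cardin{\CellSet}=O(\nA/\eps^d)$, and I tag each cell of $\CellSet$ with one of the functions that produced it.

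Next I would build a compressed quadtree $\Tree$ containing all cells of $\CellSet$ as nodes; this has size $O(\cardin{\CellSet})=O(\nA/\eps^d)$ and is constructed in $O(\cardin{\CellSet}\log\cardin{\CellSet})=O(\nA\eps^{-d}\log(\nA/\eps))$ time. It is standard that such a compressed quadtree, after this preprocessing, supports point location (reporting the lowest node containing the query) in $O(\log\cardin{\CellSet})=O(\log(\nA/\eps))$ time; see \cite{h-gaa-11}. During preprocessing I also push down, in $O(\cardin{\CellSet})$ additional time, for every node the identity of its closest ancestor that is a tagged cell of $\CellSet$ (if any exists). A near-neighbor query for $\query$ then point-locates $\query$ to the lowest node $\node$ of $\Tree$ containing it; if $\node$ or one of its ancestors is a tagged cell — equivalently $\query\in\Union{\CellSet}$ — the query returns \yes together with the function tagging that cell, and otherwise it returns \no. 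The query time is dominated by the point location, i.e. $O(\log(\nA/\eps))$.

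For correctness, the right inclusion of the displayed chain shows that if $\query$ lies in a tagged cell originating from $\func$ then $\query\in\subLevel{\func}{(1+\eps)\llimit}$, hence $\sepM{\query}{\func}=\func(\query)\leq(1+\eps)\llimit$, so returning \yes with $\func$ is legitimate. For the other direction, the left inclusion gives $\subLevel{\func}{\llimit}\subseteq\Union{\CellSet}$ for every $\func\in\FuncSetA$, so whenever $\sepM{\query}{\FuncSetA}\leq\llimit$ the point $\query$ lies in some tagged cell and the query answers \yes as required; and if $\query$ lies in no tagged cell then $\func(\query)>\llimit$ for all $\func\in\FuncSetA$, i.e. $\sepM{\query}{\FuncSetA}>\llimit$, so answering \no is permissible. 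The main step that is more than bookkeeping is the choice of the resolution $r_\func$: it must be fine enough that $\Union{\CellSet_\func}$ stays inside $\subLevel{\func}{(1+\eps)\llimit}$, which is exactly what \pcondref{bounded:growth} buys, while being coarse enough to keep $\cardin{\CellSet_\func}=O(1/\eps^d)$, which is the diameter half of \pcondref{bounded:growth} together with \remref{grid:set:bound}. A minor secondary point is that cells of $\CellSet$ of different side lengths may be nested, which is precisely why the query walks up to the nearest tagged ancestor rather than merely inspecting the point-location leaf.
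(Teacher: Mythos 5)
Your proposal is correct and follows essentially the same route as the paper: build $\CellSet_\func = \GridCellsX{\subLevel{\func}{\llimit}}{\eps\,\grfunc_\func(\llimit)}$ for each $\func$, argue via \pcondref{bounded:growth} that $\Union{\CellSet_\func}$ is sandwiched between $\subLevel{\func}{\llimit}$ and $\subLevel{\func}{(1+\eps)\llimit}$, bound $\cardin{\CellSet_\func}=O(1/\eps^d)$ via \remref{grid:set:bound}, store all cells (tagged by originating function) in a compressed quadtree of size $O(\nA/\eps^d)$, and answer a query by point location. Your explicit remark about pushing the nearest tagged ancestor down the tree is a small but welcome clarification that the paper leaves implicit, and your correctness argument for the \no{} branch correctly uses that answering \no{} is permissible whenever $\sepM{\query}{\FuncSetA}>\llimit$; otherwise the two proofs coincide.
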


\begin{proof}
    For each $\func \in \FuncSetA$ consider the canonical grid set
    $\GridCellsX{\subLevel{\func}{\llimit}}{r_f}$, where $r_f = \eps
    \grfunc_\func(\llimit) \geq \eps
    \diameter{\subLevel{\func}{\llimit}}/\grconst$, where
    $\grfunc_\func(\cdot), \grconst$ are the growth function and the
    growth constant, see \pcondref{bounded:growth}.  The sublevel set
    of interest is $\subLevel{\FuncSetA}{\llimit}$ and its
    approximation is $\CellSet = \bigcup_{\func \in \FuncSetA}
    \GridCellsX{\subLevel{\func}{\llimit}}{r_f}$, as the bounded
    growth condition \pcondref{bounded:growth} implies that
    $\subLevel{\func}{\llimit} \subseteq
    \GridCellsX{\subLevel{\func}{\llimit}}{r_f} \subseteq
    \subLevel{\func}{(1+\eps)\llimit}$.  The set of canonical cubes
    $\CellSet$ can be stored in a compressed quadtree $\Qtree$, and
    given a query point we can decide if a point is covered by some
    cube of $\CellSet$ by performing a point location query in
    $\Qtree$.
    
    By \remref{grid:set:bound},
    $\cardin{\GridCellsX{\subLevel{\func}{\llimit}}{r_f}} =
    O\pth{\eps^{-d}}$.  As such, the total number of canonical cubes
    in $\CellSet$ is $O\pth{\nA /\eps^d}$, and the compressed
    quadtree for storing them can be computed in $O\pth{\nA \eps^{-d}
       \log( \nA /\eps)}$ time \cite{h-gaa-11}.
    
    We mark a cell of the resulting quadtree by the function whose
    sublevel set it arose from (ties can be resolved
    arbitrarily). During query, if $\query$ is found in one of the
    cells we return \yes and the function associated with the cell,
    otherwise we return \no.
    
    If we have that $\sepM{\query}{\FuncSetA} \leq \llimit$, then the
    query point $\query$ will be found in one of the marked cells,
    since they cover $\subLevel{\FuncSetA}{\llimit}$. As such, the
    query will return \yes. Moreover, if the query does return a \yes,
    then it belongs to a cube of $\CellSet$ that is completely covered
    by $\subLevel{\FuncSetA}{(1+\eps)\llimit}$, as desired.
\end{proof}

\subsubsection{Interval data structure}

Given a set of functions $\FuncSetA$, real numbers $0 \leq \llimit
\leq \ulimit$, and $\eps > 0$, the \emphi{interval data structure}
returns for a query point $\query$, one of the following:
\begin{compactenum}[\qquad(A)]
    \item If $\sepM{\query}{\FuncSetA}\in \pbrc[]{\llimit, \ulimit}$,
    then it returns a function $\funcA \in \FuncSetA$ such that
    $\sepM{\query}{\funcA} \leq (1+\eps)\sepM{\query}{\FuncSetA} $. It
    might also return such a function for values outside this
    interval.
    
    \item ``$\sepM{\query}{\FuncSetA} < \llimit$''. In this case it
    returns a function $\funcA \in \FuncSetA$ such that
    $\sepM{\query}{\funcA} < \llimit$.
    \item ``$\sepM{\query}{\FuncSetA} > \ulimit$''.
\end{compactenum}
The time to perform an interval query is denoted by
$\Tirq{m,\llimit,\ulimit}$.

\begin{lemma}%
    \lemlab{interval}%
    Given a set of $\nA$ functions $\FuncSetA$, an interval $[\llimit,
    \ulimit]$ and an approximation parameter $\epsA > 0$, one can
    construct an interval data structure of size $\order{\nA
       \epsA^{-d-1} \log(4\ulimit/\llimit) }$, in time $\order{\nA
       {\epsA^{-d-1}} \log(4\ulimit/\llimit)\log\pth{\nA/\epsA}}$,
    such that given a query point $\query$ one can answer
    $(1+\epsA)$-approximate nearest neighbor query for the distances
    in the interval $[\llimit,\ulimit]$, in time $\ds \Tirq{\nA,
       \llimit,\ulimit,f} = \order{\log
       {\frac{\nA\log(4\ulimit/\llimit)}{\epsA}} }$.
\end{lemma}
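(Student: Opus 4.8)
The plan is to build the interval data structure as a geometric sequence of near-neighbor data structures from \lemref{near:neighbor}, one for each scale between $\llimit$ and $\ulimit$, and to answer a query by binary search over these scales. Concretely, set $N = \ceil{\log_{1+\epsA}(4\ulimit/\llimit)} = \order{\epsA^{-1}\log(4\ulimit/\llimit)}$ and, for $j = 0,1,\dots,N$, let $\llimit_j = (1+\epsA)^j \llimit$; for each such $\llimit_j$ construct the near-neighbor structure $\DSnn^{(j)} = \DSNearN{\FuncSetA}{\epsA}{\llimit_j}$ of \lemref{near:neighbor}. Each has size $\order{\nA/\epsA^d}$ and is built in $\order{\nA\epsA^{-d}\log(\nA/\epsA)}$ time, so summing over the $N+1$ scales gives the claimed size $\order{\nA \epsA^{-d-1}\log(4\ulimit/\llimit)}$ and construction time $\order{\nA\epsA^{-d-1}\log(4\ulimit/\llimit)\log(\nA/\epsA)}$.

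For the query, first call $\DSnn^{(0)}$ (the decider for $\llimit$): if it answers \no, then $\sepM{\query}{\FuncSetA} > \llimit$, so we may continue; if it answers \yes it returns a function $\funcA$ with $\sepM{\query}{\funcA} \le (1+\epsA)\llimit$, and since this may happen only when $\sepM{\query}{\FuncSetA} \le (1+\epsA)\llimit$ we are still within the promised approximation for the "$< \llimit$'' branch — actually to match case (B) cleanly one uses the decider at level $\llimit/(1+\epsA)$, or simply reports case (B) whenever $\DSnn^{(0)}$ says \yes, which is consistent since then $\sepM{\query}{\funcA} \le (1+\epsA)\llimit$ and one can rescale $\epsA$ by a constant up front. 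Symmetrically, query $\DSnn^{(N)}$ (decider for roughly $\ulimit$): if it answers \no we report case (C), "$\sepM{\query}{\FuncSetA} > \ulimit$''. Otherwise the answer lies in the middle range, and the monotonicity of the predicate "$\sepM{\query}{\FuncSetA} \le \llimit_j$'' in $j$ lets us binary search for the smallest index $j$ for which $\DSnn^{(j)}$ returns \yes; that structure returns a function $\funcA$ with $\sepM{\query}{\funcA} \le (1+\epsA)\llimit_j$, while $\DSnn^{(j-1)}$ returning \no gives $\sepM{\query}{\FuncSetA} > \llimit_{j-1} = \llimit_j/(1+\epsA)$, hence $\sepM{\query}{\funcA} \le (1+\epsA)^2 \sepM{\query}{\FuncSetA}$. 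The binary search uses $\order{\log N}$ near-neighbor queries, each costing $\Tleq{\nA} = \order{\log(\nA/\epsA)}$, which would give an extra logarithmic factor; to get the stated single-logarithm bound $\order{\log(\nA\log(4\ulimit/\llimit)/\epsA)}$ one instead overlays all the marked cells into a single compressed quadtree (each cell tagged with the set of levels $j$ at which it was marked, stored so that the relevant index is recoverable in $O(1)$), so that one point-location query in a quadtree of total size $\order{\nA\epsA^{-d-1}\log(4\ulimit/\llimit)}$ simultaneously resolves all deciders; the point-location time in such a compressed quadtree is logarithmic in its size, namely $\order{\log(\nA\epsA^{-d-1}\log(4\ulimit/\llimit))} = \order{\log(\nA\log(4\ulimit/\llimit)/\epsA)}$, absorbing the $\epsA^{-d-1}$ into the logarithm.

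Finally, one rescales $\epsA$ by a constant factor at the outset (replacing $\epsA$ by, say, $\epsA/3$) so that the $(1+\epsA)^2$ loss collapses to the promised $(1+\epsA)$, which only changes the hidden constants. The main obstacle is the query-time bookkeeping: a naive binary search over the $N$ separate deciders costs two nested logarithms, and shaving it to one requires the observation that all deciders can be merged into a single point-location structure whose marked cells, across all scales, still number $\order{\nA\epsA^{-d-1}\log(4\ulimit/\llimit)}$, together with a careful argument that from the cell containing $\query$ one can read off (in $O(1)$ time) the minimal level $j$ at which it is marked — this is where the quadtree-of-marked-cells construction of \lemref{near:neighbor} must be invoked in a slightly strengthened form. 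The correctness of the three-way case split, and of the binary search, follows directly from the guarantees of \lemref{near:neighbor} and the monotonicity of sublevel sets in the radius parameter.
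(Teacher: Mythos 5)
Your proposal is correct and follows essentially the same route as the paper: a geometric sequence of near-neighbor deciders from \lemref{near:neighbor} over $\order{\epsA^{-1}\log(\ulimit/\llimit)}$ scales, a three-way case split at the endpoints, binary search in the middle, and then an overlay of all the compressed quadtrees into a single one so that one point-location query replaces the nested-logarithm binary search. The only cosmetic difference is that the paper uses increment $(1+\epsA/4)$ starting from $\llimit/2$ to absorb the approximation slack up front, whereas you use increment $(1+\epsA)$ and rescale $\epsA$ at the end; both yield the same bounds.
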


\begin{proof}
    Using \lemref{near:neighbor}, build a $(1+\epsA/4)$-near neighbor
    data-structure $\DS_i$ for $\FuncSetA$, for distance $r_i =
    (\alpha/2) (1+\epsA/4)^i$, for $i=0, \ldots,
    \nDS=\ceiling{\log_{1+\epsA/4} (4\ulimit/\llimit)} = O\pth{
       \epsA^{-1} \log (4\ulimit/\llimit)}$. Clearly, an interval
    query can be answered in three stages:
    \begin{compactenum}[\qquad(A)]
        \item Perform a point-location query in $\DS_0$. If the answer
        is \yes then $\sepM{\query}{\FuncSetA} < \llimit$. We can also
        return a function $\funcA \in \FuncSetA$ with
        $\sepM{\query}{\funcA} < \llimit$.
        
        \item Similarly, perform a point-location query in
        $\DS_{\nDS}$.  If the answer is \no then
        $\sepM{\query}{\FuncSetA} > \ulimit$ and we are done.
        
        \item It must be that $\sepM{\query}{\FuncSetA} \in
        \pbrc[]{r_i, r_{i+1}}$ for some $i$. Find this $i$ by
        performing a binary search on the data-structures $\DS_0,
        \ldots, \DS_{\nDS}$, for the first $i$ such that $\DS_i$
        returns \no, but $\DS_{i+1}$ returns \yes. Clearly,
        $\DS_{i+1}$ provides us with the desired $(1+\epsA/4)^2$-\ANN
        to the query point.
    \end{compactenum}
    
    \medskip
    
    To get the improved query time, observe that we can overlay these
    compressed quadtrees $\DS_0, \ldots, \DS_{\nDS}$ into a single
    quadtree. For every leaf (or compressed node) of this quadtree
    we compute the original node with the lowest value covering this
    node. Clearly, finding the desired distance can now be resolved by
    a single point-location query in this overlay of quadtrees. The
    total size of these quadtrees is $S = O\pth{ \nDS \pth{\nA
          /\epsA^d}}$, and the total time to compute these quadtrees
    is $T_1 = O\pth{ \nDS \pth{\nA /\epsA^d} \log \pth{ \nA /\epsA}}$,
    and the time to compute their overlay is $O \pth{S \log
       \nDS}$. The time to perform a point-location query in the
    overlayed quadtree is $O( \log S)$.
\end{proof}

\lemref{interval} readily implies that if somehow a priori we know the
nearest neighbor \separation lies in an interval of values of
polynomial spread, then we would get the desired data-structure by
just using \lemref{interval}. To overcome this unbounded spread
problem, we would first argue that, under our assumptions, there are
only linear number of intervals where interesting things happen to the
\separation function.

\subsubsection{Connected components of the sublevel sets}

Given a finite set $X$ and a partition of it into disjoint sets $X =
X_1 \cup \dots \cup X_k$, let this partition be denoted by
$\Partn{X_1,\dots,X_k}{X}$. For $1 \leq i \leq k$, each $X_i$ is a
\emphi{part} of the partition.

\begin{defn}
    For two partitions $P_A = \Partn{A_1,\dots,A_k}{X}$ and $P_B
    = \Partn{B_1, \dots, B_l}{X}$ of the same set $X$, $P_B$ is a
    \emphi{refinement} of $P_A$, denoted by $P_B \refines P_A$, if for
    any $B_i$ there exists a set $A_{j_i}$, such that $B_i \subseteq
    A_{j_i}$.  In the other direction, $P_A$ is a \emphi{coarsening}
    of $P_B$.
\end{defn}

\begin{observation}
    Given partitions $\Partition, \PartitionA$ of a finite set $X$, if
    $\Partition \refines \PartitionA$ then $\cardin{\PartitionA} \leq
    \cardin{\Partition}$.
\end{observation}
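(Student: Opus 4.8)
The plan is to exhibit a surjection from the parts of $\Partition$ onto the parts of $\PartitionA$; the mere existence of such a map immediately yields $\cardin{\PartitionA} \leq \cardin{\Partition}$.

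Write $\Partition = \Partn{B_1, \dots, B_k}{X}$ and $\PartitionA = \Partn{A_1, \dots, A_l}{X}$. First I would define a map $\phi$ from the parts of $\Partition$ to the parts of $\PartitionA$: since $\Partition \refines \PartitionA$, for each $B_i$ there is an index $j_i$ with $B_i \subseteq A_{j_i}$, and I set $\phi(B_i) = A_{j_i}$. This is well defined because the parts of a partition are nonempty while the parts $A_1, \dots, A_l$ are pairwise disjoint, so there is exactly one part of $\PartitionA$ that contains the nonempty set $B_i$. Next I would check that $\phi$ is surjective: fix a part $A_j$ of $\PartitionA$; it is nonempty, so pick $x \in A_j$. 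Since $\Partition$ covers $X$, the element $x$ lies in some part $B_i$, and then $x \in B_i \subseteq A_{j_i}$, so $x \in A_j \cap A_{j_i}$; disjointness of the parts of $\PartitionA$ forces $A_{j_i} = A_j$, i.e.\ $\phi(B_i) = A_j$. Hence every part of $\PartitionA$ is hit, $\phi$ is onto, and therefore $\cardin{\PartitionA} = l \leq k = \cardin{\Partition}$.

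There is essentially no obstacle here; the only facts that require (entirely routine) care are that the parts of a partition are nonempty and pairwise disjoint, which is exactly what makes $\phi$ simultaneously well defined and surjective. One could equivalently phrase the argument without naming $\phi$: each $A_j$ is the disjoint union of those $B_i$ with $B_i \subseteq A_j$, and this union is nonempty, so the sets $\{\, B_i : B_i \subseteq A_j \,\}$ over $j = 1, \dots, l$ form $l$ nonempty, pairwise disjoint subsets of $\{B_1,\dots,B_k\}$, whence $l \le k$.
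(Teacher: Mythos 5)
Your proof is correct. The paper states this observation without proof (as is standard for such a routine fact), and your argument via a surjection from parts of $\Partition$ onto parts of $\PartitionA$ is exactly the standard verification the authors are implicitly relying on; well-definedness and surjectivity follow, as you note, from the parts being nonempty and pairwise disjoint.
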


\begin{defn}
    Given partitions $\Partition = \Partn{X_1,\dots,X_k}{X}
    \refines \PartitionA = \Partn{X'_1,\dots, X'_{k'}}{X}$, let
    $\PF{\Partition}{\PartitionA}{i}$ be the function that return the
    set of indices of sets in $\Partition$ whose union is $X_i'
    \in \PartitionA$.
\end{defn}

\begin{observation}
    Given partitions $\Partition \refines \PartitionA$ of a set $X$
    with $n$ elements. The partition function
    $\PF{\Partition}{\PartitionA}{\cdot}$ can be computed in $O(n)$
    time. For any $1 \leq i \leq \cardin{\PartitionA}$, the set
    $\PF{\Partition}{\PartitionA}{i}$ can be returned in
    $O\pth{\big.\cardin{\PF{\Partition}{\PartitionA}{i}}}$ time, and
    its size can be returned in $O(1)$ time.
\end{observation}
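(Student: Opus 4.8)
The plan is a straightforward bucketing argument. I assume the standard representation of a partition: the membership of each $x \in X$ is encoded by two labels, $\ell(x) \in \brc{1,\dots,k}$ giving the part of $\Partition$ containing $x$, and $\ell'(x) \in \brc{1,\dots,k'}$ giving the part of $\PartitionA$ containing $x$. Since $\Partition \refines \PartitionA$, each part $X_j$ of $\Partition$ is contained in a unique part $X'_i$ of $\PartitionA$; equivalently, all elements of $X_j$ share a common value of $\ell'$, which I call the parent $p(j)$ of $j$.

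First I would build the parent map $p : \brc{1,\dots,k} \to \brc{1,\dots,k'}$ together with the desired lists, by a single pass over $X$. Keep an array indexed by $\brc{1,\dots,k}$ storing $p(j)$ (initially undefined), and an array indexed by $\brc{1,\dots,k'}$ whose $i$-th entry holds a linked list $L_i$ (initially empty, with a tail pointer) together with an integer counter (initially $0$). For each $x \in X$: if $p\pth{\ell(x)}$ is still undefined, set $p\pth{\ell(x)} \leftarrow \ell'(x)$, append $\ell(x)$ to $L_{\ell'(x)}$, and increment that list's counter. By the refinement property the value assigned to $p\pth{\ell(x)}$ is the same for every element of $X_{\ell(x)}$, so the ``first write wins'' rule loses no information; and after the pass, $L_i$ is exactly the set of indices $j$ with $X_j \subseteq X'_i$, i.e. $L_i = \PF{\Partition}{\PartitionA}{i}$, with the stored counter equal to $\cardin{L_i}$. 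Each of the $n$ elements is handled in $O(1)$ time, and initializing the two arrays costs $O(k + k') = O(n)$ since $k, k' \leq n$; hence the construction runs in $O(n)$ time.

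After this preprocessing, answering a query for $\PF{\Partition}{\PartitionA}{i}$ just traverses the list $L_i$, taking $O\pth{\big.\cardin{\PF{\Partition}{\PartitionA}{i}}}$ time, while its size is read off the stored counter in $O(1)$ time.

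There is no genuine obstacle here; the only points deserving care are spelling out the input representation (partitions as element labelings rather than, say, unordered lists of blocks) and noting that refinement is precisely what makes the ``first write wins'' rule well defined, so that every part of $\Partition$ lands in exactly one list and nothing is double-counted.
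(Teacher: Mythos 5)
Your proof is correct, and it fills in exactly the standard bucketing/linked-list implementation that the paper leaves implicit — the observation is stated without any proof in the paper, presumably because it is considered routine. Your elaboration is sound: the single pass builds the parent map and lists in $O(n)$ time, the refinement property is what makes the labeling consistent, and the stored counters and list traversal give the claimed query times.
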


\begin{defn}
    For $\FuncSetA \subseteq \FuncSet$ and $\num > 0$, consider the
    intersection graph of the sets $\subLevel{\func}{\num}$, for all
    $\func \in \FuncSetA$.  Each connected component is a
    \emphi{cluster} of $\FuncSetA$ at level $\num$. And the partition
    of $\FuncSetA$ by these clusters, denoted by
    $\CCS{\FuncSetA,\num}$, is the \emphi{$\num$-clustering} of
    $\FuncSetA$.
\end{defn}

The values $\num$ at which the $\num$-clustering of $\FuncSet$ changes
are, intuitively, the critical values when the sublevel set of
$\FuncSet$ changes and which influence the \AVD.  These values are
critical in trying to decompose the nearest neighbor search on
$\FuncSet$ into a search on smaller sets.

\begin{observation}
    If $0 \leq a \leq b$ then $\CCS{\FuncSetA,a} \refines
    \CCS{\FuncSetA,b}$.
\end{observation}

The following lemma testifies that we can approximate the
$\num$-clustering quickly, for any number $\num$.
\begin{lemma}%
    \lemlab{connectivity}%
    Given $\FuncSetA \subseteq \FuncSet$, $\num \geq 0$, and $\eps >
    0$, one can compute, in $\order{\frac{m}{\eps^d} \log (m/\eps)}$
    time, a partition $\PartAprxC = \PartAprx{\FuncSetA}{\eps}{\num}$,
    such that $\CCS{\FuncSetA, \num} \refines \PartAprxC \refines
    \CCS{\FuncSetA,(1+\eps)\num}$, where $m = \cardin{\FuncSetA}$.
\end{lemma}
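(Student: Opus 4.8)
The plan is to approximate each sublevel set $\subLevel{\func}{\num}$ by a union of canonical grid cells at a resolution fine enough that this union is sandwiched between $\subLevel{\func}{\num}$ and $\subLevel{\func}{(1+\eps)\num}$, store all of these cells in a single compressed quadtree, and then recover the clustering by one traversal that merges any two functions whose cell-unions meet.

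First, for each $\func \in \FuncSetA$ with $\subLevel{\func}{\num} \neq \emptyset$, set $r_\func = \eps\,\grfunc_\func(\num)$ and compute, using \ccondref{grid:computable}, the canonical grid set $\CellSet_\func = \GridCellsX{\subLevel{\func}{\num}}{r_\func}$ of \defref{grid:approx}; functions with $\subLevel{\func}{\num} = \emptyset$ contribute no cells and will remain singleton parts. By the observation following \defref{grid:approx} we have $\Union \CellSet_\func \subseteq \subLevel{\func}{\num} \oplus \ball{0}{r_\func}$, and by the bounded growth condition \pcondref{bounded:growth} applied with $\eps$, $\subLevel{\func}{\num} \oplus \ball{0}{\eps \grfunc_\func(\num)} \subseteq \subLevel{\func}{(1+\eps)\num}$; hence
\[
   \subLevel{\func}{\num} \ \subseteq\ \Union \CellSet_\func \ \subseteq\ \subLevel{\func}{(1+\eps)\num}.
\]
Also $r_\func = \eps \grfunc_\func(\num) = \Omega\pth{\eps\, \diameter{\subLevel{\func}{\num}}}$ by \pcondref{bounded:growth}, so \remref{grid:set:bound} gives $\cardin{\CellSet_\func} = \order{1/\eps^d}$ and $\sum_\func \cardin{\CellSet_\func} = \order{m/\eps^d}$, where $m = \cardin{\FuncSetA}$. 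Build a compressed quadtree $\Qtree$ storing all cells of $\bigcup_\func \CellSet_\func$, each node tagged with the function(s) whose grid set it belongs to; this takes $\order{(m/\eps^d)\log(m/\eps)}$ time.

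Next I extract the partition. Initialize a union--find structure on the $m$ functions. Since canonical cubes are dyadic, two of them meet iff one contains the other (or they coincide), so $\Union\CellSet_\func$ meets $\Union\CellSet_\funcA$ iff some cell tagged by $\func$ is an ancestor-or-equal, in $\Qtree$, of some cell tagged by $\funcA$. I therefore run a DFS of $\Qtree$ that carries the representative function of the nearest tagged ancestor: at a node tagged by a set $F$, union all members of $F$ together and with the inherited representative, then pass the resulting representative down. This merges exactly those pairs of functions joined by a chain of meeting cell-unions, and runs in $\order{\cardin{\Qtree}\,\alpha(m)} = \order{m/\eps^d}$ time; output $\PartAprxC = \PartAprx{\FuncSetA}{\eps}{\num}$ as the resulting equivalence classes. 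The total running time is dominated by the quadtree construction, $\order{(m/\eps^d)\log(m/\eps)}$, as required.

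For correctness of $\CCS{\FuncSetA,\num} \refines \PartAprxC \refines \CCS{\FuncSetA,(1+\eps)\num}$: if $\func,\funcA$ lie in one cluster of $\CCS{\FuncSetA,\num}$ there is a chain of functions with pairwise-intersecting $\num$-sublevel sets, and since $\Union\CellSet_{(\cdot)} \supseteq \subLevel{(\cdot)}{\num}$ the corresponding cell-unions meet pairwise, so $\func,\funcA$ get merged in $\PartAprxC$; this gives the first refinement. Conversely, if $\func,\funcA$ are merged in $\PartAprxC$ there is a chain of functions with pairwise-meeting cell-unions, and since $\Union\CellSet_{(\cdot)} \subseteq \subLevel{(\cdot)}{(1+\eps)\num}$ consecutive functions share a point of their $(1+\eps)\num$-sublevel sets, hence (being in one cluster at level $(1+\eps)\num$ is an equivalence relation) $\func$ and $\funcA$ lie in a common cluster of $\CCS{\FuncSetA,(1+\eps)\num}$; this gives the second refinement. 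The step I would be most careful about is this last extraction: getting the clustering in near-linear rather than quadratic time relies on the dyadic nesting of canonical cubes turning ``cell-unions intersect'' into an ancestor relation in $\Qtree$, plus the observation that carrying a single representative down root-to-leaf paths lets a single DFS realize the full transitive closure; the choice of the constant $\grconst$ in $r_\func$ is only a matter of the hidden constant in the $\order{1/\eps^d}$ cell bound.
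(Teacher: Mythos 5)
Your proof is correct and takes essentially the same route as the paper: set $r_\func = \eps\grfunc_\func(\num)$, take the canonical grid cells $\GridCellsX{\subLevel{\func}{\num}}{r_\func}$, store them in a compressed quadtree, and extract the partition by a DFS that unions functions whose cell-unions overlap (which, for dyadic cells, is equivalent to nesting in the quadtree). The only differences from the paper's exposition are minor elaborations on your part — explicitly noting that empty sublevel sets give singleton parts, and spelling out the carry-a-representative-down DFS — both of which are consistent with what the paper's proof sketches.
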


\begin{proof}
    For each $\func \in \FuncSetA$, tile the sublevel sets
    $\subLevel{\pth[]{\func}}{\num}$ by canonical cubes of small
    enough diameter, such that bounded growth condition
    \pcondref{bounded:growth} assures all cubes are inside
    $\subLevel{\pth[]{\func}}{(1+\eps)\num}$. To this end, for $\func
    \in \FuncSetA$, set $r_\func = \eps\grfunc_\func(\num) \geq \eps
    \diameter{\subLevel{\func}{\num}}/\grconst$, and compute the set
    $\CellSet_\func = \bigcup_{\func \in \FuncSetA} \pth{{
          \GridCellsX{\subLevel{ \pth[]{ \func}}{\num} }{r_\func} }}$,
    see \defref{grid:approx}. It is easy to verify that we have that
    \begin{equation}
        \eqlab{sketcheqn}
        \subLevel{\pth[]{\FuncSetA}}{\num}%
        \subseteq%
        \Union{\CellSet_\func}%
        \subseteq%
        \subLevel{\pth[]{\FuncSetA}}{(1+\eps)\num}.
    \end{equation}
    By assumption, we have that $\cardin{\CellSet_\func} =
    O(1/\eps^d)$, and the total number of canonical cubes in all the
    sets $\CellSet_\func$ for $\func \in \FuncSetA$ is $O(m
    /\eps^d)$. We throw all these canonical cubes into a compressed
    quadtree, this takes $O\pth{ (m/\eps^d) \log (m/\eps)}$
    time. Here, every node of the compressed quadtree is marked if it
    belongs to some of these sets, and if so, to which of the
    sets. Two sets $\Union{\CellSet_\func}$ and
    $\Union{\CellSet_\funcA}$ intersect, if and only if there are two
    canonical cubes, in these two sets, such that they overlap; that
    is, one of them is a sub-cube of the other. Initialize a
    union-find data-structure, and traverse the compressed quadtree
    using \DFS, keeping track of the current connected component, and
    performing a union operation whenever encountering a marked node
    (i.e., all the canonical nodes associated with it, are unionized
    into the current connected component). Finally, we perform a union
    operation for all the cells in $\CellSet_\func$, for all $\func
    \in \FuncSetA$. Clearly, this results in the desired connected
    components of the intersection graph of $\Union{\CellSet_\func}$
    (note, that we consider two sets as intersecting only if their
    interiors intersect). Translating each such connected set of
    canonical cubes back to the functions that gave rise to them,
    results in the desired partition.
\end{proof}

\begin{remark}
    The partition $\PartAprxC$ computed by \lemref{connectivity} is
    monotone, that is, for $\num \leq \num'$ and $\eps \leq \eps'$, we
    have $\PartAprx{\FuncSetA}{\eps}{\num} \refines
    \PartAprx{\FuncSetA}{\eps'}{\num'}$. Moreover, for each cluster
    $\Cluster \in \PartAprx{\FuncSetA}{\eps}{\num}$, we have that
    $\CR{\Cluster} \leq (1+\eps)\num$.
\end{remark}

\subsubsection{Computing a splitting distance}

\begin{defn}%
    \deflab{splitting}%
    Given a partition $\PartAprxC = \PartAprx{\FuncSetA}{\eps}{\num}$
    of $\FuncSetA$, with $m = \cardin{\PartAprxC}$ clusters, a
    distance $\numA$ is a \emphi{splitting distance} if $m/4 \leq
    \cardin{\PartAprx{\FuncSetA}{1}{\numA/4}}$ and
    $\cardin{\PartAprx{\FuncSetA}{1}{\numA}} \leq (7/8)m$.
\end{defn}

\begin{lemma}%
    \lemlab{splitting}%
    Given a partition $\PartAprxC = \PartAprx{\FuncSetA}{\eps}{\num}$
    of $\FuncSetA$, one can compute a splitting distance for it, in
    expected $O( n( \log n + t) )$ time, where $n =\cardin{\FuncSetA}$
    and $t$ is the maximum cluster size in $\PartAprxC$.
\end{lemma}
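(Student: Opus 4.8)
The plan is to carry out the ``random pivot'' idea from the overview: form a guess for the splitting distance from a single, randomly chosen cluster of $\PartAprxC$, verify it with \lemref{connectivity}, and repeat on failure. Write $\PartAprxC = \Partn{\Cluster_1,\dots,\Cluster_m}{\FuncSetA}$, and for $1 \le j \le m$ let $a_j = \sepM{\Cluster_j}{\FuncSetA \setminus \Cluster_j}$; by \ccondref{f:s:computable} this is a minimum over the $O(tn)$ pairs having one function in $\Cluster_j$ and one outside, hence computable in $O(tn)$ time, and it is exactly the least value $\num$ at which some function of $\Cluster_j$ has an intersecting $\num$-sublevel set with a function outside $\Cluster_j$. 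Note first that $a_j > \num$ for every $j$ (with $\num$ the scale at which $\PartAprxC$ was computed): otherwise the two witnessing functions would lie in a common cluster of $\CCS{\FuncSetA,\num}$, hence -- as $\CCS{\FuncSetA,\num} \refines \PartAprxC$ -- in a common cluster of $\PartAprxC$, contradicting that one of them lies outside $\Cluster_j$.

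The core of the argument is the claim that $\numA := a_j$ is a splitting distance whenever $\cardin{\brc{i : a_i \le a_j}} \ge m/4$ and $\cardin{\brc{i : a_i \ge a_j}} \ge m/4$. For the lower-bound requirement of \defref{splitting}: for each cluster $\Cluster_i$ with $a_i > \numA/2$, no function of $\Cluster_i$ reaches a function outside $\Cluster_i$ at scale $\numA/2$, so each connected component of $\CCS{\FuncSetA,\numA/2}$ that meets $\Cluster_i$ is contained in $\Cluster_i$; there is at least one, and such components for distinct clusters are disjoint, so $\cardin{\CCS{\FuncSetA,\numA/2}} \ge \cardin{\brc{i : a_i > \numA/2}} \ge \cardin{\brc{i : a_i \ge a_j}} \ge m/4$, and $\PartAprx{\FuncSetA}{1}{\numA/4} \refines \CCS{\FuncSetA,\numA/2}$ then gives $\cardin{\PartAprx{\FuncSetA}{1}{\numA/4}} \ge m/4$. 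For the upper-bound requirement: since $\numA > \num$ (and $\eps \le 1$), the monotonicity of the construction of \lemref{connectivity} gives $\PartAprxC \refines \PartAprx{\FuncSetA}{1}{\numA}$, so every part of $\PartAprx{\FuncSetA}{1}{\numA}$ is a union of clusters of $\PartAprxC$, and each cluster $\Cluster_i$ with $a_i \le \numA$ shares its part with another cluster (its two witnessing functions are co-located in $\CCS{\FuncSetA,\numA}$, hence in the coarsening $\PartAprx{\FuncSetA}{1}{\numA}$). So at least $M \ge m/4$ clusters lie in parts that are unions of two or more clusters; with at most $m-M$ singleton parts and at most $M/2$ larger parts, $\cardin{\PartAprx{\FuncSetA}{1}{\numA}} \le m - M/2 \le (7/8)m$, proving the claim.

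The procedure is then: pick a cluster $\Cluster_j$ of $\PartAprxC$ uniformly at random; compute $a_j$ in $O(tn)$ time; compute $\PartAprx{\FuncSetA}{1}{a_j}$ and $\PartAprx{\FuncSetA}{1}{a_j/4}$ via \lemref{connectivity} with approximation parameter $1$, in $O(n\log n)$ time; test the two inequalities of \defref{splitting} on the sizes of these partitions; return $a_j$ if both hold, and restart otherwise. Correctness is immediate, as the test is the definition. For a uniformly random $j$, the rank of $a_j$ in the sorted order $a_{(1)} \le \dots \le a_{(m)}$ (ties broken arbitrarily) is uniform on $\brc{1,\dots,m}$, and any rank $r \in \brc{\ceil{m/4},\dots,\floor{3m/4}+1}$ makes both hypotheses of the claim hold, since then $\cardin{\brc{i : a_i \le a_j}} \ge r \ge m/4$ and $\cardin{\brc{i : a_i \ge a_j}} \ge m-r+1 \ge m/4$; this event has probability at least $1/2$. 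Hence each round independently succeeds with probability at least $1/2$, and as each round costs $O(tn + n\log n)$ irrespective of its outcome, the expected total running time is $O\pth{n(\log n + t)}$.

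I expect the core claim to be the main obstacle, and within it the bookkeeping needed to pass between the exact clusterings $\CCS{\FuncSetA,\cdot}$ and the computable partitions $\PartAprx{\FuncSetA}{1}{\cdot}$. The factor-$4$ gap between the scales $\numA/4$ and $\numA$ in \defref{splitting} is precisely what makes this go through: the $\times 2$ slack of \lemref{connectivity} is spent when passing from $\PartAprx{\FuncSetA}{1}{\numA/4}$ to $\CCS{\FuncSetA,\numA/2}$, and the remaining factor of $2$ keeps $\numA/2$ strictly below $\numA = a_j$, so that all clusters with $a_i \ge a_j$ are still unmerged at scale $\numA/2$. A minor point is the degenerate case of tiny $m$ (no splitting distance exists for $m=1$); in the recursive use of this lemma $m$ exceeds a fixed constant, and for $m \ge 2$ the claim in fact shows that a majority of the $a_j$ are splitting distances, so the loop terminates.
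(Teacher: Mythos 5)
Your proof is correct and follows essentially the same route as the paper's: sample a random cluster, compute its \separation{} from the rest by brute force in $O(nt)$ time, verify the two requirements of \defref{splitting} by building $\PartAprx{\FuncSetA}{1}{\numA/4}$ and $\PartAprx{\FuncSetA}{1}{\numA}$ via \lemref{connectivity}, and repeat on failure; the median-rank argument gives success probability $\geq 1/2$ per round. Your write-up is somewhat more careful than the paper's about the passage between the exact clusterings $\CCS{\FuncSetA,\cdot}$ and the computable partitions $\PartAprx{\FuncSetA}{1}{\cdot}$ (and about where the factor of $4$ in \defref{splitting} is spent), but the underlying argument — including the edge-counting bound $\cardin{\PartAprx{\FuncSetA}{1}{\numA}} \leq m - M/2 \leq 7m/8$ — is the same.
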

\begin{proof}
    For each cluster $\Cluster \in \PartAprxC$, let $r_\Cluster$ be
    its \separation from all the functions in $\FuncSetA
    \setminus \Cluster$; that is $r_\Cluster = \min_{\func \in
       \Cluster} \min_{\funcA \in \FuncSetA \setminus \Cluster}
    \sepM{\func}{\funcA}$. Note that $r_\Cluster \geq \num$. Now, let
    $r_1 \leq r_2 \leq \cdots \leq r_m$ be these \separation distances
    for the $m$ clusters of $\PartAprxC$. We randomly pick a cluster
    $\Cluster \in \PartAprxC$ and compute $\num' = r_\Cluster$ for it
    by brute force -- computing the \separation of each function of
    $\Cluster$ with the functions of $\FuncSetA \setminus \Cluster$.
    
    Let $i$ be the rank of $\num' = r_\Cluster$ among $r_1,\ldots,
    r_m$. With probability $1/2$, we have that $m/4 \leq i \leq
    (3/4)m$. If so we have that:
    \begin{compactenum}[\qquad(A)]
        \item All the clusters that correspond to $r_{i}, \ldots,
        r_{m}$ are singletons in the partition
        $\PartAprx{\FuncSetA}{1}{\num/4}$, as the \separation
        of each one of these clusters is larger than $\num'$. We
        conclude that $\cardin{\PartAprx{\FuncSetA}{1}{\num'/4}} \geq
        m/4$.
        
        \item All the clusters of $\PartAprxC$ that correspond to
        $r_{1}, \ldots, r_{i}$ are contained inside a larger cluster
        of $\PartAprx{\FuncSetA}{1}{\num'}$ (i.e., they were merged
        with some other cluster). But then, the number of clusters in
        $\PartAprx{\FuncSetA}{1}{\num'}$ is at most $(7/8)m$. Indeed,
        put an edge between such a cluster, to the cluster realizing
        the smallest \separation with it. This graph has at least $e
        \geq m/4$ edges, and it is easy to see that each component of
        size at least $2$ in the underlying undirected graph has the
        same number of edges as vertices. As such the number of
        singleton components is at most $m - e$ while the number of
        components of size at least $2$ is at most $e/2$. It follows
        that the total number of components is at most $m - e/2 \leq
        7m/8$. Since each such component corresponds to a cluster in
        $\PartAprx{\FuncSetA}{1}{\num'}$ the claim is proved.
    \end{compactenum}
    
    Now, compute $\PartAprx{\FuncSetA}{1}{\num'}$ and
    $\PartAprx{\FuncSetA}{1}{\num'/4}$ using \lemref{connectivity}.
    With probability at least half they have the desired sizes, and we
    are done. Otherwise, we repeat the process. In each iteration we
    spend $O( n( \log n + t) )$ time, and the probability for success
    is half. As such, in expectation the number of rounds needed is
    constant.
\end{proof}


\subsection{The search procedure}

\subsubsection{An initial ``naive'' implementation}

\begin{figure}[t]
    \begin{center}
        \begin{algorithmEnv}
            \SearchNaive{}( $\FuncSetA$, $\PartitionB$, $\query$ )\+\\
            \texttt{// $\FuncSetA$: set of functions}\\
            \texttt{// $\PartitionB = \PartAprx{\FuncSetA}{1}{\num}$
               for some
               value $\num$}\\
            \texttt{// Invariant: $\sepM{\query}{\FuncSetA} > \num N$}\\
            \If $\cardin{\PartitionB} = 1$
            \Then\+\\
            \Return $\sepM{\query}{\FuncSetA} = \min_{\func \in
               \FuncSetA} \sepM{\query}{\func}$%
            \qquad \qquad    (*)\-\\
            $\numA \leftarrow $ compute a splitting distance of
            $\PartitionB$, see
            \lemref{splitting}\\
            \\
            \texttt{// Perform an interval approximate nearest }\\
            \texttt{// \qquad neighbor query on the interval
               $[\numA/8N, \numA 8N ]$
            }\\
            \texttt{// \qquad for the set $\FuncSetA$, see
               \lemref{interval}.%
            }\\
            \If $\sepM{\query}{\FuncSetA} \in \pbrc{\numA/8N, \numA
               8N^2 }$ or
            $(1+\tfrac{\eps}{4})$-\ANN found \Then \+\\
            \Return nearest function found by the
            \\
            \qquad\qquad
            $(1+\eps/4)$-approximate interval query.\-\\
            \If $\sepM{\query}{\FuncSetA} < \numA/8N$ \Then \+\\
            $\func \leftarrow $ $2$-approximate near neighbor query on
            $\FuncSetA$ \\
            \qquad\qquad and distance
            $\numA/8$, see \lemref{near:neighbor}.\\
            Find cluster $\Cluster
            \in \PartAprx{\FuncSetA}{1}{\numA/4}$,
            such that $\func \in \Cluster$,\\
            \qquad\qquad\qquad
            see     \lemref{connectivity}. \\
            \Return \SearchNaive{}( $\Cluster$,
            $\PartitionB[\Cluster]$, $\query$ )
            \-\\
            \If $\sepM{\query}{\FuncSetA} > \numA 8N^2 $ \Then \+\\
            \Return \SearchNaive{}( $\FuncSetA$,
            $\PartAprx{\FuncSetA}{1}{\numA N}$, $\query$ ) %
            \qquad \qquad (**)
        \end{algorithmEnv}
    \end{center}
    \vspace{-0.6cm}
    \captionof{figure}{%
       Search algorithm: We are given a query point $\query$, and an
       approximation parameter $\eps> 0$. The quantity $N$ is a
       parameter to be specified shortly. Initially, we call this
       procedure on the set of functions $\FuncSet$ with $\PartitionB$
       being the partition of $\FuncSet$ into singletons (i.e.,
       $\num=0$).  Here, $\PartitionB[\Cluster]$ denotes the partition
       of $\Cluster$ induced by the partition $\PartitionB$.  }
    \figlab{search:naive}
\end{figure}

The search procedure is presented in \figref{search:naive}.

\begin{lemma}%
    \lemlab{search:naive}%
    \SearchNaive{}$(\, \FuncSetA, \PartitionB, \query \, )$ returns a
    function $\func \in \FuncSetA$, such that $\sepM{\query}{\func}
    \leq (1+\eps)\sepM{\query}{\FuncSetA}$. The depth of the recursion
    of \SearchNaive is $\hRec = O( \log n)$, where $n =
    \cardin{\FuncSetA}$.
\end{lemma}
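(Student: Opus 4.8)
The plan is to establish the two assertions separately: correctness of the returned function by induction on the recursion depth, and the $O(\log n)$ bound by a potential argument on the number of clusters of the current partition.

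For correctness I would prove the stronger statement that any call $\SearchNaive(\FuncSetA,\PartitionB,\query)$ whose input satisfies the stated invariant --- $\PartitionB=\PartAprx{\FuncSetA}{1}{\num}$ and $\sepM{\query}{\FuncSetA}>\num N$ for some $\num$ --- returns $\func\in\FuncSetA$ with $\sepM{\query}{\func}\le(1+\eps/4)\sepM{\query}{\FuncSetA}$; applied at the top level ($\FuncSetA=\FuncSet$, $\num=0$) this gives the lemma, since $1+\eps/4\le 1+\eps$. The point that makes this work --- and where a careless bound would instead pay $(1+\eps/4)^{O(\log n)}$ --- is that the approximation never compounds: in every recursive branch the subfamily handed down still contains a function realizing the current value $\sepM{\query}{\FuncSetA}$, so the single $(1+\eps/4)$ loss is incurred only at the leaf of the recursion. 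Concretely, in the base case $\cardin{\PartitionB}=1$ we return the exact minimum; if the interval structure of \lemref{interval} (run with parameter $\eps/4$) returns a $(1+\eps/4)$-\ANN we return it; if $\sepM{\query}{\FuncSetA}>\numA 8N^2$ we recurse on the same $\FuncSetA$ with the coarser partition $\PartAprx{\FuncSetA}{1}{\numA N}$, whose invariant follows from $8N^2\ge N^2$; and if $\sepM{\query}{\FuncSetA}<\numA/8N$ we recurse on the cluster $\Cluster\in\PartAprx{\FuncSetA}{1}{\numA/4}$ containing the $2$-approximate near neighbor $\func$ found at distance $\numA/8$.

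The crux of the small-distance branch is that $\Cluster$ contains the true nearest function $\func^\ast$. Indeed $\sepM{\query}{\func^\ast}=\sepM{\query}{\FuncSetA}<\numA/8N\le\numA/4$, and since the $2$-approximate query answers \yes (as $\sepM{\query}{\FuncSetA}\le\numA/8$) we get $\sepM{\query}{\func}\le 2\cdot\numA/8=\numA/4$; hence $\query\in\subLevel{\func^\ast}{\numA/4}\cap\subLevel{\func}{\numA/4}$, so $\func^\ast$ and $\func$ are adjacent in the level-$\numA/4$ intersection graph and therefore lie in the same cluster of $\CCS{\FuncSetA,\numA/4}$, hence in the same cluster $\Cluster$ of its coarsening $\PartAprx{\FuncSetA}{1}{\numA/4}$. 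Thus $\sepM{\query}{\Cluster}=\sepM{\query}{\FuncSetA}$, and the invariant is inherited once one checks that the partition passed down has the required form: since $\num<\numA/4$, monotonicity of \lemref{connectivity} gives $\PartitionB\refines\PartAprx{\FuncSetA}{1}{\numA/4}$, so no part of $\PartitionB$ straddles two level-$\numA/4$ clusters, $\PartitionB[\Cluster]=\{P\in\PartitionB:P\subseteq\Cluster\}$, and --- because a level-$\num$ path joining two functions of $\Cluster$ cannot escape $\Cluster$ --- this equals $\PartAprx{\Cluster}{1}{\num}$. One also verifies that the three branch predicates are exhaustive and that the near-neighbor query of \lemref{near:neighbor} behaves as stated when $N\ge 1$.

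For the recursion depth I would take the potential $\Phi=\cardin{\PartitionB}$, which equals $n$ at the top and terminates the recursion at $1$. When $\sepM{\query}{\FuncSetA}>\numA 8N^2$, the new partition $\PartAprx{\FuncSetA}{1}{\numA N}$ has at most $\cardin{\PartAprx{\FuncSetA}{1}{\numA}}\le(7/8)\Phi$ clusters by the splitting-distance property and monotonicity ($\numA N\ge\numA$). When $\sepM{\query}{\FuncSetA}<\numA/8N$, the splitting-distance property gives $\cardin{\PartAprx{\FuncSetA}{1}{\numA/4}}\ge\Phi/4$; as $\PartitionB$ refines this partition, its $\Phi$ parts are spread over $\ge\Phi/4$ clusters, so the cluster $\Cluster$ we descend into collects at most $\Phi-(\Phi/4-1)\le(7/8)\Phi$ of them once $\Phi$ exceeds a fixed constant. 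For $\Phi$ below that constant one argues instead that the splitting distance produced by \lemref{splitting} is $r_{\Cluster'}$ for some cluster $\Cluster'$ of $\PartitionB$, which stays isolated at the strictly smaller level $\numA/4$, so there are at least two clusters at level $\numA/4$ and $\cardin{\PartitionB[\Cluster]}\le\Phi-1$, hence only $O(1)$ further steps. Thus $\Phi$ drops by a constant factor per level until it is constant, giving depth $O(\log n)$. The main obstacle throughout is the non-accumulation of error, i.e., arranging the induction around the invariant ``an exactly optimal function is always retained,'' and the cluster-membership argument in the small-distance branch is the concrete step where that has to be earned.
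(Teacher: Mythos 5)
Your proof is essentially the paper's own proof: induction on the recursion, with an exact base case, the interval query catching the middle range of separations, and each recursive branch shrinking the number of parts of $\PartitionB$ by a constant factor, giving $O(\log n)$ depth. You flesh out a couple of points the paper compresses into a phrase --- why the optimal $\func^\ast$ lands in the cluster $\Cluster$ being descended into (both $\subLevel{\func^\ast}{\numA/4}$ and $\subLevel{\func}{\numA/4}$ contain $\query$, so they sit in one cluster of the coarsening $\PartAprx{\FuncSetA}{1}{\numA/4}$), and the off-by-one in $\cardin{\PartitionB[\Cluster]}\le m-(m/4-1)$ when $m$ is small --- but the argument is the same, not a different route. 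One caveat you inherit from the paper: the step ``since $\num<\numA/4$, $\PartitionB\refines\PartAprx{\FuncSetA}{1}{\numA/4}$'' is not automatic, as \lemref{splitting} only guarantees $\numA>\num$, not $\numA>4\num$; in the degenerate case $\numA/4<\num$ the cluster $\Cluster$ is contained in a single part of $\PartitionB$, so $\PartitionB[\Cluster]$ is trivial and the recursion bottoms out in one further call --- harmless for both correctness (still contains $\func^\ast$) and depth, but worth noting explicitly if you want the invariant fully airtight.
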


\begin{proof}
    The proof is by induction on the size of
    $\PartitionB$. If $\cardin{\PartitionB} =1$, then the function
    realizing $\sepM{\query}{\FuncSetA}$ is returned, and the claim is
    true.
    
    Let $\numA$ be the computed splitting distance of
    $\PartitionB$. Next, the procedure perform an
    $(1+\eps/4)$-approximate interval nearest-neighbor query for
    $\query$ on the range $[\numA/8N, \numA 8N ]$. If this computed
    the approximate nearest neighbor then we are done.
    
    Otherwise, it must be that either $\sepM{\query}{\FuncSetA} <
    \numA/8N$ or $\sepM{\query}{\FuncSetA} > 8N \numA$, and
    significantly, we know which of the two options it is:
    \begin{compactenum}[\quad(A)]
        \item If $\sepM{\query}{\FuncSetA} < \numA/8N$ then doing an
        approximate near-neighbor query on $\FuncSetA$ and distance
        $\numA/8$, returns a function $\func \in \FuncSetA$ such that
        $\sepM{\query}{\func} \leq \numA/4$. Clearly, the nearest
        neighbor to $\query$ must be in the cluster containing $\func$
        in the partition $\PartAprx{\FuncSetA}{1}{\numA/4}$, and
        \SearchNaive recurses on this cluster. Now, by induction, the
        returned \ANN is correct.
        
        Since $\numA$ is a splitting distance of $\PartitionB$, see
        \defref{splitting}, we have $\cardin{\PartitionB}/4 \leq
        \cardin{\PartAprx{\FuncSetA}{1}{\numA/4}}$ and $\PartitionB
        \refines \PartAprx{\FuncSetA}{1}{\numA/4}$.  As such, since
        $\Cluster$ is one of the clusters of
        $\PartAprx{\FuncSetA}{1}{\numA/4}$, the induced partition of
        $\Cluster$ by $\PartitionB$ (i.e., $\PartitionB[\Cluster]$),
        can have at most $(1-1/4)\cardin{\PartitionB}$ clusters.
        
        \item Otherwise, we have $\sepM{\query}{\FuncSetA} > \numA
        \cdot 8N$.  Since $\numA$ is a splitting distance, we have
        that $\cardin{\PartAprx{\FuncSetA}{1}{\numA}} \leq
        (7/8)\cardin{\PartitionB}$, see \defref{splitting}.  We
        recurse on $\FuncSetA$, and a partition that has fewer
        clusters, and by induction, the returned answer is correct.
    \end{compactenum}
    \medskip%
    In each step of the recursion, the partition shrunk by at least a
    fraction of $7/8$. As such, after a logarithmic number of
    recursive calls, the procedure is done.
\end{proof}

\subsubsection{But where is the beef? %
   Modifying \SearchNaive to provide fast query time}

The reader might wonder how we are going to get an efficient search
algorithm out of \SearchNaive, as the case that $\PartitionB$ is a
single cluster, still requires us to perform a scan on all the
functions in this cluster and compute their \separation from the query
point $\query$.  Note however, we have the invariant that the distance
of interest is polynomially larger than the connectivity level of each
of the clusters of $\PartitionB$.  In particular, precomputing for all
the sets of functions such that (*) might be called on, their
$\eps/8$-sketches, and answering the query by computing the distance
on the sketches, reduces the query time to $O( 1/\eps^{\constSk} +
\log^2 n)$ (assuming that we precomputed all the data-structures used
by the query process). Indeed, an interval query takes $O( \log n)$
time, and there $O( \log n)$ such queries. The final query on the
sketch takes time proportional to the sketch size which is $O\pth{
   1/\eps^{\constSk} }$.

As such, the major challenge is not making the query process fast, but
rather building the search structure quickly, and arguing that it
requires little space.

\subsubsection{Sketching a sketch}

To improve the efficiency of the preprocessing for \SearchNaive, we are
going to use sketches more aggressively.  Specifically, for each of
the clusters of $\PartitionB$, we can compute their $\delta$-sketches,
for $\delta = \eps/(8\hRec) = O( \eps/\log n)$, see
\lemref{search:naive}. From this point on, when we manipulate this
cluster, we do it on its sketch. To make this work set
$N=n^{4\constSk}$, see \pcondrefpage{sketch:small} and
\lemref{sketch:conn:level}.

The only place in the algorithm where we need to compute the sketches,
is in (**) in \figref{search:naive}. Specifically, we compute
$\PartAprx{\FuncSetA}{1}{\numA N}$, and for each new cluster $\Cluster
\in \PartAprx{\FuncSetA}{1}{\numA N}$, we combine all the sketches of
the clusters $\ClusterA \in \PartitionB$ such that $\ClusterA
\subseteq \Cluster$ into a single set of functions. We then compute a
$\delta$-sketch for this set, and this sketch is this cluster from
this point on. In particular, the recursive calls to \SearchNaive{}
would send the sketches of the clusters, and not the clusters
themselves. Conceptually, the recursive call would also pass the
minimum distance where the sketches are active -- it is easy to verify
that we use these sketches only at distances that are far away and are
thus allowable (i.e., the sketches represent the functions they
correspond to, well in these distances).

Importantly, whenever we compute such a new set, we do so for a
distance that is bigger by a polynomial factor (i.e., $N$) than the
values used to create the sketches of the clusters being merged.
Indeed, observe that $\numA > \num$ and as such $\numA N$ is $N$ times
bigger than $\num$ (an upper bound on the value used to compute the
input sketches).

As such, all these sketches are valid, and can be used at this
distance (or any larger distance). Of course, the quality of the
sketch deteriorates. In particular, since the depth of recursion is
$\hRec$, the worst quality of any of the sketches created in this
process is at most $(1+\delta)^\hRec \leq 1+\eps/4$.

Significantly, before using such a sketch, we would shrink it by
computing a $\eps/8$-sketch of it. This would reduce the sketch size
to $O\pth{1/\eps^{\constSk}}$. Note, however, that this still does not
help us as far as recursion - we must pass the larger
$\delta$-sketches in the recursive call of (**).

This completes the description of the search procedure. It is still
unclear how to precompute all the data-structures required during the
search. To do that, we need to better understand what the search
process does.

\subsection{The connectivity tree, and the preprocessing}

Given a set of functions $\FuncSet$, create a tree tracking the
connected components of the \MST of the functions. Formally, initially
we start with $n$ singletons (which are the leafs of the tree) that
are labeled with the value zero, and we store them in a set $\Family$
of active nodes. Now, we compute for each pair of sets of functions
$X, Y \in \Family$ the \separation $\sepM{X}{Y}$, and let $X', Y'$ be
the pair realizing the minimum of this quantity.  Merge the two sets
into a new set $Z = X' \cup Y'$, create a new node for this set having
the node for $X'$ and $Y'$ as children, and set its label to be
$\sepM{X'}{Y'}$. Finally, remove $X'$ and $Y'$ from $\Family$ and
insert $Z$ into it. Repeat till there is a single element in
$\Family$. Clearly, the result is a tree that tracks the connected
components of the \MST.

To make the presentation consistent, let $\sepMA{X}{Y}$ be the minimum
$\numA$ such that $\PartAprx{X \cup Y}{1}{\numA}$ is connected.
Computing $\sepMA{X}{Y}$ can be done by computing
$\sepMA{\func}{\funcA}$ for each pair of functions separately. This in
turn, can be done by first computing $\alpha = \sepM{\func}{\funcA}$
and observing that $r$ is between $\alpha/4$ and $\alpha$. In
particular, $r$ must be a power of two, so there are only $3$
candidate values to consider, and which is the right one can be
decided using \lemref{connectivity}.

So, in the above, we use $\sepMA{\cdot}{\cdot}$ instead of
$\sepM{\cdot}{\cdot}$, and let $\HTree$ be the resulting tree.  For a
value $\num$, let $\LHST{\num}$ be the set of nodes such that their
label is smaller than $\num$, but their parent label is larger than
$\num$.  It is easy to verify that $\LHST{\num}$ corresponds to
$\PartAprxC = \PartAprx{\FuncSet}{1}{\num}$; indeed, every cluster $C
\in \PartAprxC$ corresponds to a node $u \in \LHST{\num}$, such that
the set of functions stored in the leaves of the subtree of $u$,
denoted by $\funcsX{u}$ is $\Cluster$. The following can be easily
proved by induction.

\begin{lemma}
    Consider a recursive call \SearchNaive{}$(\FuncSetA,
    \PartitionB, \query)$ made during the search algorithm
    execution. Then $\FuncSetA = \funcsX{u}$, and $\PartitionB =
    \brc{\funcsX{v} \sep{ v \in \LHST{\num} \text{ and } v \text{ is
             in the subtree of } u }}$.
    
    That is, a recursive call of \SearchNaive corresponds to a subtree
    of $\HTree$.
\end{lemma}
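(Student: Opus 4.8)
The plan is a straightforward induction on the depth of the recursive call, carried along with the auxiliary value $\num$ for which $\PartitionB = \PartAprx{\FuncSetA}{1}{\num}$ (the comment in \figref{search:naive} guarantees such a $\num$ exists for every call). For the base case, the top‑level invocation is \SearchNaive{}$(\FuncSet,\PartitionB_0,\query)$, where $\PartitionB_0$ partitions $\FuncSet$ into singletons, i.e.\ $\num=0$; taking $u=\rootnode$ gives $\funcsX{\rootnode}=\FuncSet=\FuncSetA$, and since $\LHST{0}$ is exactly the set of leaves of $\HTree$ — each in the subtree of $\rootnode$ — the set $\brc{\funcsX{v} \sep{ v\in\LHST{0}\text{ and }v\text{ is in the subtree of }\rootnode}}$ is precisely $\PartitionB_0$.

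For the inductive step I would assume the claim for a call \SearchNaive{}$(\funcsX{u},\PartitionB,\query)$ with $\PartitionB=\brc{\funcsX{v}\sep{ v\in\LHST{\num}\text{ and }v\text{ is in the subtree of }u}}$, and follow the two recursive branches of \figref{search:naive}. I would isolate two facts. \emph{(i)} As noted just before the lemma, for every level $y$ the clusters of $\PartAprx{\FuncSet}{1}{y}$ are exactly the sets $\funcsX{v}$ with $v\in\LHST{y}$; this holds because $\HTree$ is built using $\sepMA{\cdot}{\cdot}$, which is precisely the threshold at which two $\PartAprx{\cdot}{1}{\cdot}$‑clusters merge. \emph{(ii)} The clustering of \lemref{connectivity} is \emph{local}: the grid set attached to a single function $\func$ at level $y$ (with $\eps=1$) depends only on $\func$ and $y$, so $\PartAprx{\funcsX{u}}{1}{y}$ is the set of connected components of the global intersection graph restricted to the vertex set $\funcsX{u}$; and since, by \emph{(i)} and the hypothesis, $\funcsX{u}$ is a union of whole clusters of $\PartAprx{\FuncSet}{1}{y}$ for the range of $y$ that occurs, no connectivity chain between two functions of $\funcsX{u}$ leaves $\funcsX{u}$, whence $\PartAprx{\funcsX{u}}{1}{y}=\brc{\funcsX{v}\sep{ v\in\LHST{y}\text{ and }v\text{ is in the subtree of }u}}$.

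Granting \emph{(i)}--\emph{(ii)}: in branch (A), where $\sepM{\query}{\FuncSetA}<\numA/8N$, the algorithm recurses on $(\Cluster,\PartitionB[\Cluster],\query)$ with $\Cluster\in\PartAprx{\funcsX{u}}{1}{\numA/4}$; by \emph{(ii)} that partition equals $\brc{\funcsX{v}\sep{ v\in\LHST{\numA/4}\text{ and }v\text{ is in the subtree of }u}}$, so $\Cluster=\funcsX{u'}$ for a node $u'$ in the subtree of $u$, and $\PartitionB[\Cluster]$ — the parts of $\PartitionB$ lying inside $\Cluster$ — is exactly $\brc{\funcsX{v}\sep{ v\in\LHST{\num}\text{ and }v\text{ is in the subtree of }u'}}$, which is the claim for the child with the same $\num$. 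In branch (B), where $\sepM{\query}{\FuncSetA}>\numA 8N^2$, the algorithm recurses on $(\funcsX{u},\PartAprx{\funcsX{u}}{1}{\numA N},\query)$; here $u$ is unchanged and by \emph{(ii)} the new partition is $\brc{\funcsX{v}\sep{ v\in\LHST{\numA N}\text{ and }v\text{ is in the subtree of }u}}$, i.e.\ the claim with $\num$ replaced by $\numA N$. This closes the induction, and a subtree‑of‑$\HTree$ description is exactly what is asserted.

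The real work is all in fact \emph{(ii)}: verifying that running \lemref{connectivity} on the \emph{subset} $\funcsX{u}$ yields the same partition (inside the subtree of $u$) as running it on $\FuncSet$. This needs the level at which we cluster to stay below the label of the parent of $u$ in $\HTree$, so that $\funcsX{u}$ is still a union of whole global clusters. For branch (A) this amounts to checking $\numA\ge 4\num$, so that $\Cluster$ is genuinely a union of parts of $\PartitionB$. For branch (B) one must ensure the enlarged level $\numA N$ does not overshoot the merge level of $u$ — i.e.\ track the interaction between the splitting distance $\numA\le\CR{\funcsX{u}}$, the polynomial blow‑up $N=n^{4\constSk}$, and the labels of $\HTree$; I expect this bookkeeping (and, if needed, handling the terminating calls at line $(*)$, where the partition has collapsed to a single cluster, as a separate case) to be the delicate point of the argument.
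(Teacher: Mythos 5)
Your approach is exactly the one the paper has in mind: the paper supplies no argument beyond the assertion that the lemma is ``easily proved by induction,'' having just noted that $\LHST{\num}$ corresponds to $\PartAprx{\FuncSet}{1}{\num}$ (your fact~(i)); your fact~(ii) --- locality of the clustering of \lemref{connectivity} on a subset $\funcsX{u}$ that is a union of whole global clusters --- is the other ingredient the paper takes for granted.

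The ``delicate point'' you flag at the end is actually a non-issue, because the $\num$ in the lemma is existentially quantified, not tied to the level the algorithm happens to use. In branch~(B): if $\numA N$ is at least the $\HTree$-label of $u$, monotonicity of the clustering gives $\PartAprx{\funcsX{u}}{1}{\numA N}=\brc{\funcsX{u}}$, which still has the required form once one takes $\num$ anywhere strictly between the label of $u$ and that of its parent, since for such $\num$ the set $\LHST{\num}$ restricted to the subtree of $u$ is exactly $\brc{u}$. In branch~(A): whatever the relative order of $\num$ and $\numA/4$, a node $v\in\LHST{\num}$ (parent label $>\num$) and a node $u'\in\LHST{\numA/4}$ (parent label $>\numA/4$) cannot have $u'$ a proper descendant of $v$ when $\numA/4\ge\num$, so each part of $\PartitionB$ is contained in or disjoint from $\Cluster=\funcsX{u'}$ and $\PartitionB[\Cluster]$ has the required form with the same $\num$; and if instead $\numA/4<\num$, then $\Cluster$ is contained in a single part of $\PartitionB$, so $\PartitionB[\Cluster]=\brc{\Cluster}$, which again has the required form for a suitable choice of $\num'$. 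In either case no further bookkeeping on the magnitudes of $\numA$, $N$, or the merge labels is needed, and your induction closes.
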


Of course, not all possible subtrees are candidates to be such a
recursive call. In particular, \SearchNaive{} can now be interpreted
as working on a subtree $\Tree$ of \HTree, as follows:
\begin{compactenum}[\qquad(A)]
    \item If $\Tree$ is a single node $u$, then find the closet
    function to $\funcsX{u}$. Using the sketch this can be done
    quickly.
    
    \item Otherwise, computes a distance $\numA$, such that the number
    of nodes in the level $\LHSTX{\Tree}{\numA}$ is roughly half the
    number of leaves of $\Tree$.
    
    \item Using interval data-structure determine if the \separation
    $\sepM{\query}{\funcsX{\Tree}}$ is in the range $[\numA/8N, \numA
    8N^2 ]$. If so, we found the desired \ANN.
    
    \item If $\sepM{\query}{\funcsX{\Tree}} > \numA 8 N^2 $ then
    continue recursively on portion of $\Tree$ above
    $\LHSTX{\Tree}{\numA}$.
    
    \item If $\sepM{\query}{\funcsX{\Tree}} < \numA / 8 N $ then we
    know the node $u \in \LHSTX{\Tree}{\numA}$ such that the \ANN
    belongs to $\funcsX{u}$. Continue the search recursively on the
    subtree of $\Tree$ rooted at $u$.
\end{compactenum}
\medskip

That is, \SearchNaive breaks $\Tree$ into subtrees, and continues the
search recursively on one of the subtrees. Significantly, every such
subtree has constant fraction of the size of $\Tree$, and every edge
of $\Tree$ belongs to a single such subtree.

The preprocessing now works by precomputing all the data-structures
required by \SearchNaive. Of course, the most natural approach would
be to precompute \HTree, and build the search tree by simulating the
above recursion on \HTree. Fortunately, this is not necessary, we
simulate running \SearchNaive, and investigate all the different
recursive calls. We thus only use the above \HTree in analyzing the
preprocessing running time. See \figrefpage{search:naive}.

In particular, given a subtree $\Tree$ with $m$ edges, the
corresponding partition $\PartitionB$ would have at most $m$
sets. Each such set would have a $\delta$-sketch, and we compute a
$\eps/8$-sketch for each one of these sketches. Namely, the input size
here is $M = O\pth{ m /\delta^{\constSk}}$. Computing the
$\eps/8$-sketches for each one of these sketches reduces the total
number of functions to $M' = O\pth{ m/\eps^{\constSk}}$, and takes
$U_1 = O\pth{ M / \eps^{\constSk}} = O\pth{ m
   \pth{\eps\delta}^{-\constSk}}$ time, see
\secref{sketch:time}. Computing the splitting distance, using
\lemref{splitting}, takes $U_2 = O\pth{ M' \log M' +
   1/\eps^{\constSk}} = O\pth{ m \eps^{-\constSk} \log m}$ time.
Computing the interval data-structure \lemref{interval} takes $U_3 =
O\pth{M' \eps^{-d-1} \log n \log M'}$ time, and requires $S_1 =
O\pth{M' \eps^{-d-1} \log n }$ space. This breaks $\Tree$ into edge
disjoint subtrees $\Tree_1, \ldots, \Tree_t$, and we compute the
search data-structure for each one of them separately (each one of
these subtrees is smaller by a constant fraction of the original
tree). Finally, we need to compute the $\delta$-sketches for the
clusters sent to the appropriate recursive calls, and this takes $U_4
= O \pth{ M/\delta^{\constSk}}$, by \secref{sketch:time}.

Every edge of the tree $\Tree$ gets charged for the amount of work
spent in building the top level data-structure. That is, the top level
amortized work each edge of $\Tree$ has to pay is
\begin{align*}
    &O\pth{\MakeBig \pth{U_1 + U_2 + U_3 + U_4} / m }%
    \\%
    &= O \pth{ \pth{\eps\delta}^{-\constSk} 
       + \eps^{-\constSk} \log m 
       + \eps^{-d-1-\constSk} \log^2 n 
       + \delta^{-2\constSk} }%
    \\%
    &=O \pth{ \eps^{-2\constSk} \log^{2 \constSk} n},
\end{align*}
assuming $\constSk \geq 2$. Since an edge of $\Tree$ gets charged at
most $O\pth{\log n}$ times by this recursive construction, we conclude
that the total preprocessing time is $O\pth{ n \eps^{-2\constSk}
   \log^{2\constSk+1} n}$.

As for the space, we have by the same argumentation, that each edge
requires $O\pth{ \log n \cdot (S_1/m)} = \pth{ \eps^{-d-1 -\constSk}
   \log^2 n }$. As such, the overall space used by the data-structure
is $\pth{ n\eps^{-d-1 -\constSk} \log^2 n }$. As for the query time,
it boils down to $O(\log n)$ interval queries, and then scanning one
$O(\eps)$-sketch. As such, this takes $O\pth{ \log^2 n +
   1/\eps^{\constSk}}$ time.

\subsection{The result}

\RestateTheorem{main}{\bodyMain}

\begin{proof}
    The query time stated above is $O\pth{ \log^2 n +
       1/\eps^{\constSk}}$. To get the improved query time, we observe
    that \SearchNaive, performs a sequence of point-location queries
    in a sequence of interval near neighbor data-structures (i.e.,
    compressed quadtrees), and then it scans a set of functions of
    size $O\pth{ 1/\eps^{\constSk}}$ to find the \ANN. We take all
    these quadtrees spread through our data-structure, and assign
    them priority, where a quadtree $\QTree_1$ has higher priority
    than a compressed quadtree $\Qtree_2$ if $\QTree_1$ is queried
    after $\QTree_2$ for any search query. This defines an acyclic
    ordering on these compressed quadtrees. Overlaying all these
    compressed quadtrees together, one needs to return for the query
    point, the leaf of the highest priority quadtree that contains
    the query point. This can be easily done by scanning the
    compressed quadtree, and for every leaf computing the highest
    priority leaf that contains it (observe, that here we are
    overlaying only the nodes in the compressed quadtrees that are
    marked by some sublevel set -- nodes that are empty are ignored).
    
    A tedious but straightforward induction implies that doing a
    point-location query in the resulting quadtree is equivalent to
    running the search procedure as described above. Once we found the
    leaf that contains the query point, we scan the sketch associated
    with this cell, and return the computed nearest-neighbor.
\end{proof}

\RestateGeneric{\correfpage{main}}{\bodyMainCor}

\begin{proof}
    We build the data-structure of \thmref{main}, except that instead
    of linearly scanning the sketch during the query time, we
    preprocess each such sketch for an exact point-location query;
    that is, we compute the lower envelope of the sketch and
    preprocess it for vertical ray shooting \cite{ae-rsir-98}.  This
    would require $O\pth{1/\eps^{O(1)}}$ space for each such sketch,
    and the linear scanning that takes $O(1/\eps^{O(1)})$ time, now is
    replaced by a point-location query that takes $O\pth{ \log
       1/\eps^{O(1)}} = O \pth{\log 1/\eps} = O( \log n)$, as desired.
    
    As for the second part, observe that every leaf of the compressed
    quadtree is the set difference of two canonical grid cells. The
    lower envelope of the functions associated with such a leaf,
    induce a partition of this leaf into regions with total complexity
    $O\pth{1/\eps^{O(1)}}$.
\end{proof}

\section{Applications}
\seclab{applications}

We present some concrete classes of functions that satisfy our
framework, and for which we construct \AVD{}'s efficiently.

\subsection{Multiplicative distance functions with %
   additive offsets}
\seclab{mult:plus:offset}

As a warm-up we present the simpler case of additively offset
multiplicative distance functions. The results of this section are
almost subsumed by more general results in \secref{fat:bodies}. Here
the sublevel sets look like expanding balls but there is a time lag
before the balls even come into existence i.e. sublevel sets are empty
up-to a certain level, this corresponds to the additive offsets. In
\secref{fat:bodies} the sublevel sets are more general fat bodies but
there is no additive offset.  The results in the present section
essentially give an \AVD construction of approximate weighted Voronoi
diagrams.  More formally, we are given a set of points $\PntSet =
\brc{\pnt_1,\dots,\pnt_n}$. For $i=1,\ldots, n$, the point $\pnt_i$
has weight $\wt_i >  0$, and a constant $\oset_i \geq 0$ associated
with it. We define $\func_i(\query) = \wt_i \norm{\query-\pnt_i} +
\oset_i$. Let $\FuncSet = \brc{\func_1,\dots,\func_n}$. We have,
$\subLevel{\pth[]{\func_i}}{y} = \emptyset$ for $y < \oset_i$ and
$\subLevel{\pth[]{\func_i}}{y} =
\ball{\pnt_i}{\frac{y-\oset_i}{\wt_i}}$ for $y \geq \oset_i$.
Checking conditions \ccondref{s:computable} and
\ccondref{grid:computable} is trivial. As for
\ccondref{f:s:computable} we have the following easy lemma,

\begin{lemma}%
    \lemlab{intersect:c:o:n:d:2}%
    For any $1 \leq i, j \leq n$ we have
    \begin{align*}
        \sepM{\func_i}{\func_j} = \max \pth{\max(\oset_i,\oset_j),
           \norm{\pnt_i-\pnt_j}\frac{\wt_i \wt_j}{\wt_i + \wt_j} +
           \frac{\oset_i \wt_j + \oset_j \wt_i}{\wt_i + \wt_j}}.
    \end{align*}
\end{lemma}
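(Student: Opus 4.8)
The plan is to compute $\sepM{\func_i}{\func_j}$ directly from its definition as the least $l \ge 0$ for which $\subLevel{\pth[]{\func_i}}{l} \cap \subLevel{\pth[]{\func_j}}{l} \ne \emptyset$, using the explicit form of the sublevel sets recorded just above: $\subLevel{\pth[]{\func_i}}{l}$ is empty when $l < \oset_i$ and equals the closed ball $\ball{\pnt_i}{(l-\oset_i)/\wt_i}$ when $l \ge \oset_i$ (degenerating to the single point $\pnt_i$ at $l = \oset_i$). Since these sublevel sets are nested and nondecreasing in $l$, once the two sets meet they keep meeting; hence the infimum in the definition is attained and equals the first threshold value of $l$ at which an intersection appears, and it suffices to pin down that threshold.

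First I would dispose of the offsets: if $l < \max(\oset_i,\oset_j)$ then one of the two sublevel sets is empty, so $\sepM{\func_i}{\func_j} \ge \max(\oset_i,\oset_j)$, and nonemptiness imposes no further constraint. For $l \ge \max(\oset_i,\oset_j)$ both sublevel sets are (possibly degenerate) closed Euclidean balls, so I would invoke the elementary characterization that $\ball{\pnt_i}{a}$ and $\ball{\pnt_j}{b}$ intersect iff $\norm{\pnt_i - \pnt_j} \le a + b$ --- which stays correct when $a$ or $b$ equals $0$. Substituting $a = (l-\oset_i)/\wt_i$ and $b = (l-\oset_j)/\wt_j$ turns this into the linear inequality
\[ \frac{l-\oset_i}{\wt_i} + \frac{l-\oset_j}{\wt_j} \;\ge\; \norm{\pnt_i - \pnt_j}, \]
equivalently $l\pth{\tfrac{1}{\wt_i}+\tfrac{1}{\wt_j}} \ge \norm{\pnt_i-\pnt_j} + \tfrac{\oset_i}{\wt_i} + \tfrac{\oset_j}{\wt_j}$; solving for $l$ and clearing denominators (multiply through by $\tfrac{\wt_i\wt_j}{\wt_i+\wt_j}$) gives $l \ge \norm{\pnt_i-\pnt_j}\tfrac{\wt_i\wt_j}{\wt_i+\wt_j} + \tfrac{\oset_i\wt_j + \oset_j\wt_i}{\wt_i+\wt_j}$.

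To finish I would combine the two regimes: the least $l \ge 0$ meeting both ``$l \ge \max(\oset_i,\oset_j)$'' and the displayed inequality is precisely the maximum of their two right-hand sides, which is the claimed expression (and it is automatically nonnegative since the weights are positive and the offsets nonnegative). I do not expect a genuine obstacle here; the only points deserving a sentence of care are the degenerate case where a sublevel set is a single point --- handled uniformly by allowing balls of radius zero in the intersection criterion --- and the observation that monotonicity of the sublevel sets in $l$ is exactly what makes the infimum in the definition of \separation an attained minimum, so that reading off the threshold is legitimate.
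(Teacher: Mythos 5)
Your proposal is correct and follows essentially the same route as the paper: reduce to the threshold where the sublevel-set balls first meet, handle the offset regime $l < \max(\oset_i,\oset_j)$ separately, and solve the resulting linear condition. The one small difference is that you apply the uniform criterion ``two balls of radii $a,b$ meet iff $\norm{\pnt_i-\pnt_j}\le a+b$,'' which handles $\pnt_i=\pnt_j$ and degenerate balls without the paper's explicit case split, and you make explicit why the two constraints combine into a max; both are slight streamlinings of the same argument.
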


\begin{proof}
    The $i$\th distance function is $\func_i(\query) = \wt_i
    \norm{\query-\pnt_i} + \oset_i$. As such, for $y <
    \max(\oset_i,\oset_j)$ either $\subLevel{\pth[]{\func_i}}{y} =
    \emptyset$, or $\subLevel{\pth[]{\func_j}}{y} = \emptyset$ and
    $\subLevel{\pth[]{\func_i}}{y} \cap \subLevel{\pth[]{\func_j}}{y}
    = \emptyset$. For $y \geq \max(\oset_i,\oset_j)$, we have
    \begin{align*}
        \func_i(\query) \leq y%
        \quad\implies\quad%
        \wt_i \norm{\query-\pnt_i} + \oset_i \leq y
        \quad\implies\quad%
        \norm{\query-\pnt_i} \leq \frac{y - \oset_i}{\wt_i},
        %
        %
    \end{align*}
    which implies that $\query \in \ball{\pnt_i}{\frac{y -
          \oset_i}{\wt_i}}$; that is, we have
    $\subLevel{\pth[]{\func_i}}{y} = \ball{\pnt_i}{\frac{y -
          \oset_i}{\wt_i}}$ and $\subLevel{\pth[]{\func_j}}{y} =
    \ball{\pnt_j}{\frac{y - \oset_j}{\wt_j}}$.
    
    Now, if $\pnt_i = \pnt_j$ then the \separation distance between the
    two functions is the minimal value such that their sublevel sets
    are not empty, and this is $\max(\oset_i,\oset_j)$. In particular,
    the given expression
    \begin{align*}
        \alpha%
        =%
        \max \pth{\max(\oset_i,\oset_j),
           \norm{\pnt_i-\pnt_j}\frac{\wt_i \wt_j}{\wt_i + \wt_j} +
           \frac{\oset_i \wt_j + \oset_j \wt_i}{\wt_i + \wt_j}}
    \end{align*}
    evaluates to $\max(\oset_i,\oset_j)$, as desired.

    If $\pnt_i \neq \pnt_j$ the sublevel sets intersect for the first
    time when the balls $\ball{\pnt_i}{\frac{y - \oset_i}{\wt_i}}$ and
    $\ball{\pnt_j}{\frac{y - \oset_j}{\wt_j}}$ touch at a point that
    belongs to the segment $\pnt_i \pnt_j$. Clearly then we have
    \begin{align*}
        &\norm{\pnt_i - \pnt_j} = \frac{y - \oset_i}{\wt_i} + \frac{y
           - \oset_j}{\wt_j}%
        \quad\implies\quad
        \wt_i \wt_j \norm{\pnt_i - \pnt_j} = \wt_j\pth{y - \oset_i} +
        \wt_i\pth{y - \oset_j}\\%
        & \quad\implies\quad
        \pth[]{\wt_j + \wt_j} y = \wt_i \wt_j \norm{\pnt_i - \pnt_j} +
        \wt_j\oset_i + \wt_i\oset_j
        \\&%
        \quad\implies\quad
        y = \norm{\pnt_i - \pnt_j}\frac{\wt_i\wt_j}{\wt_i + \wt_j} +
        \frac{\oset_i \wt_j + \oset_j \wt_i}{\wt_i + \wt_j}
    \end{align*}
    \aftermathA
\end{proof}

\newcommand{\BodyLemmaContainCondB}{%
   Given $1 \leq i,j \leq n$ such that $\wt_i \leq \wt_j$. Suppose $y
   \geq \max(\oset_i,\oset_j) $. Then, $\subLevel{\pth[]{\func_j}}{y}
   \subseteq \subLevel{\pth[]{\func_i}}{(1+\delta)y}$ if and only if
   $y \geq \frac{\norm{\pnt_i - \pnt_j} + \oset_i / \wt_i - \oset_j /
      \wt_j} {\pth[]{1+\delta}/{\wt_i} - 1 / \wt_j}$.%
}

\begin{lemma}%
    \lemlab{containCond2}%
    \BodyLemmaContainCondB
\end{lemma}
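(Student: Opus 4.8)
The plan is to reduce the claim to an elementary fact about when one Euclidean ball is contained in another. As established in the proof of \lemref{intersect:c:o:n:d:2}, for $y \geq \oset_i$ we have $\subLevel{\pth[]{\func_i}}{y} = \ball{\pnt_i}{(y-\oset_i)/\wt_i}$, and similarly $\subLevel{\pth[]{\func_j}}{y} = \ball{\pnt_j}{(y-\oset_j)/\wt_j}$ for $y \geq \oset_j$. Under the hypothesis $y \geq \max(\oset_i,\oset_j)$, both of these are honest balls of nonnegative radius; in particular the radius $R_i = \pth[]{(1+\delta)y - \oset_i}/\wt_i$ of $\subLevel{\pth[]{\func_i}}{(1+\delta)y}$ satisfies $R_i \geq (y-\oset_i)/\wt_i \geq 0$, and the radius $r_j = (y-\oset_j)/\wt_j$ of $\subLevel{\pth[]{\func_j}}{y}$ is nonnegative.

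Next I would invoke the standard fact that for two Euclidean balls of nonnegative radii, $\ball{\pnt_j}{r_j} \subseteq \ball{\pnt_i}{R_i}$ if and only if $\norm{\pnt_i - \pnt_j} + r_j \leq R_i$ (the forward direction by taking the point of $\ball{\pnt_j}{r_j}$ lying on the ray from $\pnt_i$ through $\pnt_j$ at distance $r_j$ beyond $\pnt_j$; the reverse direction by the triangle inequality). Substituting the values of $r_j$ and $R_i$ computed above, the inclusion $\subLevel{\pth[]{\func_j}}{y} \subseteq \subLevel{\pth[]{\func_i}}{(1+\delta)y}$ is therefore equivalent to
\[
\norm{\pnt_i - \pnt_j} + \frac{y - \oset_j}{\wt_j} \;\leq\; \frac{(1+\delta)y - \oset_i}{\wt_i}.
\]

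Finally I would rearrange this inequality, collecting the $y$-terms on the right, to obtain the equivalent form
\[
\norm{\pnt_i - \pnt_j} + \frac{\oset_i}{\wt_i} - \frac{\oset_j}{\wt_j} \;\leq\; y\pth{\frac{1+\delta}{\wt_i} - \frac{1}{\wt_j}}.
\]
The one point requiring care -- and essentially the only genuine obstacle here -- is to check that the coefficient of $y$ is strictly positive, so that one may divide through and read off the stated threshold. This is exactly where the hypothesis $\wt_i \leq \wt_j$ is used: it gives $1/\wt_i \geq 1/\wt_j$, whence $\frac{1+\delta}{\wt_i} - \frac{1}{\wt_j} \geq \frac{\delta}{\wt_i} > 0$ (for $\delta > 0$; the degenerate case $\delta = 0$, $\wt_i = \wt_j$ is handled directly, the inclusion then holding iff $\norm{\pnt_i - \pnt_j} + \oset_i/\wt_i \leq \oset_j/\wt_j$, consistently with reading the stated fraction as $\pm\infty$). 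Dividing both sides by this positive quantity yields precisely
\[
y \;\geq\; \frac{\norm{\pnt_i - \pnt_j} + \oset_i/\wt_i - \oset_j/\wt_j}{(1+\delta)/\wt_i - 1/\wt_j},
\]
which is the assertion.
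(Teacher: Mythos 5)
Your proof is correct and follows essentially the same route as the paper: compute the sublevel sets as balls, reduce the inclusion to the criterion $\norm{\pnt_i-\pnt_j}+r_j\leq R_i$, and rearrange. The only (minor) difference is that the paper treats $\pnt_i=\pnt_j$ and $\pnt_i\neq\pnt_j$ separately, while you absorb both into the single general ball-containment criterion and, unlike the paper, explicitly verify that the coefficient $(1+\delta)/\wt_i - 1/\wt_j$ is positive before dividing — a small but welcome tightening.
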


\begin{proof}
    For $y \geq \max(\oset_i,\oset_j)$ we have that
    $\subLevel{\pth[]{\func_i}}{y} = \ball{\pnt_i}{\frac{y -
          \oset_i}{\wt_i}}$ and $\subLevel{\pth[]{\func_j}}{y} =
    \ball{\pnt_j}{\frac{y - \oset_j}{\wt_j}}$.  If $\pnt_i = \pnt_j$
    then for any $y$ such that $\frac{(1+\delta)y - \oset_i}{\wt_i}
    \geq \frac{y - \oset_j}{\wt_j}$ we will have that
    $\subLevel{\pth[]{\func_j}}{y} \subseteq
    \subLevel{\pth[]{\func_i}}{(1+\delta)y}$.  Clearly this condition
    is also necessary. It is easy to verify that this is equivalent to
    the desired expression.
    
    Consider the case $\pnt_i \neq \pnt_j$.  Sufficiency: Notice that
    for any $\pntA \in \ball{\pnt_j}{\frac{y - \oset_j}{\wt_j}}$ we
    have $\norm{\pntA - \pnt_i} \leq \norm{\pnt_i - \pnt_j} +
    \norm{\pnt_j - \pntA} \leq \norm{\pnt_i - \pnt_j} + \frac{y -
       \oset_j}{\wt_j}$ by the triangle inequality. Therefore, if
    $\frac{(1+\delta)y - \oset_i}{\wt_i} \geq \norm{\pnt_i - \pnt_j} +
    \frac{y - \oset_j}{\wt_j}$, then $\ball{\pnt_j}{\frac{y -
          \oset_j}{\wt_j}} \subseteq \ball{\pnt_i}{ \frac{(1+\delta)y
          - \oset_i}{\wt_i}}$. This is exactly the stated
    condition. Indeed, by rearrangement,
    \begin{align*}
        y\pth{(1+\delta)/\wt_i - 1/\wt_j} \geq \norm{\pnt_i - \pnt_j}
        + \oset_i/\wt_i - \oset_j/\wt_j.
    \end{align*}
    Necessity: Notice that $\ball{\pnt_j}{\frac{y - \oset_j}{\wt_j}}$
    has a boundary point at distance $\frac{y - \oset_j}{\wt_j}$ from
    $\pnt_j$ on the directed line from $\pnt_i$ to $\pnt_j$ on the
    other side of $\pnt_j$ as $\pnt_i$, while
    $\ball{\pnt_i}{\frac{(1+\delta)y - \oset_i}{ \wt_i}}$ has the
    intercept of $\frac{(1+\delta)y - \oset_i}{\wt_i} - \norm{\pnt_i -
       \pnt_j}$.  For the condition to hold it must be true that
    $\frac{(1+\delta)y - \oset_i}{\wt_i} - \norm{\pnt_i - \pnt_j} \geq
    \frac{y - \oset_j}{\wt_j}$, which is also the stated condition.
\end{proof}

It is easy to see that compactness \pcondref{sublevel:compact} and
bounded growth \pcondref{bounded:growth} hold for the set of functions
$\FuncSet$ (for \pcondref{bounded:growth} we can take the growth
function $\grfunc_{\pth[]{\func_i}}(y) = (y - \oset_i)/\wt_i$ for $y
\geq \oset_i$ and the growth constant $\grconst$ to be $2$).  The
following lemma proves the sketch property \pcondref{sketch:small}.
\begin{lemma}%
    \lemlab{conFVerify2}%
    For any $\FuncSetA \subseteq \FuncSet$ and $\delta > 0$ there is a
    $(\delta,y_0)$-sketch $\FuncSetB \subseteq \FuncSetA$ with
    $\cardin{\FuncSetB} = 1$ and $y_0 = 3 \CR{\FuncSetA}
    \cardin{\FuncSetA}/\delta$.
\end{lemma}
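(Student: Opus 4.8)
The plan is to take $\FuncSetB = \{\func_k\}$, where $\func_k$ is the function in $\FuncSetA$ with the \emph{smallest} multiplicative weight, $\wt_k = \min_{\func_i \in \FuncSetA} \wt_i$; intuitively this is the ball that inflates fastest, so from far enough away it swallows all the others. The required $y_0 = 3\CR{\FuncSetA}\cardin{\FuncSetA}/\delta$ will drop out of a short triangle-inequality computation.

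First I would record the geometry at the connectivity level. Write $r^* = \CR{\FuncSetA}$ and $m = \cardin{\FuncSetA}$; the case $m=1$ is trivial (then $r^*=0$, $y_0=0$, and $\FuncSetB=\FuncSetA$), so assume $m\ge 2$. By \defref{connect:level} together with the remark following it, at level $r^*$ every sublevel set $\subLevel{\func_i}{r^*} = \ball{\pnt_i}{(r^*-\oset_i)/\wt_i}$ is nonempty — in particular $\oset_i \le r^*$ for all $\func_i \in \FuncSetA$ — and their union $\subLevel{\FuncSetA}{r^*}$ is connected. Since $\wt_k$ is minimal, each such ball has radius $(r^*-\oset_i)/\wt_i \le r^*/\wt_k$, so $\subLevel{\FuncSetA}{r^*}$ is a connected union of $m$ balls each of diameter at most $2r^*/\wt_k$; chaining through intersection points along a spanning tree of the intersection graph of these balls gives $\diameter{\subLevel{\FuncSetA}{r^*}} \le 2mr^*/\wt_k$. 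In particular $\norm{\pnt_j - \pnt_k} \le 2mr^*/\wt_k$ for every $\func_j \in \FuncSetA$, since all the centers lie in $\subLevel{\FuncSetA}{r^*}$.

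Next I would verify the sketch inequality directly. Fix $y \ge y_0$ and $\func_j \in \FuncSetA$; if $\subLevel{\func_j}{y} = \emptyset$ there is nothing to prove, so pick $\pntA \in \subLevel{\func_j}{y}$, i.e.\ $\norm{\pntA - \pnt_j} \le (y-\oset_j)/\wt_j \le y/\wt_j$. By the triangle inequality and $\wt_k \le \wt_j$,
\[
\func_k(\pntA) = \wt_k\norm{\pntA - \pnt_k} + \oset_k \;\le\; \wt_k\bigl(y/\wt_j + \norm{\pnt_j - \pnt_k}\bigr) + \oset_k \;\le\; y + \wt_k\norm{\pnt_j - \pnt_k} + \oset_k .
\]
Using $\wt_k\norm{\pnt_j-\pnt_k} \le 2mr^*$ and $\oset_k \le r^*$ from the previous step, this is at most $y + (2m+1)r^* \le y + 3mr^* \le y + \delta y = (1+\delta)y$, the last inequality because $y \ge y_0 = 3mr^*/\delta$. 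Hence $\subLevel{\func_j}{y} \subseteq \subLevel{\func_k}{(1+\delta)y}$, and taking the union over $\func_j \in \FuncSetA$ shows $\FuncSetB = \{\func_k\}$ is a $(\delta,y_0)$-sketch with $\cardin{\FuncSetB}=1$, as claimed.

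I do not expect a real obstacle here. The only point that needs care is the diameter estimate at the connectivity level: arguing cleanly that a connected union of $m$ balls has diameter at most $m$ times the largest individual diameter, and making sure each $\subLevel{\func_i}{r^*}$ is genuinely nonempty (so the centers are available and $\oset_k \le r^*$), which is exactly what the remark after \defref{connect:level} guarantees. One could instead route the final inequality through \lemref{containCond2} applied to the pair $(\func_k,\func_j)$, but checking its hypothesis $y \ge \max(\oset_k,\oset_j)$ uniformly in $\delta$ is slightly fussier than the direct computation above.
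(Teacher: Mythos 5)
Your proposal is correct and takes essentially the same approach as the paper: take $\FuncSetB$ to be the singleton consisting of the minimum-weight function, use connectivity at level $\CR{\FuncSetA}$ to bound the distances $\norm{\pnt_j - \pnt_k}$ by $2m\,\CR{\FuncSetA}/\wt_k$, and then verify the containment $\subLevel{\func_j}{y} \subseteq \subLevel{\func_k}{(1+\delta)y}$. The one cosmetic difference is that you inline the final containment as a direct triangle-inequality estimate on $\func_k(\pntA)$, whereas the paper routes it through \lemref{intersect:c:o:n:d:2} (to bound consecutive center distances) and \lemref{containCond2} (to get the inclusion); your route is a bit more self-contained and sidesteps the hypothesis-checking for those auxiliary lemmas, but the substance is identical.
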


\begin{proof}
    If $\cardin{\FuncSetA} = 1$ we can let $\FuncSetB = \FuncSetA$ and
    the result is easily seen to be true. Otherwise, let $l =
    \CR{\FuncSetA}$ for brevity. Observe that $l \geq \max \limits_{i
       : \func_i \in \FuncSetA} \oset_i$, as otherwise some
    $\subLevel{\pth[]{\func_i}}{l} = \emptyset$ and cannot be part of
    a connected collection of sets.  Let $\cardin{\FuncSetA} = m \geq 2$
    and let $\FuncSetA = \brc{\func_1,\dots,\func_m}$, and assume that
    we have $\wt_1 \leq \wt_i, 1 \leq i \leq m$.  We let $\FuncSetB =
    \brc{\func_1}$, the function with the minimum associated weight.
    We are restricted to the range $l \geq \oset_i, 1 \leq i \leq m$
    so, $\subLevel{\pth[]{\func_i}}{l}$ is the ball
    $\ball{\pnt_i}{\frac{l - \oset_i}{\wt_i}}$ for each $1 \leq i \leq
    m$. Since $\subLevel{\FuncSetA}{l}$ is connected, it must be true
    that for $2 \leq j \leq m$ there exist a sequence of distinct
    indices, $1 = i_1 , i_2, \dots, i_{k-1}, i_k = j$ such that
    $\ball{\pnt_{i_r}}{\frac{l - \oset_{i_r}}{ \wt_{i_r}}} \cap
    \ball{\pnt_{i_{r+1}}}{\frac{l - \oset_{i_{r+1}}}{\wt_{i_{r+1}}}}
    \neq \emptyset$ for $1 \leq r \leq k - 1$. By
    \lemref{intersect:c:o:n:d:2} we can write that,
    \begin{align*}
        l \geq \frac{\norm{\pnt_{i_r} - \pnt_{i_{r+1}}} +
           \frac{\oset_{i_r}}{\wt_{i_r}} +
           \frac{\oset_{i_{r+1}}}{\wt_{i_{r+1}}}} {\frac{1}{\wt_{i_r}}
           + \frac{1}{\wt_{i_{r+1}}}}.
    \end{align*}
    Rearranging,
    \begin{align*}
        \norm{\pnt_{i_r} - \pnt_{i_{r+1}}} &\leq
        l\pth{\frac{1}{\wt_{i_r}} + \frac{1}{\wt_{i_{r+1}}}} -
        \pth{\frac{\oset_{i_r}}{\wt_{i_r}} +
           \frac{\oset_{i_{r+1}}}{\wt_{i_{r+1}}}} \\
        &\leq \frac{2l}{\wt_1},
    \end{align*}
    as $\wt_1 \leq \wt_i$, for $1 \leq i \leq m$.  It follows by the
    triangle inequality and the above, that $ \norm{\pnt_{i_1} -
       \pnt_{i_m}} \leq \sum_{r = 1}^{m-1} \norm{\pnt_{i_r} -
       \pnt_{i_{r+1}}} \leq \frac{2(m-1)l}{\wt_1} \leq
    \frac{2ml}{\wt_1}.  $ Thus we have,
    \begin{align}
        \eqlab{llb}%
        \norm{\pnt_1 - \pnt_j} \leq \frac{2ml}{\wt_1},
    \end{align}
    for $j = 1,\dots,m$.  Let $y_0 = \frac{3l
       \cardin{\FuncSetA}}{\delta} = \frac{3 m l}{\delta}$.  Then, for
    $y \geq y_0$ we have that,
    \begin{align*}
        y \geq \frac{3ml}{\delta} = \frac{\frac{2ml}{\wt_1} +
           \frac{ml}{\wt_1}}{\frac{\delta}{\wt_1}}\geq
        \frac{\frac{2ml}{\wt_1} +
           \frac{l}{\wt_1}}{\frac{\delta}{\wt_1}},
    \end{align*}
    for $m \geq 2$. Using \Eqref{llb} and the above, we have for $y
    \geq y_0$ since $l \geq \oset_1$,
    \begin{align*}
        y \geq \frac{\norm{\pnt_1 - \pnt_j} +
           \frac{l}{\wt_1}}{\frac{\delta}{\wt_1}} \geq
        \frac{\norm{\pnt_1 - \pnt_j} +
           \frac{\oset_1}{\wt_1}}{\frac{\delta}{\wt_1}}.
    \end{align*}
    It follows that for $y \geq y_0$,
    \begin{align*}
        y &\geq \frac{\norm{\pnt_1 - \pnt_j} +
           \frac{\oset_1}{\wt_1}}{\frac{\delta}{\wt_1}}%
        \geq \frac{\norm{\pnt_1 - \pnt_j} + \frac{\oset_1}{\wt_1} -
           \frac{\oset_j}{\wt_j}} {\frac{\delta}{\wt_1} +
           (\frac{1}{\wt_1} - \frac{1}{\wt_j})} \\ %
        &=%
        \frac{\norm{\pnt_1 - \pnt_j} + \frac{\oset_1}{\wt_1} -
           \frac{\oset_j}{\wt_j}} {\frac{(1+\delta)}{\wt_1} -
           \frac{1}{\wt_j}},
    \end{align*}
    as $\wt_1 \leq \wt_j$ for $1 \leq j \leq m$. Thus, by
    \lemref{containCond2}, $ \ball{\pnt_{j}}{\frac{y -
          \oset_j}{\wt_j}} \subseteq \ball{\pnt_1}{\frac{(1+\delta)y -
          \oset_1}{\wt_1}}, $ for $y \geq y_0$ and therefore by
    definition, $\FuncSetB$ is a $(\delta,y_0)$-sketch for
    $\FuncSetA$.
\end{proof}


We thus get the following result.
\RestateTheorem{mult:offset:main}{\bodyMultOffestMain}

\subsection{Scaling distance -- generalized %
   polytope distances}%
\seclab{fat:bodies}%

Let $\obj \subseteq \Re^d$ be a compact set homeomorphic to
$\ball{0}{1}$ and containing a ``center'' point $\cntr$ in its
interior. Then $\obj$ is \emphi{star shaped} if for any point $\pntB
\in \obj$ the entire segment $\cntr\pntB$ is also in $\obj$.
Naturally, any convex body $\obj$ with any center $\cntr \in \obj$ is
star shaped.  The \emphi{$t$-scaling} of $\obj$ with a center $\cntr$
is the set $t \obj = \brc{t \pth[]{\pntB - \cntr} + \cntr \sep{ \pntB
      \in \obj}}$.

Given a star shaped object $\obj$ with a center $\cntr$, the
\emphi{scaling distance} of a point $\query$ from $\obj$ is the
minimum $t$, such that $\pnt \in t \obj$, and let
$\objfunc{\obj}(\query)$ denote this distance function.  Note that,
for any $y \geq 0$, the sublevel set
$\subLevel{\pth[]{\objfunc{\obj}}}{y}$ is the $y$-scaling of $\obj$,
that is $\subLevel{\pth[]{\objfunc{\obj}}}{y} = y\obj$.

Note, that for a point $\pnt \in \Re^d$, if we take $\obj =
\ball{\pnt}{1}$ with center $\pnt$, then $\objfunc{\obj}\pth{\query} =
\dist{\pnt}{\query}$. That is, this distance notion is a strict
extension of the Euclidean distance.

Henceforward, for this section, we assume that an object $\obj$
contains the origin in its interior and the origin is the designated
center, unless otherwise stated.

\begin{defn}
    Let $\obj \subseteq \Re^d$ be a star shaped object centered at
    $\cntr$.  We say that $\obj$ is $\fatness$-fat if there is a 
    number $\iradius$ such that,
    $\ballE{\cntr}{\iradius}%
     \subseteq%
     \obj%
     \subseteq%
     \ballE{\cntr}{\fatness \iradius}$.
\end{defn}

\begin{figure}[t]%
    \centerline{%
       \includegraphics[width=0.4\linewidth,clip=]{\si{afat}}%
    }%
    \caption{Being $\fatness$-rounded fat.}%
    \figlab{afat}%
\end{figure}%

\begin{defn}
    \deflab{fat:rounded}
    Let $\obj$ be a star shaped object centered at $\cntr$. 
    We say that $\obj$ is $\fatness$-rounded fat if there is
    a radius $\iradius$ such that,
    \begin{inparaenum}[(i)] 
      \item 
             $\ballE{\cntr}{\iradius} 
              \subseteq
              \obj 
              \subseteq
              \ballE{\cntr}{\fatness \iradius}$ and,
      \item For every point $\pnt$ in the boundary
            of $\obj$, the cone 
            $\CHX{\ballE{\cntr}{\iradius} \cup \pnt}$, lies
            within $\obj$,
    \end{inparaenum}
    see \figref{afat}.
\end{defn}

\begin{figure*}[t]
    \centerline{
       \includegraphics[width=0.4\linewidth,clip=]{\si{fconvex}}
    }
    \caption{A $\fatness$-fat convex body is $\fatness$-rounded fat.}
    \figlab{fconvex}
\end{figure*}

By definition any $\alpha$-rounded fat object is 
also $\alpha$-fat. However, it is not true that
a $\alpha$-fat object is necessarily $\alpha'$-rounded
fat for any $\alpha'$, that is even allowed to depend
on $\alpha$. The following useful result is easy
to see, also see \figref{fconvex} for an illustration.
\begin{lemma}
    \lemlab{fat:conv:cfat} %
    Let $\obj$ be a $\fatness$-fat object. If $\obj$ is convex then
    $\obj$ is also $\fatness$-rounded fat.
\end{lemma}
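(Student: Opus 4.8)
The plan is to note that for a convex body condition~(ii) of \defref{fat:rounded} comes essentially for free, so that the same witness $\iradius$ used for $\fatness$-fatness already certifies $\fatness$-roundedness with the same center.

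First I would fix a center $\cntr$ and a radius $\iradius$ with $\ballE{\cntr}{\iradius} \subseteq \obj \subseteq \ballE{\cntr}{\fatness\iradius}$, as guaranteed by $\obj$ being $\fatness$-fat. Since $\obj$ is convex and $\cntr$ is an interior point, $\obj$ is star shaped with center $\cntr$, and this very sandwich is exactly part~(i) of \defref{fat:rounded}. For part~(ii), I would take an arbitrary boundary point $\pnt$ of $\obj$; as $\obj$ is compact, hence closed, we have $\pnt \in \obj$, and also $\ballE{\cntr}{\iradius} \subseteq \obj$, so $\ballE{\cntr}{\iradius} \cup \pnt \subseteq \obj$. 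Passing to convex hulls and using that a convex set equals its own convex hull, $\CHX{\ballE{\cntr}{\iradius} \cup \pnt} \subseteq \CHX{\obj} = \obj$, which is precisely the cone containment demanded by~(ii). See \figref{fconvex} for the picture.

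There is no quantitative loss and no genuinely hard step here; the only thing that needs (mild) care is the bookkeeping with the definitions — namely that the center and inner radius of the $\fatness$-fat sandwich can be reused verbatim, and that a boundary point of the closed set $\obj$ still lies in $\obj$ so that it is a legitimate apex of the cone.
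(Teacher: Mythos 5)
Your proof is correct, and it supplies precisely the argument the paper leaves implicit (the paper only asserts the lemma is ``easy to see'' and points to \figref{fconvex}). The key observation — that the cone $\CHX{\ballE{\cntr}{\iradius} \cup \pnt}$ is a convex hull of a subset of the convex set $\obj$, hence contained in $\obj$ — is exactly the right one, and the sandwich containment from $\fatness$-fatness is reused verbatim for part~(i).

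One small redundancy worth noting: the paper's definition of $\fatness$-fat already requires $\obj$ to be star shaped about a designated center $\cntr$, so the sentence arguing that convexity plus interiority of $\cntr$ implies star-shapedness is not strictly needed — you can take $\cntr$ and the star-shape structure directly from the hypothesis. This does no harm, but trimming it makes clear that the only new content in your argument is the cone containment via convexity.
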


Given a set $\ObjSet =
\brc{\obj_1,\obj_2,\dots,\obj_n}$ of $n$ star shaped objects,
consider the set $\FuncSet$ of $n$ scaling distance functions, where
the $i$\th function, for $i=1,\ldots, n$ is $\func_i =
\objfunc{\obj_i}$.  We assume that the boundary of each object
$\obj_i$ has constant complexity.

We next argue that $\FuncSet$ complies with the framework of
\secref{conditions}. Using standard techniques, we can compute the
quantities required in conditions
\ccondref{s:computable}--\ccondrefpage{f:s:computable} including the
diameter of the sublevel set $\diameter{y\obj_i} = y
\diameter{\obj_i}$.  Also, trivially we have that condition
\pcondrefpage{sublevel:compact} is satisfied as the sublevel sets are
dilations of the $\obj_i$ and are thus compact by definition.  The
next few lemmas establish that both bounded growth
\pcondref{bounded:growth} and the sketch property
\pcondref{sketch:small} are also true, if the objects 
are also $\fatness$-rounded fat for some constant
$\fatness$.
\begin{lemma}%
    \lemlab{fatexpand}%
    Given $\fatness > 0$, suppose $\obj$ is
    a star shaped object that is $\fatness$-rounded fat.  Then for
    any $\constC \geq 2\fatness$ and any $y
    \geq 0, \eps > 0$ we have that $y\obj \Mplus
    \ball{0}{(\nfrac{\eps}{\constC}) \diameter{y\obj}} \subseteq
    (1+\eps)y\obj$; that is, $\subLevel{\pth[]{\objfunc{\obj}}}{y}
    \Mplus \ball{0}{(\nfrac{\eps}{\constC})\diameter{\subLevel{
             \pth[]{\objfunc{\obj}}}{y}}} \subseteq
    \subLevel{\pth[]{\objfunc{\obj}}}{(1+\eps)y}$.
\end{lemma}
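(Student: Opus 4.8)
The plan is to reduce to the normalized case $y=1$ and then apply the rounded-fatness condition directly. First note that the two displayed inclusions in the statement are literally the same, since $\subLevel{\pth[]{\objfunc{\obj}}}{y} = y\obj$ was observed just above. For $y=0$ the claim is trivial, as both sides equal $\brc{0}$. For $y>0$, observe that $\diameter{y\obj}=y\diameter{\obj}$, that $y\obj \Mplus \ball{0}{yt}=y\pth{\obj \Mplus \ball{0}{t}}$, and that $(1+\eps)y\obj = y\pth{(1+\eps)\obj}$; dividing through by $y$ then shows that it suffices to prove $\obj \Mplus \ball{0}{(\eps/\constC)\diameter{\obj}} \subseteq (1+\eps)\obj$. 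Let $\iradius$ be the inradius from the definition of $\fatness$-rounded fatness, so $\ball{0}{\iradius}\subseteq \obj \subseteq \ball{0}{\fatness\iradius}$ and hence $\diameter{\obj}\le 2\fatness\iradius$. As $\constC \ge 2\fatness$, we get $(\eps/\constC)\diameter{\obj}\le \eps\iradius$, so it is enough to prove the cleaner statement $\obj \Mplus \ball{0}{\eps\iradius}\subseteq(1+\eps)\obj$; equivalently, that $(\pnt+w)/(1+\eps)\in\obj$ whenever $\pnt\in\obj$ and $\norm{w}\le \eps\iradius$.

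Next I would fix such a $\pnt$ and $w$. If $\pnt=0$, then $\norm{(\pnt+w)/(1+\eps)}\le \eps\iradius/(1+\eps)\le\iradius$, so this point lies in $\ball{0}{\iradius}\subseteq\obj$. Otherwise, since $\obj$ is star shaped about the origin and the origin lies in its interior, the ray from $0$ through $\pnt$ crosses $\bdry{\obj}$ in a unique point $\pntB$, so that $\pnt=\mu\pntB$ for some $\mu\in(0,1]$. The rounded-fatness condition gives $\CHX{\ball{0}{\iradius}\cup\pntB}\subseteq\obj$, and this convex hull is exactly $\bigcup_{t\in[0,1]}\ball{t\pntB}{(1-t)\iradius}$. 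Choosing $t=\mu/(1+\eps)\in[0,1]$, we have $(\pnt+w)/(1+\eps)=t\pntB+w/(1+\eps)$, and $\norm{w/(1+\eps)}\le \eps\iradius/(1+\eps)\le (1+\eps-\mu)\iradius/(1+\eps)=(1-t)\iradius$, where the middle step uses $\mu\le 1$. Hence $(\pnt+w)/(1+\eps)\in\ball{t\pntB}{(1-t)\iradius}\subseteq\obj$, which is exactly the required inclusion.

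The genuinely routine parts are the Minkowski-sum and homogeneity bookkeeping of the first step, together with the inequality $\eps\le 1+\eps-\mu$. I expect the only mildly delicate points to be the two standard geometric facts used above: that a ray from an interior center of a star shaped body homeomorphic to $\ball{0}{1}$ meets the boundary in exactly one point (so that $\pnt$ may be written as $\mu\pntB$ with $\pntB\in\bdry{\obj}$), and that $\CHX{\ball{0}{\iradius}\cup\pntB}=\bigcup_{t\in[0,1]}\ball{t\pntB}{(1-t)\iradius}$. Both are easy, and I would record them as one-line observations rather than prove them in full.
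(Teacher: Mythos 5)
Your proof is correct and follows essentially the same route as the paper's: reduce to the ball-inradius $\iradius$ via $\diameter{\obj}\le 2\fatness\iradius$ and $\constC\ge 2\fatness$, then use the cone $\CHX{\ball{0}{\iradius}\cup\pntB}\subseteq\obj$ from the rounded-fatness definition. One small but genuine improvement: the paper only establishes the containment for $\pnt\in\bdry{y\obj}$ and relegates the extension to interior points to a footnote appealing to topology, whereas your parametrization $\pnt=\mu\pntB$ with $\mu\in(0,1]$ and the choice $t=\mu/(1+\eps)$ handles every point of $\obj$ directly and at no extra cost (the needed inequality $\eps\le 1+\eps-\mu$ just uses $\mu\le 1$). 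Your normalization to $y=1$ is also cleaner bookkeeping and avoids a formula in the paper's version of the proof that appears to have the fraction inverted (harmless there, since only a one-sided bound is used). Overall this is a tidier writeup of the same underlying geometric idea.
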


\begin{figure*}[t]
    \centerline{
       \includegraphics[width=0.7\linewidth,clip=]{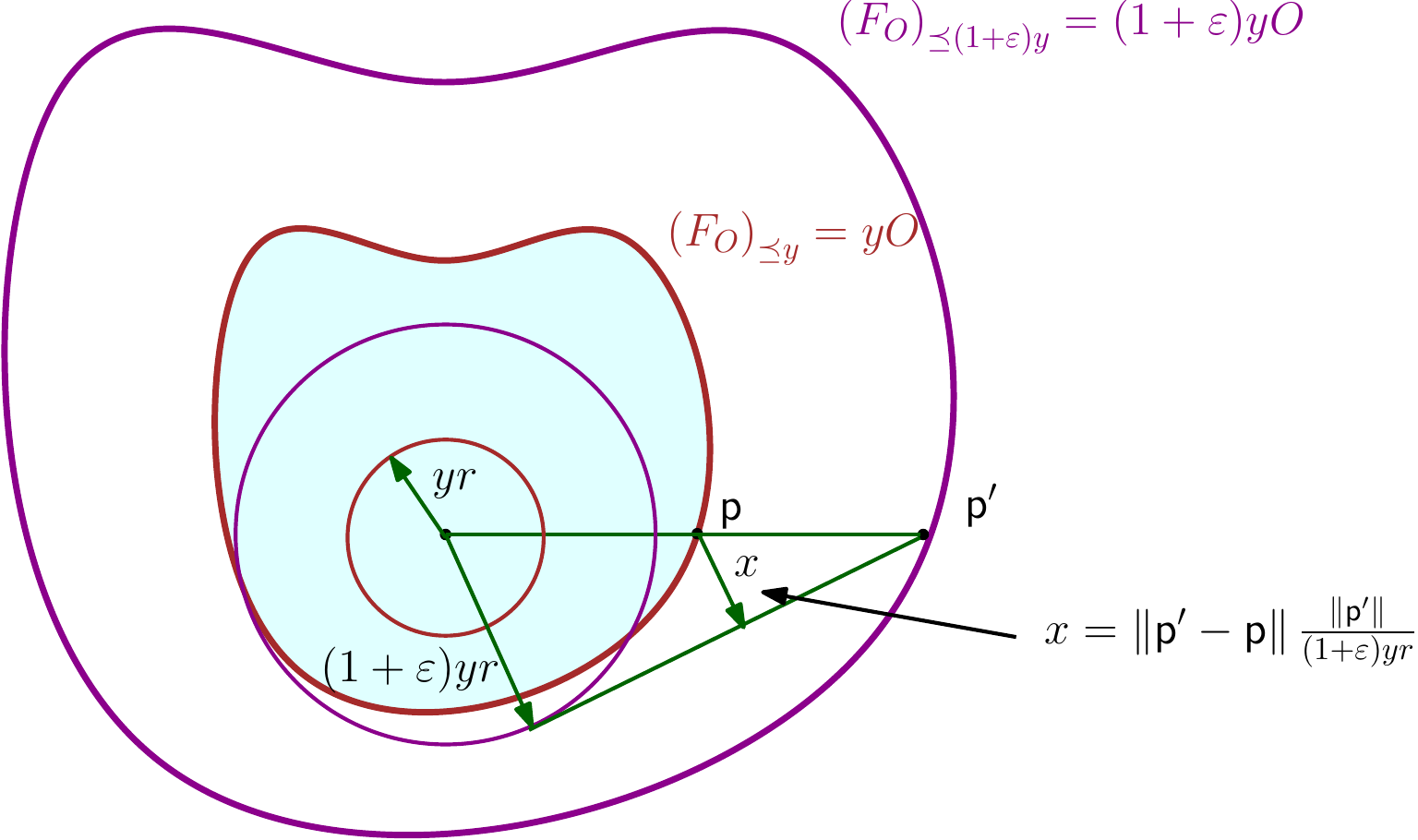}
    }
    \caption{The $(1+\eps)$ expansion of $y \obj$ contains
             $\ball{\pnt}{x}$.}
    \figlab{fig:contain}
\end{figure*}

\begin{proof}
    Since $\subLevel{\pth[]{\objfunc{\obj}}}{y} = y \obj$ we show that
    $y \obj \Mplus \ball{0}{(\nfrac{\eps}{\constC})\diameter{y\obj}}
    \subseteq (1+\eps)y \obj$. Let $\iradius$ be the
    radius guaranteed by \defref{fat:rounded} for $\obj$. 
    Clearly $\diameter{y\obj} = y \diameter{\obj} \leq 2 y \fatness \iradius$.
    We show that for every $\pnt \in \bdry{y \obj}$
    we have that $\pnt + \ball{0}{(\nfrac{\eps}{c})\diameter{y\obj}}
    \subseteq (1+\eps)y\obj$ \footnote{ Topological arguments can show
       that for objects homeomorphic to balls, if this is true for
       boundary points $\pnt$, then for any $\pnt \in \obj$, $\pnt +
       \ball{0}{(\nfrac{\eps}{c})\diameter{y\obj}} \subseteq
       (1+\eps)y\obj$. We omit the technical argument here.  }.  Let
    $\pnt \in \bdry{y \obj}$. It is sufficient to show that,
    $\ball{\pnt}{(\nfrac{2\eps y \fatness \iradius}{\constC})} \subseteq
    (1+\eps)y\obj$.  Clearly $\pnt' = (1+\eps)\pnt \in
    \bdry{(1+\eps)y\obj}$. Since the cone,
    $\CHX{\ball{\cntr}{(1+\eps)y\iradius} \cup \pnt'}$ is
    in $(1+\eps)y\obj$, it is clear that the ball of radius,
    \[
      x
      = 
      \norm{\pnt' - \pnt} \frac{\norm{\pnt'}}{(1+\eps)y\iradius}
      =
      \norm{\pnt' - \pnt} \frac{\norm{\pnt}}{y\iradius}
    \]
    is completely within $(1+\eps)y\obj$, see
    \figref{fig:contain}.  Now, $\norm{\pnt - \pnt'} = \eps
    \norm{\pnt} \geq \eps y \iradius$, and $\norm{\pnt} \geq y \iradius$.
    It follows that $x \geq \eps y \iradius$. 
    If we choose $\constC \geq 2\fatness$, the claimed result is easily
    seen to hold.
\end{proof}

\begin{figure}[t]
    \centerline{
       \includegraphics{\si{b_example}}
    }%
    \caption{The object $\obj$ is $\fatness$-fat but not $\fatness'$-rounded
             fat.  In particular, the point $\pnt$ is in $\obj \Mplus
             \ball{0}{(\nfrac{\eps}{c}){\diameter{\obj}}}$ but not in
             $(1+\eps)\obj$. In particular, the scaling distance function is
             discontinuous at $\pntA$.}
    \figlab{fig:bexample}
\end{figure}

By the above lemma we can take the growth function
$\grfunc_{\objfunc{\obj_i}}(y)%
=%
\diameter{\subLevel{\pth[]{\objfunc{\obj_i}}}{y}} =%
y \diameter{\obj_i}$ and the growth constant, see
\pcondrefpage{bounded:growth}, for the set of functions
$\objfunc{\obj_i}$ to be $\grconst = \constC = 2 \fatness$. 
If the object $\obj$ is $\fatness$-fat but not
$\fatness'$-rounded fat for any constant $\fatness' > 0$ then
it may be that its scaling distance function grows arbitrarily quickly and
thus fails to comply with our framework, see \figref{fig:bexample}.
It is not hard to see that \lemref{fatexpand} implies that bounded
growth \pcondref{bounded:growth} is satisfied for all the functions
$\func_1, \ldots, \func_n$ when the objects under consideration
$\obj_1, \ldots, \obj_n$ are $\fatness$-rounded fat.  To show that
condition \pcondref{sketch:small} is satisfied, is slightly harder.

\begin{lemma}%
    \lemlab{ucover}%
    Let $\ObjSet$ be a set of $n$ star shaped objects $\obj_1, \obj_2,
    \dots, \obj_n$.  Let $\fatness \geq 1$ 
    be any constant. Suppose that $\obj_1, \ldots, \obj_n$ are
    $\fatness$-rounded fat. Then, for any $\delta > 0$, there is a
    subset $\IndSet \subseteq \brc{1,2,\dots,n}$ with
    $\cardin{\IndSet} = O(\delta^{-d})$, such that for all $y \geq 0$,
    we have
    \begin{align*}
        \bigcup_{i \in [n]} y \obj_i \subseteq \bigcup_{i \in \IndSet}
        (1+\delta)y\obj_i.
    \end{align*}
    Moreover, for every $i \in \IndSet$ we have that
    $\diameter{\obj_i} = \Omega ( \max_i $ $\diameter{\obj_i} )$.
\end{lemma}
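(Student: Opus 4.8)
The plan is to pass to the radial representation of the objects and thereby reduce the required containment to a one‑dimensional statement on the sphere. By the standing convention of this section every $\obj_i$ is centered at the origin, so the $y$‑scaling is the dilation $y\obj_i$, and since $\obj_i$ is star shaped at the origin it is completely determined by its radial function $\sigma_i:\unitS^{d-1}\to\Re$, $\sigma_i(u)=\max\{t\ge 0 : tu\in\obj_i\}$, which by $\fatness$‑rounded fatness satisfies $r_i\le \sigma_i(u)\le \fatness r_i$ for all $u$, where $r_i$ is the radius of \defref{fat:rounded}. Three elementary remarks drive everything: dilation about the origin multiplies the radial function by the scale; $p=tu$ (with $u=p/\norm{p}$) lies in $y\obj_i$ iff $t\le y\,\sigma_i(u)$; and consequently the whole statement is scale invariant in $y$, so it suffices to prove it for $y=1$.

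The first real step is a Lipschitz estimate for each $\sigma_i$ extracted from the cone condition: there is a constant $C=O(\fatness)$ such that for all $u,u'$ with $\theta:=\angle(u,u')\le 1$ one has $\sigma_i(u')\ge(1-C\theta)\sigma_i(u)$, hence (by symmetry, when $C\theta\le 1/2$) also $\sigma_i(u')\le(1+2C\theta)\sigma_i(u)$. To get this, set $\pnt=\sigma_i(u)u\in\bdry{\obj_i}$; the cone $\CHX{\ballE{0}{r_i}\cup\pnt}\subseteq\obj_i$ contains, for every $\lambda\in[0,1]$, the ball $\ballE{\lambda\pnt}{(1-\lambda)r_i}=\lambda\pnt+(1-\lambda)\ballE{0}{r_i}$. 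Taking $\lambda=1/(1+\fatness\sin\theta)\ge 1-\fatness\theta$ makes the ray through $u'$ hit this ball — because $\sigma_i(u)\le\fatness r_i$ forces the distance $\lambda\sigma_i(u)\sin\theta$ from $\lambda\pnt$ to the ray to be at most $(1-\lambda)r_i$ — and the far intersection point lies at distance at least $\lambda\sigma_i(u)\cos\theta\ge(1-\fatness\theta)(1-\theta^2/2)\sigma_i(u)$ from the origin; being in $\obj_i$, this yields the bound with $C=\fatness+1$. I expect this quantitative ``how fast can $\sigma_i$ vary'' estimate, with a constant depending only on $\fatness$, to be the main obstacle; after it, the rest is routine.

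With the estimate in hand I would run a ``locally largest object'' net argument. Assume $\delta\le 1$ (the claim is otherwise trivial), set $\theta_0=\delta/(8C)$, and cover $\unitS^{d-1}$ by $N=O(\theta_0^{-(d-1)})=O_{\fatness,d}(\delta^{-(d-1)})$ spherical caps $Q_1,\dots,Q_N$ of angular radius $\theta_0$ with centers $u_1,\dots,u_N$. Let $i_\ell=\arg\max_{i\in[n]}\sigma_i(u_\ell)$ and $\IndSet=\{i_1,\dots,i_N\}$, so $\cardin{\IndSet}\le N=O(\delta^{-(d-1)})\subseteq O(\delta^{-d})$. For any $j$ and any $p\in\obj_j$, write $p=tu$ with $t=\norm{p}\le\sigma_j(u)$, pick a cap $Q_\ell\ni u$, and chain two applications of the Lipschitz estimate with the choice of $i_\ell$:
\[
 t\ \le\ \sigma_j(u)\ \le\ (1+2C\theta_0)\,\sigma_j(u_\ell)\ \le\ (1+2C\theta_0)\,\sigma_{i_\ell}(u_\ell)\ \le\ (1+2C\theta_0)^2\,\sigma_{i_\ell}(u)\ \le\ (1+\delta)\,\sigma_{i_\ell}(u),
\]
using $2C\theta_0=\delta/4$ and $(1+\delta/4)^2\le 1+\delta$. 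Hence $p\in(1+\delta)\obj_{i_\ell}$, which gives $\bigcup_{i\in[n]}\obj_i\subseteq\bigcup_{i\in\IndSet}(1+\delta)\obj_i$, and then the general $y$ by scaling.

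For the ``moreover'' part: since $\sigma_i(u)\ge r_i$ always, $\sigma_{i_\ell}(u_\ell)=\max_i\sigma_i(u_\ell)\ge r_{\max}$ where $r_{\max}:=\max_i r_i$; therefore $r_{i_\ell}\ge\sigma_{i_\ell}(u_\ell)/\fatness\ge r_{\max}/\fatness$, while every object satisfies $\diameter{\obj_i}\le 2\fatness r_i\le 2\fatness r_{\max}$. Thus $\diameter{\obj_{i_\ell}}\ge 2r_{i_\ell}\ge 2r_{\max}/\fatness\ge \fatness^{-2}\max_i\diameter{\obj_i}=\Omega(\max_i\diameter{\obj_i})$, as required. (Note this argument in fact gives $\cardin{\IndSet}=O(\delta^{-(d-1)})$, slightly better than the stated $O(\delta^{-d})$.)
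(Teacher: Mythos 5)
Your proof is correct, and it takes a genuinely different route from the paper's. The paper works in the ambient space: it shows $\bigcup_i \obj_i$ sits inside a ball of radius $\fatness \iradius_{\max}$, tiles that ball by $O(\delta^{-d})$ axis-aligned cubes of side $\Theta(\delta \iradius_{\max})$, picks one object per occupied cube, and invokes its Minkowski-expansion lemma (\lemref{fatexpand}) to absorb the cube into the chosen object's $(1+\delta)$-dilation (after first discarding objects so small that they fit inside $\obj_1$, which is needed to make the cube side comparable to every surviving $\diameter{\obj_i}$). You instead pass to the radial functions $\sigma_i$ on $\unitS^{d-1}$, prove from the cone condition a Lipschitz bound $|\sigma_i(u')-\sigma_i(u)|=O(\fatness\,\theta\,\sigma_i(u))$ with $\theta=\angle(u,u')$, cover the sphere by $O((\fatness/\delta)^{d-1})$ caps, and for each cap center take the object whose radial function is pointwise largest there; the Lipschitz bound then transfers the pointwise domination to the whole cap after a $(1+\delta)$ slack. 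The two proofs extract the same geometric content from $\fatness$-rounded fatness but package it differently: the paper's expansion lemma says "a $\Theta(\delta\cdot\mathrm{diam})$-neighborhood is swallowed by a $(1+\delta)$-dilation," while your cone-geometry lemma says "the radial function varies by at most a $(1+O(\fatness\theta))$ factor over angle $\theta$," and these are essentially equivalent. Your version is a bit cleaner in that it handles all objects uniformly (no special-casing of the tiny ones, since $\sigma_i(u)\ge r_i$ already handles them), and it yields the sharper bound $\cardin{\IndSet}=O(\delta^{-(d-1)})$, which is the intrinsic dimension of the problem and improves the paper's stated $O(\delta^{-d})$; the "moreover" clause falls out of $\sigma_{i_\ell}(u_\ell)\ge r_{\max}$ by the same one-line argument in both. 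The constant in your Lipschitz lemma, $C=\fatness+1$, is the only place care is required, and your derivation via the ball $\ballE{\lambda\pnt}{(1-\lambda)\iradius}\subseteq\CHX{\ballE{0}{\iradius}\cup\pnt}$ with $\lambda=1/(1+\fatness\sin\theta)$ is correct.
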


\begin{proof}
    Recall our convention, that here all the bodies are centered at the
    origin. Clearly it is sufficient to show this for $y = 1$. Let
    $\iradius_i$ for $i = 1,\dots, n$ be the radius of the ball
    satisfying the conditions of \defref{fat:rounded} for 
    object $\obj_i$.
    Assume that $\iradius_1 \geq
    \iradius_2 \dots \geq \iradius_n$. If 
    $\fatness \iradius_j \leq \iradius_1$
    for some $j$, then $\obj_j, \dots, \obj_n$ are contained in
    $\obj_1$, we can add $1$ to the set $\IndSet$. From now we assume
    that for each $1 \leq j \leq n$, $\fatness \iradius_j > \iradius_1$ 
    i.e. we can ignore the sufficiently small objects. 
    Our index set $\IndSet$ is a
    subset of these prefix indices for which, 
    $\fatness \iradius_i \geq \iradius_1$. As such,
    \begin{align*}
        \diameter{\obj_i} \geq 2 \iradius_i%
                          \geq \frac{2}{\fatness} \iradius_1%
                          \geq \frac{4}{\fatness^2} (2 \fatness \iradius_1)%
                          \geq \frac{4}{\fatness^2} \max_{1 \leq i \leq n} \diameter{\obj_i},
    \end{align*}
    for any $i \in \IndSet$.
    It is easy to see that, $ \bigcup_{i \in [n]} \obj_i \subseteq
    \bigcup_{i \in [n]} \ball{0}{\fatness \iradius_i} \subseteq
    \ball{0}{\fatness \iradius_1}.  $ We tile the ball 
    $\ball{0}{\fatness \iradius_1}$
    with cubes of diameter at most $\delta \fatness \iradius_1 /\constC'$ where
    $\constC'$ is a constant that we determine shortly.  Notice that
    the number of such cubes is $O(\delta^{-d})$. Let $\CellSet$
    denote the set of these cubes.  If $\cell \cap \obj_i \neq
    \emptyset$ for some $1 \leq i \leq n$ and $\cell \in \CellSet$
    then we add $\cell$ to a set $\CellSetA$ and $i$ to our index set
    $\IndSet$. Notice that we choose at most one object among all objects
    that might intersect $\cell$. Now, $ \bigcup_{i \in [n]} \obj_i
    \subseteq \bigcup_{\cell \in \CellSetA}\cell$, as 
    $\bigcup_{\cell \in \CellSetA} \cell$ covers 
    $\ball{0}{\fatness \iradius_1}$. Observe,
    $\cardin{\IndSet} \leq \cardin{\CellSetA} \leq \cardin{\CellSet} =
    \order{\delta^{-d}}$.  We show that it is possible to choose
    $\constC'$ so large that, $ \bigcup_{\cell \in \CellSetA} \cell
    \subseteq \bigcup_{i \in \IndSet} (1+\delta) \obj_i.  $ Since $\cell
    \cap \obj_i \neq \emptyset$ and $\diameter{\cell} \leq
    \delta\fatness \iradius_1/\constC'$, $\cell \subseteq \obj_i \Mplus
    \ball{0}{\nfrac{\delta\fatness \iradius_1}{\constC'}}$. We choose
    $\constC'$ large enough so that 
    $\delta \fatness \iradius_1 / \constC' \leq
    \delta \diameter{\obj_i}/\constC$ where 
    $\constC = 2 \fatness$ is the constant from
    \lemref{fatexpand}. Then we will have by \lemref{fatexpand},
    \begin{align*}
        \cell%
        &\subseteq%
        \obj_i \Mplus \ball{0}{\delta \fatness \iradius_1 / \constC'}%
        \subseteq%
        \obj_i \Mplus \ball{0}{\delta \diameter{\obj_i} / \constC}%
        \\%
        &\subseteq%
        (1+\delta) \obj_i,
    \end{align*}
    proving the claim. Now,
    \begin{align*}
        \delta \fatness \iradius_1 / \constC'%
        \leq%
        \delta \fatness^2 \iradius_i / \constC' \leq%
        \frac{\delta \fatness^2 \diameter{\obj_i}}{2 \constC'}%
        \leq%
        \frac{\delta \diameter{\obj_i}}{\constC},
    \end{align*}
    if $\constC' = \nfrac{\constC \fatness^2}{2} =
    \fatness^3$.
\end{proof}

\begin{lemma}%
    \lemlab{smalldisp}%
    Let $\fatness \geq 1$  be any constant. Let
    $\obj$ be a star shaped object that is $\fatness$-rounded fat,
    and let $\delta > 0$.  Let $\pntA \in \Re^d$ with $\norm{\pntA}
    \leq \delta \diameter{\obj} / \constC$ where $\constC = 2
       \fatness$. Then we have that $\obj + \pntA
    \subseteq (1+\delta) \obj$.
\end{lemma}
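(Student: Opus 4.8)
The plan is to obtain this statement as an immediate corollary of \lemref{fatexpand}. The key observation is that a single translate is sandwiched inside the corresponding Minkowski sum with a ball: for every $x \in \obj$ the point $x + \pntA$ lies within Euclidean distance $\norm{\pntA}$ of $x$, so $x + \pntA \in x \Mplus \ball{0}{\norm{\pntA}}$, and hence $\obj + \pntA \subseteq \obj \Mplus \ball{0}{\norm{\pntA}}$.

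Next I would use the hypothesis on $\norm{\pntA}$. Since $\norm{\pntA} \leq \delta \diameter{\obj}/\constC$ with $\constC = 2\fatness$, we have $\ball{0}{\norm{\pntA}} \subseteq \ball{0}{(\delta/\constC)\diameter{\obj}}$, and therefore $\obj + \pntA \subseteq \obj \Mplus \ball{0}{(\delta/\constC)\diameter{\obj}}$.

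Finally, apply \lemref{fatexpand} with $y = 1$, $\eps = \delta$, and the constant $\constC = 2\fatness$ (which satisfies the requirement $\constC \geq 2\fatness$): this yields $\obj \Mplus \ball{0}{(\delta/\constC)\diameter{\obj}} \subseteq (1+\delta)\obj$. Chaining the two inclusions gives $\obj + \pntA \subseteq (1+\delta)\obj$, as claimed.

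There is essentially no obstacle: the lemma is just the special case of \lemref{fatexpand} for a single translate rather than a full ball expansion, and the only content is the trivial fact that a translate is contained in the matching Minkowski sum. The one point worth noting is that \lemref{fatexpand} (via the footnote in its proof) already asserts the expansion inclusion for the whole body $\obj$, not merely its boundary, which is exactly what is needed here.
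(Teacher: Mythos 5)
Your proof is correct and is essentially identical to the paper's: both observe that $\obj + \pntA \subseteq \obj \Mplus \ball{0}{\norm{\pntA}} \subseteq \obj \Mplus \ball{0}{\delta\diameter{\obj}/\constC}$ and then invoke \lemref{fatexpand}. No difference worth noting.
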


\begin{proof}
    We have, $\obj + \pntA \subseteq \obj \Mplus
    \ball{0}{\norm{\pntA}} \ \subseteq \obj \Mplus \ball{0}{\delta
       \diameter{\obj} / \constC}$ as $\norm{\pntA} \leq \delta
    \diameter{\obj} / \constC$. The result follows by appealing to
    \lemref{fatexpand}.
\end{proof}

\begin{lemma}%
    \lemlab{connlb}%
    For $i=1,\ldots, n$, let $\obj_i$ be a star shaped object in
    $\Re^d$ centered at a point $\pnt_i$. Let $\ObjSet = \brc{\obj_1,
       \dots, \obj_n}$, $\PntSet = \brc{\pnt_1, \dots, \pnt_n}$, and
    $\FuncSet = \brc{\func_i \sep 1 \leq i \leq n}$, where $\func_i =
    \objfunc{\obj_i}$, for $i=1,\ldots, n$.  For $i=1,\ldots, n$, let
    $\iradius_i$ denote the radius of the ball for $\obj_i$ from
    \defref{fat:rounded}, and let $\iradius = \max_i \iradius_i$. Then, 
    $\CR{\FuncSet} \geq \nfrac{\diameter{\PntSet}}{(2 n \fatness \iradius)}$.
\end{lemma}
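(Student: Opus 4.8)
The plan is to exploit $\fatness$-rounded fatness to trap each sublevel set $\subLevel{\pth[]{\func_i}}{\num} = \num\obj_i$ inside the Euclidean ball $\ball{\pnt_i}{\num\fatness\iradius_i}$, and then use the connectivity of $\bigcup_{i=1}^n \num\obj_i$ at the critical level $\num = \CR{\FuncSet}$ to chain the centers $\pnt_1,\dots,\pnt_n$ together, thereby bounding $\diameter{\PntSet}$ in terms of $\num$.

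First I would dispose of the degenerate case $n=1$, where both sides are $0$. Then, with $\num := \CR{\FuncSet}$, I would record that $\num\obj_i \subseteq \ball{\pnt_i}{\num\fatness\iradius_i} \subseteq \ball{\pnt_i}{\num\fatness\iradius}$, which is immediate from \defref{fat:rounded}, the identity $\subLevel{\pth[]{\objfunc{\obj_i}}}{\num} = \num\obj_i$, and $\iradius = \max_i \iradius_i$. By \defref{connect:level}, at level $\num$ the collection of compact sets $\num\obj_1,\dots,\num\obj_n$ is connected, i.e., $U := \bigcup_{i=1}^n \num\obj_i$ is connected. I would then introduce the intersection graph $G$ on $\{1,\dots,n\}$, with $i \sim j$ iff $\num\obj_i \cap \num\obj_j \neq \emptyset$, and argue that $G$ is connected: if it were not, its vertices would split into two nonempty classes with no edge between them, whence the corresponding two sub-unions of the $\num\obj_i$ would be nonempty, compact, and disjoint, contradicting connectivity of $U$.

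Finally, I would pick a pair $a,b$ with $\norm{\pnt_a - \pnt_b} = \diameter{\PntSet}$ and a simple path $a = i_1, i_2, \dots, i_k = b$ in $G$, so $k \le n$. For each $r$, a common point of $\num\obj_{i_r}$ and $\num\obj_{i_{r+1}}$ lies in $\ball{\pnt_{i_r}}{\num\fatness\iradius} \cap \ball{\pnt_{i_{r+1}}}{\num\fatness\iradius}$, so $\norm{\pnt_{i_r} - \pnt_{i_{r+1}}} \le 2\num\fatness\iradius$; summing along the path and applying the triangle inequality yields $\diameter{\PntSet} \le 2(k-1)\num\fatness\iradius \le 2n\fatness\iradius\,\CR{\FuncSet}$, which rearranges to the stated bound. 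The whole argument is elementary; the one step that takes a little care — and which I expect to be the only real obstacle — is the passage from ``$U$ connected'' to ``$G$ connected'', which genuinely uses compactness of the $\num\obj_i$ (guaranteed by \pcondref{sublevel:compact}); an alternative using a segment-covering property of connected families is possible, but the direct combinatorial path argument above is cleaner.
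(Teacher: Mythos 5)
Your proof is correct and follows essentially the same approach as the paper: contain each sublevel set $\num\obj_i$ in $\ball{\pnt_i}{\num\fatness\iradius}$, obtain a chain of intersecting sets from the connectivity of the level-$\num$ sublevel sets, and telescope the triangle inequality along the chain. The only cosmetic difference is that you spell out the argument that the intersection graph of the $\num\obj_i$ is connected (via compactness and disjointness), whereas the paper simply asserts the existence of the chain of intersecting balls.
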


\begin{proof}
    The claim is trivially true if $\diameter{\PntSet} = 0$, i.e. all
    the points $\pnt_i$ are the same.  Let $l = \CR{\FuncSet}$, for
    brevity.  As we have $\subLevel{\pth[]{\func_i}}{l} = 
    l \obj_i$, where the scaling is done around its center $\pnt_i$, 
    it follows that the sets $l \obj_i$ for
    $i = 1, \dots, n$ are connected.  Since $l \obj_i 
    \subseteq \ball{\pnt_i}{l \fatness \iradius_i} \subseteq \ball{\pnt_i}{l
       \fatness \iradius}$ it is easy to see that the balls 
    $\ball{\pnt_i}{l \fatness \iradius}$ for $i = 1, \dots, n$ are also
    connected. Let $\pntA, \pntB \in
    \PntSet$ be such that $\norm{\pntA - \pntB} = \diameter{\PntSet}$.
    There is a sequence of distinct $i_1, \dots, i_k \in \brc{1,\dots,
       n}$ such that $\pntA = \pnt_{i_1}, \pntB = \pnt_{i_k}$ and we
    have $\ball{\pnt_{i_r}}{l \fatness \iradius} \cap 
          \ball{\pnt_{i_{r+1}}}{l \fatness \iradius} 
          \neq \emptyset$ for $1 \leq r \leq k - 1$. It follows
    that $\norm{\pnt_{i_r} - \pnt_{i_{r+1}}} \leq 2 l \fatness \iradius, 1 \leq
    r \leq k - 1$. By the triangle inequality,
    \begin{align*}
        \diameter{\PntSet}%
        &=%
        \norm{\pntA - \pntB}%
        =%
        \norm{\pnt_{i_1} - \pnt_{i_k}}%
        \leq%
        \sum_{r = 1}^{k-1} \norm{\pnt_{i_r} - \pnt_{i_{r+1}}}%
        \\%
        &\leq%
        \sum_{r=1}^{k-1} 2 l \fatness \iradius%
        =%
        2(k-1)l \fatness \iradius%
        \leq%
        2 n l \fatness \iradius,
    \end{align*}
    thus proving the claim.
\end{proof}

We can now show that condition \pcondrefpage{sketch:small} holds for
the $\objfunc{\obj_i}$.

\begin{lemma}
    \lemlab{fat:sk:small}%
    Consider the setting of \lemref{connlb}.  Given $\delta > 0$,
    there is a index set $\IndSet \subseteq \brc{1,\dots, n}$ with
    $\cardin{\IndSet} = \order{\delta^{-d}}$ and $y_0 = O(l \cdot
    n/\delta)$ such that the functions $\brc{\func_j \sep{ j \in
          \IndSet }}$ form a $(\delta,y_0)$-sketch, where $l =
    \CR{\FuncSet}$.
\end{lemma}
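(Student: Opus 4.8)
The plan is to reduce to \lemref{ucover} by first translating every object so that it is centered at the origin, apply the cover there, and then absorb the (now reappearing) displacement of the centers into the $(1+\delta)$-factor using \lemref{smalldisp}. The reason this absorption succeeds once $y$ is polynomially large is precisely the connectivity lower bound \lemref{connlb}, which forces the centers $\pnt_i$ to lie within a ball of radius $O(n\fatness\iradius l)$; relative to the objects $y\obj_i$ this displacement is tiny as soon as $y = \Omega(nl/\delta)$. We may assume $\delta \le 1$ (a $(1,y_0)$-sketch is a $(\delta,y_0)$-sketch for larger $\delta$ by the remark after \defref{sketch:def}).

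First I would set up the reduction. Fix $\delta' = \delta'' = \delta/3$, so that $(1+\delta')(1+\delta'') \le 1+\delta$. For each $i$ let $\obj_i' = \obj_i - \pnt_i$ be the translate of $\obj_i$ centered at the origin; it is still $\fatness$-rounded fat with the same radius $\iradius_i$, and $\subLevel{\func_i}{z} = z\obj_i = z\obj_i' + \pnt_i$ for every $z \ge 0$ (here $z\obj_i$ denotes scaling around $\pnt_i$). Applying \lemref{ucover} to $\obj_1',\dots,\obj_n'$ with parameter $\delta'$ produces an index set $\IndSet$ with $\cardin{\IndSet} = \order{\delta'^{-d}} = \order{\delta^{-d}}$ such that $\bigcup_i y\obj_i' \subseteq \bigcup_{i\in\IndSet}(1+\delta')y\obj_i'$ for all $y \ge 0$, and such that $\diameter{\obj_i} = \diameter{\obj_i'} = \Omega\pth{\max_k \diameter{\obj_k}}$ for every $i \in \IndSet$. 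Since $\max_k\diameter{\obj_k} \ge 2\max_k\iradius_k = 2\iradius$, this gives a constant $c_2 = c_2(\fatness)$ with $\diameter{\obj_i} \ge \iradius/c_2$ for all $i \in \IndSet$.

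Next I would fix the threshold and close the argument. Let $D = \diameter{\PntSet}$; by \lemref{connlb}, $D \le 2n\fatness\iradius l$ with $l = \CR{\FuncSet}$. Put $y_0 = c_1 n l/\delta$, where $\constC = 2\fatness$ is the growth constant of \lemref{fatexpand} and $c_1 = c_1(\fatness)$ is chosen large enough that $y_0 \ge \constC c_2 D/(\delta''\iradius)$; this is possible because $\constC c_2 D/(\delta''\iradius) \le \constC c_2 (2n\fatness\iradius l)/(\delta''\iradius) = O(nl/\delta)$. Now fix $y \ge y_0$ and $q \in \subLevel{\FuncSet}{y} = \bigcup_i y\obj_i$. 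Then $q \in y\obj_j$ for some $j$, i.e.\ $q - \pnt_j \in y\obj_j'$, so the cover from the previous step yields $i \in \IndSet$ with $q - \pnt_j \in (1+\delta')y\obj_i'$. Writing $q - \pnt_i = (q-\pnt_j) + (\pnt_j - \pnt_i)$ and noting $\norm{\pnt_j - \pnt_i} \le D$, we get $q - \pnt_i \in (1+\delta')y\obj_i' \Mplus \ball{0}{D}$. The body $(1+\delta')y\obj_i'$ is $\fatness$-rounded fat with diameter $(1+\delta')y\diameter{\obj_i} \ge y\iradius/c_2 \ge \constC D/\delta''$, so $D \le \delta''\diameter{(1+\delta')y\obj_i'}/\constC$, and \lemref{smalldisp} gives $(1+\delta')y\obj_i' \Mplus \ball{0}{D} \subseteq (1+\delta'')(1+\delta')y\obj_i'$. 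Hence $q - \pnt_i \in (1+\delta'')(1+\delta')y\obj_i' \subseteq (1+\delta)y\obj_i'$, i.e.\ $q \in (1+\delta)y\obj_i = \subLevel{\func_i}{(1+\delta)y}$ with $i \in \IndSet$. Thus $\subLevel{\FuncSet}{y} \subseteq \subLevel{\brc{\func_j \sep j \in \IndSet}}{(1+\delta)y}$ for all $y \ge y_0$, which is exactly the $(\delta,y_0)$-sketch property of \defref{sketch:def}, with $\cardin{\IndSet} = \order{\delta^{-d}}$ and $y_0 = O(nl/\delta)$ as required.

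The one point that genuinely needs care is the final displacement step: \lemref{smalldisp} can absorb $\pnt_j - \pnt_i$ into a $(1+\delta'')$-factor only if the target object $\obj_i$ is not too small, which is exactly why we must invoke the ``moreover'' clause of \lemref{ucover} to guarantee $\diameter{\obj_i} = \Omega(\iradius)$ for every $i \in \IndSet$ — a tiny object surviving in the cover would blow up $y_0$. Everything else is routine bookkeeping with the two $(1+\delta/3)$-factors and with the constants (depending only on $\fatness$ and $d$) hidden in $\cardin{\IndSet}$ and $y_0$.
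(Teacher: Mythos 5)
Your proof is correct and follows essentially the same route as the paper's: translate all objects to a common center, apply \lemref{ucover} to get a small cover with no tiny objects surviving, then use \lemref{connlb} to bound the center displacement and absorb it via \lemref{smalldisp} once $y \ge y_0 = O(nl/\delta)$. The only cosmetic difference is that you center the translated bodies at the origin rather than at $\pnt_j$, and you split the error as $(1+\delta/3)^2$ rather than $(1+\delta/4)^2$; the key insight — that the cover index set is translation-invariant so one application of \lemref{ucover} suffices — is identical.
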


\begin{proof}
    We provide a sketch of the proof as details are easy but tedious.
    For each $1 \leq i,j \leq n$ we consider the set of objects
    $\obj_{ij} = \obj_i + \pnt_j -\pnt_i$, i.e.  $\obj_{ij}$ is
    $\obj_i$ translated so that it is centered at $\pnt_j$.  By
    \lemref{ucover} there is an index set $\IndSet \subseteq
    \brc{1,\dots,n}$ with $\cardin{\IndSet} = \order{\delta^{-d}}$ such that
    for all $y$ and any fixed $j$ with $1 \leq j \leq n$ we have that
    $ \bigcup_{i \in [n]} y \obj_{ij} \subseteq \bigcup_{i \in
       \IndSet}(1+\delta/4) y \obj_{ij}$. Let $\iradius_i$ denote the 
    radius of the ball for $\obj_i$ from \defref{fat:rounded}, 
    and let $\iradius = \max_i \iradius_i$. By
    \lemref{connlb}, we have that,
    $l \geq \nfrac{\diameter{\PntSet}}{(2 n \fatness \iradius)}$. 
    \lemref{ucover} finds a $\IndSet$ such that for all $i \in
    \IndSet$, $\iradius_i \geq \ordergeq{\iradius}$.  A translated copy
    $\obj_{ij} = \obj_i + \pnt_j - \pnt_i$ is a translation by a
    vector $\pntA = \pnt_j - \pnt_i$.  As $l \geq
    \nfrac{\diameter{\PntSet}}{(2 n \fatness \iradius)}$, 
    there is a $y_0 =
    \order{l n / \delta}$ such that $\norm{\pnt_j - \pnt_i} \leq
    \delta \diameter{y_0 \obj_i} / 4 \constC$ for all $1 \leq i,j \leq
    n$, where $\constC = 2 \fatness$. Thus
    using \lemref{smalldisp}, $(1+\delta/4) y_0 \obj_i + (\pnt_j -
    \pnt_i) \subseteq (1+\delta/4)^2 y_0 \obj_i \subseteq (1+\delta)
    y_0 \obj_i$. Clearly this also holds for any $y \geq y_0$.  Thus
    for $y \geq y_0$ we have $(1+\delta)y \obj_i, $ covers $y \obj_i +
    (\pnt_j - \pnt_i)$ for $1 \leq i \leq n$.  It is then easy to see
    that $\brc{\func_i \sep i \in \IndSet}$ is a $(\delta,y_0)$-sketch.
\end{proof}

We conclude that for $\fatness$-rounded fat objects, 
the scaling distance function they define falls 
under our framework. We thus get the following result.


\RestateGeneric{\thmrefpage{a:fat}}{\bodyAFat}

Note, that the result in \thmref{a:fat} covers any symmetric convex
metric. Indeed, given a convex symmetric shape $C$ centered at the
origin, the distance it induces for any pair of points $\pnt, \pntA
\in \Re^d$, is the scaling distance of $C$ centered $\pnt$ to $\pntA$
(or, by symmetry, the scaling distance of $\pnt$ from $C$ centered at
$\pntA$). Under this distance $\Re^d$ is a metric space, and of
course, the triangle inequality holds. By an appropriate scaling of
space, which does not affect the norm (except for scaling it) we can
make $C$ fat, and now \thmref{a:fat} applies.  Of course,
\thmref{a:fat} is considerably more general, allowing each of the
points to induce a different scaling distance function, and the
distance induced does not have to comply with the triangle inequality.

\subsection{Nearest furthest-neighbor}
\seclab{f:n:neighbor}

For a set of points $\PntSetC \subseteq \Re^d$ and a point $\query$,
the \emphi{furthest-neighbor distance} of $\query$ from $\PntSetB$, is
$\fn_{\PntSetC}(\query) = \max_{\pntC \in \PntSetC}
\dist{\query}{\pntC}$; that is, it is the furthest one might have to
travel from $\query$ to arrive to a point of $\PntSetC$.  For example,
$\PntSetC$ might be the set of locations of facilities, where it is
known that one of them is always open, and one is interested in the
worst case distance a client has to travel to reach an open facility.
The function $\fn_{\PntSetC}(\cdot)$ is known as the
\emphi{furthest-neighbor Voronoi} diagram, and while its worst case
combinatorial complexity is similar to the regular Voronoi diagram, it
can be approximated using a constant size representation (in low
dimensions), see \cite{h-caspm-99}.

Given $n$ sets of points $\PntSet_1, \dots, \PntSet_n$ in $\Re^d$, we
are interested in the distance function $\fn(\query) = \min_i
\fn_i(\query)$, where $\fn_i(\query) = \fn_{\PntSet_i}\pth{\query}$. 
This quantity arises natural when one
tries to model uncertainty; indeed, let $\PntSet_i$ be the set of
possible locations of the $i$\th point (i.e., the location of the
$i$\th point is chosen randomly, somehow, from the set
$\PntSet_i$). Thus, $\fn_i(\query)$ is the worst case
distance to the $i$\th point, and $\fn(\query)$ is the worst-case
nearest neighbor distance to the random point-set generated by picking
the $i$\th point from $\PntSet_i$, for $i=1,\ldots, n$. We refer to
$\fn(\cdot)$ as the \emphi{nearest furthest-neighbor} distance, and we
are interested in its approximation.

\remove{
Given a query point $\query$ we want to compute approximately the radius
$\radA$ of the smallest ball centered at $\query$ such that
$\ball{\query}{\radA}$ contains at least one of the sets completely,
that is there is some $i$ with $1 \leq i \leq n$ such that $\PntSet_i
\subseteq \ball{\query}{\radA}$. More formally, we have that
\[
\radA = \min_{i = 1, \ldots ,n} \max_{\pntA \in \PntSet_i}
\dist{\query}{\pntA}.
\]
}

A naive solution to this problem would maintain a data structure for
computing the furthest neighbor approximately for each of the
$\PntSet_i$ and then just compute the minimum of those distances. A
data-structure to compute a $1 - \eps$ approximation to the furthest
neighbor takes $O(1/\eps^d)$ space for $O(1/\eps^d)$ query time, see
\cite{h-caspm-99} although this was probably known before. Thus the
entire data structure would take up total space of $O(n / \eps^d)$
with a query time of $O(n / \eps^d)$. By using our general framework
we can speed up the computation.
We will show that $\fn_i$, for $i=1,\ldots, n$
satisfy the conditions \pcondref{sublevel:compact} --
\pcondref{sketch:small} and
\ccondref{s:computable}--\ccondref{f:s:computable}. By \thmref{main}
we can prepare a data-structure of size $O(n \polylog(n))$ that allows
us a query time of $O(\log n)$ to find the desired nearest furthest-neighbor
approximately. In
order to facilitate the computations of \separation{}s we also
maintain data structures for $(1 - \eps/4)$-approximate furthest
neighbor search for each of the point sets $\PntSet_i$ for $i = 1, 2,
\dots, n$ where $\eps$ is the approximation parameter for
approximating the nearest furthest-neighbor,
i.e. the approximation parameter for
the problem we are trying to solve. Also, for $\mu = \eps^2 / 144$ we
also maintain $\mu$-coresets for computing the minimum enclosing ball
(\emphi{\MEB{}}) approximately for each $\PntSet_i$ for $i = 1, 2,
\dots, n$.  Each such coreset has $O(1/\eps^2)$ points, see
\cite{bhi-accs-02,bc-scsb-03, bc-ocsb-03}.  For each $i$ with $1 \leq
i \leq n$, the radius of the \MEB of the coreset points is a $(1 +
\mu)$-approximation to the radius of the \MEB of $\PntSet_i$.

\subsubsection{Satisfaction of the conditions}
\begin{observation}
    \obslab{fnn:sl:struc} We have that
    $\subLevel{{\pth[]{\fn_i}}}{y} = \bigcap \limits_{\pntA \in
       \PntSet_i} \ball{\pntA}{y}$, and
    $\diameter{\subLevel{{\pth[]{\fn_i}}}{y}} \leq 2y$.
\end{observation}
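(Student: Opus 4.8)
The plan is to derive both parts of the observation by directly unwinding the definition of the furthest-neighbor distance. First I would note that, by definition, $\fn_i(\query) = \fn_{\PntSet_i}(\query) = \max_{\pntC \in \PntSet_i} \dist{\query}{\pntC}$, so a point $\query$ satisfies $\fn_i(\query) \leq y$ precisely when $\norm{\query - \pntC} \leq y$ holds for every $\pntC \in \PntSet_i$ simultaneously. Rewriting the latter condition as $\query \in \ball{\pntC}{y}$ for all $\pntC \in \PntSet_i$ yields immediately that $\subLevel{\pth[]{\fn_i}}{y} = \bigcap_{\pntC \in \PntSet_i} \ball{\pntC}{y}$, which is the first assertion.

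For the diameter bound I would proceed as follows. If $\subLevel{\pth[]{\fn_i}}{y}$ is empty there is nothing to prove, so assume it is nonempty and fix an arbitrary site $\pntA \in \PntSet_i$ (the set $\PntSet_i$ itself being nonempty). By the set equality just established, $\subLevel{\pth[]{\fn_i}}{y} \subseteq \ball{\pntA}{y}$, hence for any two points $\pnt, \pntB \in \subLevel{\pth[]{\fn_i}}{y}$ the triangle inequality gives $\norm{\pnt - \pntB} \leq \norm{\pnt - \pntA} + \norm{\pntA - \pntB} \leq 2y$. Taking the supremum over all such pairs gives $\diameter{\subLevel{\pth[]{\fn_i}}{y}} \leq 2y$, as claimed.

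I do not anticipate any real obstacle: the observation is a routine consequence of the definitions, and the only subtlety worth a sentence is the degenerate case of an empty sublevel set. The point of the statement is that it will later let us verify compactness \pcondref{sublevel:compact} (a finite intersection of closed balls is compact) and bounded growth \pcondref{bounded:growth} (the diameter grows at most linearly in $y$, so a growth function proportional to $y$ works) for the functions $\fn_1, \ldots, \fn_n$.
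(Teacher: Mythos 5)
Your proof is correct and is the direct, definition-unwinding argument that the paper implicitly intends (the paper states this as an observation without supplying a proof). Both the rewriting of $\fn_i(\query)\le y$ as membership in every ball $\ball{\pntC}{y}$ and the diameter bound via containment in a single ball $\ball{\pntA}{y}$ are exactly right, and you correctly flag the trivial empty case.
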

Given the above observation, it is easy to see that Condition
\pcondref{sublevel:compact} is true, as
$\subLevel{\pth[]{\fn_i}}{y}$ is a finite intersection of compact
sets. The following Lemma shows that Condition
\pcondref{bounded:growth} is also true, by letting the growth function
$\grfunc_{\pth[]{\fn_i}}(y) = y$. Since $y \geq
\diameter{\subLevel{{\pth[]{\fn_i}}}{y}}/2$ by
\obsref{fnn:sl:struc}, it follows that we can choose the growth
constant $\grconst$ to be $2$.
\begin{lemma}
    For any $i$ with $1 \leq i \leq n$, if
    $\subLevel{\pth[]{\fn_i}}{y} \neq \emptyset$, it is true that,
    \[
    \subLevel{\pth[]{\fn_i}}{y}%
    \Mplus%
    \ball{0}{\eps y}%
    \subseteq%
    \subLevel{\pth[]{\fn_i}}{(1+\eps)y}.
    \]
\end{lemma}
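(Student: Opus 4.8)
The plan is to reduce everything to \obsref{fnn:sl:struc} plus the triangle inequality. Recall from \obsref{fnn:sl:struc} that $\subLevel{\pth[]{\fn_i}}{y} = \bigcap_{\pntA \in \PntSet_i} \ball{\pntA}{y}$, so a point $\pnt$ lies in $\subLevel{\pth[]{\fn_i}}{y}$ exactly when $\norm{\pnt - \pntA} \leq y$ for every $\pntA \in \PntSet_i$. First I would fix an arbitrary $\pnt \in \subLevel{\pth[]{\fn_i}}{y}$ and an arbitrary vector $\pntB$ with $\norm{\pntB} \leq \eps y$; unwinding the Minkowski sum, it suffices to show $\pnt + \pntB \in \subLevel{\pth[]{\fn_i}}{(1+\eps)y}$. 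For any $\pntA \in \PntSet_i$, the triangle inequality gives $\norm{(\pnt + \pntB) - \pntA} \leq \norm{\pnt - \pntA} + \norm{\pntB} \leq y + \eps y = (1+\eps)y$, using the characterization above for the first term. Since this bound holds simultaneously for all $\pntA \in \PntSet_i$, another application of \obsref{fnn:sl:struc} (now in the direction that characterizes membership in $\subLevel{\pth[]{\fn_i}}{(1+\eps)y}$) puts $\pnt + \pntB$ in $\subLevel{\pth[]{\fn_i}}{(1+\eps)y}$. Ranging over all choices of $\pnt$ and $\pntB$ yields the claimed inclusion.

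I do not expect any real obstacle here: the whole argument is a one-line computation once \obsref{fnn:sl:struc} is in hand, and it is exactly why the growth function can be taken to be $\grfunc_{\pth[]{\fn_i}}(y) = y$ in Condition \pcondref{bounded:growth}. The accompanying bound $\diameter{\subLevel{\pth[]{\fn_i}}{y}} \leq 2y$ (also from \obsref{fnn:sl:struc}, since the sublevel set is contained in any single ball $\ball{\pntA}{y}$ with $\pntA \in \PntSet_i$) then certifies that the growth constant $\grconst = 2$ works, completing the verification of \pcondref{bounded:growth} for the nearest furthest-neighbor distance functions.
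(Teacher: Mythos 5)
Your proof is correct and takes essentially the same approach as the paper: unwind the Minkowski sum, invoke $\subLevel{\pth[]{\fn_i}}{y} = \bigcap_{\pntA \in \PntSet_i}\ball{\pntA}{y}$ from \obsref{fnn:sl:struc}, and apply the triangle inequality pointwise over $\PntSet_i$. The paper phrases the same step as ``$\ball{\query}{(1+\eps)y} \supseteq \PntSet_i$ by the triangle inequality''; you have simply spelled it out term by term.
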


\begin{proof}
    Consider any point $\query$ in $\subLevel{\pth[]{\fn_i}}{y}%
    \Mplus%
    \ball{0}{\eps y}%
    $.  It is easy to see that $\ball{\query}{(1+\eps)y} \supseteq
    \PntSet_i$ by the triangle inequality, and so $\query \in
    \subLevel{\pth[]{\fn_i}}{(1+\eps)y}$.
\end{proof}

Condition \pcondref{sketch:small} is implied by the following,

\begin{lemma}
    Let $\FuncSetA \subseteq \brc{\fn_1,\dots,\fn_n}$ denote any set of
    functions. Then, given any $\delta > 0$, there is a subset
    $\FuncSetB \subseteq \FuncSetA$ with $\cardin{\FuncSetB} = 1$ and
    a $y_0$ with $y_0 = O(\CR{\FuncSetA} \cardin{\FuncSetA} / \delta)$
    such that $\FuncSetB$ is a $(\delta,y_0)$-sketch for $\FuncSetA$.
\end{lemma}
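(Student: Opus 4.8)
The plan is to produce a \emph{one-element} sketch. Concretely, I will show that $\FuncSetB = \brc{\fn_j}$ works for \emph{any} index $j$ with $\fn_j \in \FuncSetA$, once $y_0$ is taken to be $O\pth{\CR{\FuncSetA}\cdot\cardin{\FuncSetA}/\delta}$; writing $m = \cardin{\FuncSetA}$ and $l = \CR{\FuncSetA}$, I will take $y_0 = 2ml/\delta$. The case $m=1$ is trivial ($\FuncSetB = \FuncSetA$, $y_0 = 0$), so assume $m \ge 2$, whence $l > 0$ and, by the remark following \defref{connect:level}, all sublevel sets $\subLevel{\fn_i}{l}$ with $\fn_i \in \FuncSetA$ are nonempty.

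The first step is to pick, for every $i$ with $\fn_i \in \FuncSetA$, an arbitrary representative point $c_i \in \PntSet_i$. By \obsref{fnn:sl:struc}, any $\query \in \subLevel{\fn_i}{y}$ lies within distance $y$ of \emph{every} point of $\PntSet_i$, so in particular $\dist{\query}{c_i} \le y$; this is the key replacement for the naive diameter bound, which is by itself useless here since the sublevel sets can have diameter $\approx 2y$ even for large $y$. The second step is a connectivity/chaining argument in the spirit of \lemref{connlb} and \lemref{conFVerify2}: since the collection $\brc{\subLevel{\fn_i}{l}}_{\fn_i\in\FuncSetA}$ is connected, its intersection graph on $\FuncSetA$ is connected (two disjoint nonempty compact sets are separated), so any two indices $a,b$ are joined by a chain $a=i_1,\dots,i_k=b$ with $k \le m$ and $\subLevel{\fn_{i_r}}{l}\cap\subLevel{\fn_{i_{r+1}}}{l}\ne\emptyset$. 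A point $x$ in such an intersection is within $l$ of all of $\PntSet_{i_r}$ and all of $\PntSet_{i_{r+1}}$, so $\dist{c_{i_r}}{c_{i_{r+1}}} \le \dist{c_{i_r}}{x} + \dist{x}{c_{i_{r+1}}} \le 2l$; summing along the chain gives $\dist{c_i}{c_j} \le 2(m-1)l$ for all relevant $i,j$. The same observation, applied to $\subLevel{\fn_j}{l}$, yields $\PntSet_j \subseteq \ball{z}{l}$ for any $z \in \subLevel{\fn_j}{l}$, hence $\dist{c_j}{\pntB} \le 2l$ for every $\pntB \in \PntSet_j$.

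The final step merely assembles these estimates: for $y \ge y_0$, any $\fn_i \in \FuncSetA$, any $\query \in \subLevel{\fn_i}{y}$ and any $\pntB \in \PntSet_j$,
\[
\dist{\query}{\pntB} \le \dist{\query}{c_i} + \dist{c_i}{c_j} + \dist{c_j}{\pntB} \le y + 2(m-1)l + 2l = y + 2ml \le y + \delta y_0 \le (1+\delta)y ,
\]
using $2ml = \delta y_0 \le \delta y$. As this holds for all $\pntB \in \PntSet_j$, \obsref{fnn:sl:struc} gives $\query \in \subLevel{\fn_j}{(1+\delta)y}$; taking the union over $i$ and $\query$ shows $\subLevel{\FuncSetA}{y} \subseteq \subLevel{\FuncSetB}{(1+\delta)y}$ for every $y \ge y_0$, i.e.\ $\FuncSetB = \brc{\fn_j}$ is a $(\delta,y_0)$-sketch.

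I expect the only real subtlety to be the chaining step: correctly extracting from $\CR{\FuncSetA}=l$ that the intersection graph of the level-$l$ sublevel sets is connected, and observing that a point shared by two of them is simultaneously within $l$ of both whole point sets (this is exactly what herds all the representatives $c_i$ into a cluster of radius $O(ml)$). Everything else is a routine triangle-inequality chase, and, as in \lemref{conFVerify2}, the constant hidden in $y_0$ is irrelevant for the \AVD application.
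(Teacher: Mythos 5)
Your proof is correct and follows essentially the same strategy as the paper's: chain across the intersection graph of the level-$\CR{\FuncSetA}$ sublevel sets to show that the uncertain point sets $\PntSet_i$ are clustered within a region of diameter $O(m\,\CR{\FuncSetA})$, then a single triangle-inequality step shows that for $y \geq y_0 = \Theta(m\,\CR{\FuncSetA}/\delta)$ every sublevel set $\subLevel{\fn_i}{y}$ lands inside $\subLevel{\fn_j}{(1+\delta)y}$. The only cosmetic difference is that you carry representative points $c_i \in \PntSet_i$ through the chaining (giving $y_0 = 2ml/\delta$ with a short, clean final estimate), whereas the paper directly bounds $\diameter{\PntSet_1 \cup \cdots \cup \PntSet_m} \leq 4ml$ and then invokes the ball-containment criterion; your route is marginally tighter in the constant, which does not matter for the $O(\cdot)$ bound. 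One small quibble: your aside that the "naive diameter bound [on $\subLevel{\fn_i}{y}$] is by itself useless" is a red herring — both you and the paper only ever use the $\leq 2l$ diameter bound at the \emph{connectivity level} $l=\CR{\FuncSetA}$ (to chain), never at the query level $y$, so there is no pitfall being dodged.
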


\begin{proof}
    Without loss of generality, let $\FuncSetA = \brc{\fn_1, \dots,
       \fn_m}$ where $m = \cardin{\FuncSetA}$. Let $z =
    \CR{\FuncSetA}$. Since $\subLevel{\pth[]{\fn_i}}{z}$ for $i =
    1,\dots, m$ are all connected, and by
    \obsref{fnn:sl:struc}, for each $i$ with $1 \leq i \leq m$ we have
    that $\diameter{\subLevel{\pth[]{\fn_i}}{z}} \leq 2z$ it follows
    that for any two points $\pntA \in \PntSet_j, \pntB \in \PntSet_k$
    with $1 \leq j, k \leq m$ there are points $\pntA' \in
    \subLevel{\pth[]{\fn_j}}{z}, \pntB' \in
    \subLevel{\pth[]{\fn_k}}{z}$, such that $\dist{\pntA}{\pntA'}
    \leq z$, $\dist{\pntB}{\pntB'} \leq z$ (by definition of the
    function $\fn_j$ and $\fn_k$ respectively) and
    $\dist{\pntA'}{\pntB'} \leq 2 m z$, by the bound on the diameter
    of the sublevel sets and the condition of being connected, which 
    is the same as the intersection graph of the sets being connected.  
    It follows by the triangle inequality that
    $\dist{\pntA}{\pntB} \leq 2(m+1)z \leq 4mz$ i.e.
    $\diameter{\PntSet_1 \cup \dots \cup \PntSet_m} \leq 4mz$.  Let
    $\FuncSetB$ be a set containing an arbitrary function from $\FuncSetA$ say,
    $\FuncSetB = \brc{\fn_1}$.  It is not too hard to see (or one
    can apply \lemref{containCond2} with the case $\alpha_i = 0, \wt_i
    = 1$), that for every $i = 1,\dots,m$, $\ball{\pntB}{y} \subseteq
    \ball{\pntA}{(1+\delta)y}$ for any points $\pntA, \pntB \in
    \PntSet_1 \cup \dots \cup \PntSet_m$ if $y \geq y_0 = 4mz /
    \delta$, from the bound on the diameter of $\PntSet_1 \cup \dots
    \cup \PntSet_m$. Thus, for any $i$ with $1 \leq i \leq m$ we have
    that,
    \[
    \subLevel{\pth[]{\fn_i}}{y} = \bigcap_{\pntB \in \PntSet_i}
    \ball{\pntB}{y} \subseteq%
    \ball{\pntA}{(1+\delta)y},
    \]
    for all $\pntA \in \PntSet_1$ and $y \geq y_0$. As such,
    \[
    \subLevel{\pth[]{\fn_i}}{y} \subseteq \bigcap_{\pntA \in
       \PntSet_1} \ball{\pntA}{(1+\delta)y} =
    \subLevel{\pth[]{\fn_1}}{(1+\delta)y},
    \]
    and the result follows.
\end{proof}

\begin{remark}
    For $\FuncSetA = \brc{\fn_1, \dots, \fn_m}$, notice that we
    can compute the above set $\FuncSetB$ in $O(1)$ time and that we
    can compute a polynomial approximation to $\CR{\FuncSetA}$ in
    $O(m)$ time, since we can compute the diameter
    $\diameter{\PntSet_1 \cup \dots \cup \PntSet_m}$ approximately in
    $O(m)$ time - we simply take an arbitrary point in $\PntSet_1$ and
    compute furthest distances approximately for each of $\PntSet_i$
    for $1 \leq i \leq m$ and take the maximum of these. We can use
    the $O(1)$ time query algorithm for furthest neighbor for this
    purpose.
\end{remark}
We now consider the computability conditions \ccondref{s:computable}--
\ccondref{f:s:computable}. To compute $\sepM{\query}{\fn_i}$ we use
the data structure for approximate furthest neighbor queries to get a
$(1-\eps/4)$-approximation to this number. We run the preprocessing
algorithm, see \secref{build}, with approximation parameter
$\eps/4$. By \remref{grid:set:bound}, we only
tile the sublevel sets $\subLevel{\pth[]{\fn_i}}{y}$ with canonical
cubes of size (rounded to a power of two) $\eps
\grfunc_{\pth[]{\fn_i}}(y) / 4 = \eps y/4$.  Notice that the minimum
$y$ such that $\subLevel{\pth[]{\fn_i}}{y}$ is non-empty is clearly
the radius of the \MEB of the point set $\PntSet_i$ and for this value
$\subLevel{\pth[]{\fn_i}}{y}$ just includes the center of the
\MEB. Let $\mebc_i$ and $\mebr_i$ denote the center and radius of the
exact \MEB, and $\mebc'_i$ and $\mebr'_i$ denote those computed by
using the coreset.  Since $\mebr'_i \leq (1 + \mu) \mebr_i$, it not
too hard to see that $\norm{\mebc_i - \mebc'_i} \leq 3 \sqrt{\mu}
\mebr_i = \eps \mebr_i/4$.  This is implied for example by
\lemref{meb:fact}, presented in \apndref{meb:fact:proof}, which may
also be of independent interest (this assumes $\mu < 1$ which is
indeed true).  We are required to tile the sublevel set
$\subLevel{\pth[]{\fn_i}}{y}$ for some $y \geq \mebr'_i/(1 + \mu)$
using cubes of size roughly $\eps y/4$, but we use in fact cubes of
size roughly $\eps y / \constC$ for some large constant $\constC$.
One can consider such cubes at increasing distance from the point
$\mebc'_i$. Choosing any point within a cube one evaluates approximately
the furthest neighbor distance of $\PntSet_i$, and checks if it is at
most $y(1+O(\eps))$.  If so, one includes the cube. Since all such
cubes will intersect the ball around $\mebc'_i$ of radius $y$, or a
slight expansion of it, the number of such cubes is still
$O(1/\eps^d)$. Now, the procedure in fact guarantees that all subcubes
intersecting $\subLevel{\pth[]{\fn_i}}{y}$ are found, but in fact
there may be some that do not intersect it. However, this is not a
problem as such cells will still be inside
$\subLevel{\pth[]{\fn_i}}{(1+\eps/4)y}$ which is what is really
required.  To see that this works should be intuitively clear. We omit
the straightforward, but tedious proof.  This settles Condition
\ccondref{grid:computable}.  Notice that the \separation between
$\fn_i$ and $\fn_j$ is the radius of the minimum enclosing ball of
the point set $\PntSet_i \cup \PntSet_j$. Using the $(1 +
\mu)$-coresets that we have for the \MEB of $\PntSet_i$ and
$\PntSet_j$ we can compute a $(1 + 2\mu)$-coreset for the \MEB of
$\PntSet_i \cup \PntSet_j$ by simply merging those coresets. This
allows us to approximately compute the \separation.
\begin{remark}
    For the computability conditions
    \ccondref{s:computable}--\ccondref{f:s:computable} we only showed
    approximate results, that is the \separation{}s were computed
    approximately. In fact, to be conservative, we used $\eps/4$ as
    the approximation parameter in the construction algorithm and the
    furthest neighbor data structure. As a tedious but straightforward
    argument can show, the main lemmas \lemref{near:neighbor} and
    \lemref{interval} for near neighbor and interval range queries as
    well as the ones for computing the connectivity and splitting
    radius \lemref{connectivity} and \lemref{splitting} can work under
    such approximate computations, with the same running times.
\end{remark}
We thus get the following result.

\RestateGeneric{\thmrefpage{fnn:main}}{\bodyFNNMain}

\begin{proof}
    We only need to show how get the improved space and query
    time. Observe that every one of the sets $\PntSet_i$ can be
    replaced by a subset $\PntSetC_i \subseteq \PntSet_i$, of size
    $O(1/\eps^d \log (1/\eps))$, such that for any point $\query \in
    \Re^d$, we have that $\fn_{\PntSetC_{i}}(\query) \leq
    \fn_{\PntSet_{i}}(\query) \leq
    (1+\eps/4)\fn_{\PntSetC_{i}}(\query)$. Such a subset can be
    computed in $O(\cardin{\PntSet_i} )$ time, see
    \cite{h-caspm-99}\footnote{One computes an appropriate exponential
       grid, of size $O(1/\eps^d \log (1/\eps) )$, and pick from each
       grid cell one representative point from the points stored
       inside this cell.}. %
    We thus perform this transformation for each one of the uncertain
    point sets $\PntSet_1, \ldots, \PntSet_n$, which reduces the input
    size to $O(n/\eps^d \log(1/\eps))$. We now apply our main result
    to the distance functions induced by the reduced sets
    $\PntSetC_1,\ldots, \PntSetC_n$.
\end{proof}

\section{Conclusions}
\seclab{conclusions}

In this paper, we investigated what classes of functions have
minimization diagrams that can be approximated efficiently -- where our
emphasis was on distance functions. We defined a general framework
and the requirements on the distance functions to fall under it.  For
this framework, we presented a new data-structure, with near linear
space and preprocessing time. This data-structure can evaluate
(approximately) the minimization diagram of a query point in
logarithmic time.  Surprisingly, one gets an \AVD (approximate Voronoi
diagram) of this complexity; that is, a decomposition of space with
near linear complexity, such that for every region of this
decomposition a single function serves as an \ANN for all points in
this region.

We also showed some interesting classes of functions for which we get
this \AVD. For example, additive and multiplicative weighted distance
functions. No previous results of this kind were known, and even in
the plane, multiplicative Voronoi diagrams have quadratic complexity
in the worst case (for which the \AVD generated has near linear
complexity for any constant dimension).  The framework also works for
Minkowski metrics of fat convex bodies, and nearest
furthest-neighbor. However, our main result applies to even more
general distance functions.

Several questions remain open for further research:
\begin{compactenum}[(A)]
    \item Are the additional polylog factors in the space necessary?
    In particular, it seems unlikely that using \WSPD{}'s directly, as
    done by Arya and Malamatos \cite{am-lsavd-02}, should work in the
    most general settings, so reducing the logarithmic dependency
    seems quite interesting.  Specifically, can the Arya and Malamatos
    construction \cite{am-lsavd-02} be somehow adapted to this
    framework, possibly with some additional constraints on the
    functions, to get a linear space construction?
    
    \item On the applications side, are constant degree polynomials a
    good family amenable to our framework?  Specifically, consider a
    polynomial $\tau(x)$ that is positive for all $x \geq 0$. Given a
    point $\pntA$, we associate the distance function $f( \query) =
    \tau\pth{\dist{\query}{\pntA}}$ with $\pntA$.  Given a set of such
    distance functions, under which conditions, can one build an \AVD
    for these functions efficiently? (It is not hard to see that in
    the general case this is not possible, at least under our
    framework.)
    
\end{compactenum}

\bibliographystyle{alpha}%
\bibliography{wann}%

\appendix

\section{Bounding the size of intersection of balls %
   of the same radius}
\apndlab{meb:fact:proof}

\begin{lemma}
    \lemlab{meb:fact}%
    Let $\mebc, \mebr$ be the center and radius of the \MEB for a set
    of points $\PntSet = \brc{\pnt_1,\dots,\pnt_m} \subseteq \Re^d$.
    Let $\delta \geq 0$ be any number. Let $\pnt \in \bigcap_{i=1}^m
    \ball{\pnt_i}{(1+\delta)\mebr}$.  Then,
    \[
    \delta \mebr%
    \leq%
    \norm{\pnt - \mebc}%
    \leq%
    \sqrt{4\delta + 2\delta^2} \mebr.
    \]
\end{lemma}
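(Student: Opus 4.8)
The real content of this lemma is the upper bound; the lower bound $\delta\mebr \le \norm{\pnt - \mebc}$ is immediate from the triangle inequality, so I will focus the plan on proving $\norm{\pnt - \mebc}^2 \le (4\delta + 2\delta^2)\mebr^2$. We may assume $\mebr > 0$, since if $\mebr = 0$ every $\pnt_i$ equals $\mebc$ and both inequalities are trivially $0 \le 0$.

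The idea is to compare the candidate center $\pnt$ with the true center $\mebc$ through their midpoint $q = \tfrac12(\pnt + \mebc)$, and then to invoke the defining minimality of the \MEB{} radius: $\max_i \norm{x - \pnt_i} \ge \mebr$ for \emph{every} $x \in \Re^d$, in particular for $x = q$. For each $i$, apply the parallelogram law to the vectors $\pnt - \pnt_i$ and $\mebc - \pnt_i$, whose half-sum is $q - \pnt_i$ and whose half-difference is $\tfrac12(\pnt - \mebc)$:
\begin{equation*}
   \norm{q - \pnt_i}^2 \;=\; \tfrac12\norm{\pnt - \pnt_i}^2 + \tfrac12\norm{\mebc - \pnt_i}^2 - \tfrac14\norm{\pnt - \mebc}^2 .
\end{equation*}
Since $\pnt \in \bigcap_i \ball{\pnt_i}{(1+\delta)\mebr}$ we have $\norm{\pnt - \pnt_i} \le (1+\delta)\mebr$, and by definition of the \MEB{} we have $\norm{\mebc - \pnt_i} \le \mebr$; hence $\norm{q - \pnt_i}^2 \le \tfrac12(1+\delta)^2\mebr^2 + \tfrac12\mebr^2 - \tfrac14\norm{\pnt - \mebc}^2$ for every $i$.

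Maximizing over $i$ and using $\max_i \norm{q - \pnt_i} \ge \mebr$ gives $\mebr^2 \le \tfrac12(1+\delta)^2\mebr^2 + \tfrac12\mebr^2 - \tfrac14\norm{\pnt - \mebc}^2$, which rearranges to $\tfrac14\norm{\pnt - \mebc}^2 \le \tfrac12\big((1+\delta)^2 - 1\big)\mebr^2 = \tfrac12(2\delta + \delta^2)\mebr^2$, i.e. $\norm{\pnt - \mebc}^2 \le (4\delta + 2\delta^2)\mebr^2$, as claimed. There is no genuine obstacle here; the one step I would single out as the crux is choosing the midpoint $q$ as the auxiliary enclosing center, since that is exactly what lets the minimality of $\mebr$ be brought to bear — everything else is the parallelogram identity and bookkeeping. (If one is willing to use the stronger structural fact that $\mebc$ lies in the convex hull of the points of $\PntSet$ at distance exactly $\mebr$ from $\mebc$, the same bookkeeping yields the sharper bound $\norm{\pnt - \mebc} \le \sqrt{2\delta + \delta^2}\,\mebr$; the elementary midpoint argument already suffices for the application in \secref{f:n:neighbor}.)
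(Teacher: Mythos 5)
Your upper-bound argument is correct and takes a genuinely different route from the paper's. You consider the midpoint $q = \tfrac{1}{2}(\pnt + \mebc)$, apply the parallelogram identity, and invoke only the optimality of the \MEB radius, $\max_i \norm{q - \pnt_i} \geq \mebr$. The paper instead relies on the Caratheodory-type fact that $\mebc$ lies in the convex hull of boundary points of $\PntSet$, projects $\pnt$ onto their affine span, bounds the in-plane and orthogonal components separately (using that some $\dotP{\pnt' - \mebc}{\pnt_i - \mebc}$ must be nonpositive), and recombines via the Pythagorean theorem. Both yield $\sqrt{4\delta + 2\delta^2}\,\mebr$, and your version is more elementary since it avoids the structural characterization of the center entirely. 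Your parenthetical remark is also right: writing $\mebc = \sum_i \lambda_i \pnt_i$ over boundary points with $\lambda_i \geq 0$, $\sum_i\lambda_i = 1$, and $\norm{\pnt_i-\mebc}=\mebr$, one has $\sum_i \lambda_i \norm{\pnt - \pnt_i}^2 = \norm{\pnt - \mebc}^2 + \mebr^2 \leq (1+\delta)^2\mebr^2$, giving the sharper $\norm{\pnt - \mebc} \leq \sqrt{2\delta + \delta^2}\,\mebr$; curiously, the paper's proof uses this very fact yet loses a factor of $\sqrt{2}$ by bounding the projection and the perpendicular part separately. One caveat that applies equally to the paper's own proof: the lower bound $\delta\mebr \leq \norm{\pnt - \mebc}$, which both of you dismiss via the triangle inequality, does not follow from the stated hypothesis alone, since $\pnt = \mebc$ lies in $\bigcap_i \ball{\pnt_i}{(1+\delta)\mebr}$ yet has $\norm{\pnt - \mebc} = 0$; one would additionally need $\max_i \norm{\pnt - \pnt_i} \geq (1+\delta)\mebr$. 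As only the upper bound is used in \secref{f:n:neighbor}, this does not affect the application.
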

\begin{proof}
    The first inequality follows from the triangle inequality.  We use
    the fact that there are affinely independent points from $\PntSet$, 
    on the
    surface of the \MEB, such that $\mebc$ lies in their convex
    hull. Thus, assume without loss of generality that there are
    points $\pnt_1, \dots, \pnt_k \in \PntSet$ that are affinely
    independent with $\norm{\pnt_i - \mebc} = \mebr$ and $\lambda_i
    \geq 0$ for $i = 1,2, \dots, k$ such that,
    \[
    \mebc = \sum_{i=1}^k \lambda_i \pnt_i, \quad \quad \sum_{i=1}^k
    \lambda_i = 1.
    \]
    We restrict our attention only to the points $\pnt_1, \dots,
    \pnt_k$ since the region $\bigcap_{i=1}^k
    \ball{\pnt_i}{(1+\delta)\mebr}$ contains the region
    $\bigcap_{i=1}^m \ball{\pnt_i}{(1+\delta)\mebr}$.  Consider an
    arbitrary point $\pnt \in \bigcap_{i=1}^k
    \ball{\pnt_i}{(1+\delta)\mebr}$. Let $\pnt'$ be the projection of
    $\pnt$ to the affine subspace spanned by $\pnt_1,\dots,
    \pnt_k$. We first bound $\norm{\pnt' - \mebc}$.  It is easy to see
    that $\pnt' - \mebc$ satisfies $\dotP{\pnt' - \mebc}{\pnt_i} \leq
    0$ for some $i$ with $1 \leq i \leq k$. Without loss of generality
    assume $i = 1$. It follows that,
    \[
    \norm{\pnt' - \pnt_1}%
    \geq%
    \sqrt{\norm{\pnt' - \mebc}^2 + \norm{\mebc - \pnt_1}^2}%
    \geq%
    \sqrt{\mebr^2 + \norm{\pnt' - \mebc}^2}.
    \]
    On the other hand it must be the case that, 
    $\norm{\pnt' - \pnt_1} \leq \norm{\pnt - \pnt_1} \leq (1+\delta)\mebr$. 
    As such, $(1+\delta)\mebr \geq
    \sqrt{\mebr^2 + \norm{\pnt' -\mebc}^2}$, and we have that
    $\norm{\pnt' - \mebc} \leq \sqrt{2\delta + \delta^2}\mebr$.  We
    also have that,
    \[
    (1 + \delta)^2\mebr^2%
    \geq%
    \norm{\pnt - \pnt_1}^2%
    =%
    \norm{\pnt - \pnt'}^2 + \norm{\pnt' - \pnt_1}^2%
    \geq%
    \norm{\pnt - \pnt'}^2 + \mebr^2,
    \]
    implying that $\norm{\pnt - \pnt'} \leq \sqrt{2\delta + \delta^2}
    \mebr$.  It follows by the Pythagorean theorem,
    \[
    \norm{\pnt - \mebc}^2%
    =%
    \norm{\pnt - \pnt'}^2 + \norm{\pnt' - \mebc}^2%
    \leq%
    2(2\delta + \delta^2)\mebr^2,
    \]
    and thus $\norm{\pnt - \mebc} \leq \sqrt{4\delta +
       2\delta^2}\mebr$.
\end{proof}

\section{Basic properties of the functions}
\apndlab{p:proof}

\begin{lemma}%
    \lemlab{s:l:zero:trivial}%
    Let $\FuncSet$ be a set of functions that satisfy the compactness
    \pcondref{sublevel:compact} and bounded growth
    \pcondref{bounded:growth} conditions. Then, for any $\func \in
    \FuncSet$, either $\subLevel{\func}{0} = \emptyset$ or
    $\subLevel{\func}{0}$ consists of a single point.
\end{lemma}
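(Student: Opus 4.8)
The plan is to derive a contradiction from the assumption that $\subLevel{\func}{0}$ contains two distinct points, exploiting the fact that at level $y=0$ the bounded growth condition \pcondref{bounded:growth} becomes unusually rigid: since $(1+\eps)\cdot 0 = 0$ for \emph{every} $\eps>0$, the condition says that $\subLevel{\func}{0}$ absorbs an arbitrarily large neighborhood of itself as soon as the growth function value $\grfunc_\func(0)$ is positive, which cannot coexist with compactness.

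First I would record the two consequences of \pcondref{bounded:growth} specialized to $y=0$, under the assumption $\subLevel{\func}{0}\neq\emptyset$ (the case $\subLevel{\func}{0}=\emptyset$ being one of the two allowed alternatives): namely (i) $\grfunc_\func(0) \geq \diameter{\subLevel{\func}{0}}/\grconst$, and (ii) for every $\eps>0$, $\subLevel{\func}{0}\Mplus \ball{0}{\eps\grfunc_\func(0)} \subseteq \subLevel{\func}{(1+\eps)\cdot 0} = \subLevel{\func}{0}$.

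Next, suppose for contradiction that $\subLevel{\func}{0}$ contains two distinct points, so $D := \diameter{\subLevel{\func}{0}} > 0$. Then (i) gives $\grfunc_\func(0) \geq D/\grconst > 0$, using that $\grconst$ is a (strictly positive) absolute constant. Now fix any $\pnt \in \subLevel{\func}{0}$ and any radius $R>0$; taking $\eps = R/\grfunc_\func(0)$ in (ii) yields $\ball{\pnt}{R} \subseteq \subLevel{\func}{0}$. Since $R$ was arbitrary, $\subLevel{\func}{0}$ is unbounded, contradicting the compactness condition \pcondref{sublevel:compact}. Hence $\diameter{\subLevel{\func}{0}} = 0$, i.e. $\subLevel{\func}{0}$ is either empty or a single point, as claimed.

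I do not expect any genuine obstacle here: the argument is immediate once one notices that the dilation factor $(1+\eps)$ multiplies $y=0$ and therefore contributes nothing. The only point requiring a moment's care is that one must invoke the positivity of the growth constant $\grconst$ so that a positive diameter really does force $\grfunc_\func(0)>0$, which is what makes the neighborhood-absorption in (ii) nontrivial.
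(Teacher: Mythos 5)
Your proof is correct, and it takes a slightly different route from the paper's. Both arguments rest on the same key observation — at $y=0$ the right-hand side of the bounded-growth inclusion is $\subLevel{\func}{(1+\eps)\cdot 0}=\subLevel{\func}{0}$ itself, so a positive growth value forces $\subLevel{\func}{0}$ to absorb a neighborhood of itself. The difference is in how the contradiction is cashed out. The paper fixes $\eps=1$, uses compactness to extract a diametrical pair $x,y$ realizing $\norm{x-y}=\diameter{\subLevel{\func}{0}}>0$, and then observes that the inclusion $\subLevel{\func}{0}\Mplus\ball{0}{\norm{x-y}/\grconst}\subseteq\subLevel{\func}{0}$ produces a point $y'$ with $\norm{x-y'}>\norm{x-y}$, contradicting that $x,y$ are diametrical. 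You instead exploit that \pcondref{bounded:growth} holds for \emph{all} $\eps>0$, take $\eps=R/\grfunc_\func(0)$ for arbitrary $R$, and conclude $\subLevel{\func}{0}$ contains balls of every radius, hence is unbounded, contradicting compactness outright. Your route avoids the diametrical-pair bookkeeping and reads a bit more cleanly; the paper's route only needs compactness to know the diameter is attained and would go through even if one weakened \pcondref{bounded:growth} to hold only for a single fixed $\eps$. Either way the argument is sound.
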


\begin{proof}
    If $\subLevel{\func}{0}$ contains at least two points, then by
    compactness \pcondref{sublevel:compact} of $\subLevel{\func}{0}$
    there are two points $x,y \in \subLevel{\func}{0}$ such that
    $\norm{x - y} = \diameter{\subLevel{\func}{0}} > 0$. By the
    bounded growth \pcondref{bounded:growth} it follows that
    \begin{align*}
        \subLevel{\func}{0} \subseteq \subLevel{\func}{0} \Mplus
        \ball{0}{\frac{\norm{x-y}}{\grconst}} \subseteq
        \subLevel{\func}{0} \Mplus \ball{0}{\grfunc_\func(0)}
        \subseteq \subLevel{\func}{0},
    \end{align*}
    using $\eps = 1$ and the fact that $\grfunc_\func(0) \geq
    \diameter{\subLevel{\func}{0}}/\grconst = \norm{x - y}/\grconst$.
    Thus, $\subLevel{\func}{0} \Mplus
    \ball{0}{\frac{\norm{x-y}}{\grconst}} = \subLevel{\func}{0}$.
    Clearly in $y \Mplus \ball{0}{\frac{\norm{x-y}}{\grconst}}$ there
    is some $y'$ such that $\norm{x - y'} > \norm{x - y}$ which
    contradicts that $x$ and $y$ is a diametrical pair in
    $\subLevel{\func}{0}$.
\end{proof}

By the above lemma, we may assume that a symbolic perturbation
guarantees that $\sepM{\func}{\funcA} > 0$ for $\func \neq
\funcA$. With this convention we have the following,
\begin{observation}%
    \obslab{c:l:zero}%
    If $\CR{\FuncSetA} = 0$ for any non-empty subset $\FuncSetA$ then
    $\cardin{\FuncSetA} = 1$.
\end{observation}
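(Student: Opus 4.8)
The plan is to argue by contradiction: assume $\cardin{\FuncSetA} \geq 2$ and show that $\subLevel{\FuncSetA}{0}$ cannot be connected, contradicting $\CR{\FuncSetA} = 0$.

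First I would unpack the hypothesis. By \defref{connect:level}, if $\cardin{\FuncSetA} \geq 2$ then $\CR{\FuncSetA} = 0$ means precisely that $0$ is the minimum level at which the collection of sets $\subLevel{\func}{0}$, $\func \in \FuncSetA$, is connected in the sense of \defref{sets:connected}; in particular (see the remark following \defref{connect:level}) each $\subLevel{\func}{0}$ is nonempty and connected, and $\subLevel{\FuncSetA}{0} = \bigcup_{\func \in \FuncSetA} \subLevel{\func}{0}$ is connected.

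Next I would pin down the shape of each $\subLevel{\func}{0}$. By \lemref{s:l:zero:trivial} (which uses only compactness \pcondref{sublevel:compact} and bounded growth \pcondref{bounded:growth}), each $\subLevel{\func}{0}$ is either empty or a single point; having just argued it is nonempty, it is exactly one point, say $\subLevel{\func}{0} = \brc{\pnt_\func}$. Now I would invoke the symbolic-perturbation convention stated right before \obsref{c:l:zero}, namely that $\sepM{\func}{\funcA} > 0$ for $\func \neq \funcA$. Since $\sepM{\func}{\funcA}$ is the minimum $l \geq 0$ with $\subLevel{\func}{l} \cap \subLevel{\funcA}{l} \neq \emptyset$, the inequality $\sepM{\func}{\funcA} > 0$ forces $\subLevel{\func}{0} \cap \subLevel{\funcA}{0} = \emptyset$, i.e. $\pnt_\func \neq \pnt_\funcA$ for distinct $\func, \funcA \in \FuncSetA$.

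Finally I would combine the two observations: $\subLevel{\FuncSetA}{0}$ is a finite set consisting of $\cardin{\FuncSetA} \geq 2$ pairwise distinct points, and any such set is disconnected, contradicting connectedness. Hence $\cardin{\FuncSetA} = 1$. The only mildly delicate point is the appeal to the perturbation convention: without it, two distinct functions could share the same zero-sublevel point, the union would then be a single point (connected), and the statement would only hold up to identifying such functions; with the convention in force this degeneracy is ruled out and the argument goes through. Everything else is immediate.
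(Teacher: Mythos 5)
Your argument is correct and is exactly the reasoning the paper intends: the paper gives no explicit proof, but it places \obsref{c:l:zero} immediately after \lemref{s:l:zero:trivial} (each $\subLevel{\func}{0}$ is empty or a singleton) and the symbolic-perturbation convention ($\sepM{\func}{\funcA}>0$ for $\func\neq\funcA$), and your reconstruction — sublevel sets at $0$ are nonempty (by \defref{sets:connected}), hence singletons, hence pairwise distinct by the perturbation convention, hence their union is a discrete set of at least two points and cannot be connected — is precisely the short argument those two facts are meant to produce. You also correctly flag the one genuinely delicate point, that without the perturbation convention two functions could share a zero-sublevel point and the observation would fail as stated.
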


We also assume that the quantities $\sepM{\func}{\funcA}$ are distinct
for all distinct pairs of functions.

\begin{lemma}%
    \lemlab{segment:contained}%
    Let $\func \in \FuncSetA$ and $y \geq 0$. Suppose $\pntA, \pntB
    \in \subLevel{\func}{y}$. Then, $\pntA \pntB \subseteq
    \subLevel{\FuncSetA}{(1+\grconst/2)y}$, where $\pntA \pntB$
    denotes the segment joining $\pntA$ to $\pntB$.
\end{lemma}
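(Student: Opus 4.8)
The plan is to show that every point on the segment $\pntA\pntB$ lies in the sublevel set $\subLevel{\func}{(1+\grconst/2)y}$, which suffices since $\subLevel{\func}{(1+\grconst/2)y}\subseteq\subLevel{\FuncSetA}{(1+\grconst/2)y}$. Take an arbitrary point $\pnt\in\pntA\pntB$. Since $\pntA,\pntB\in\subLevel{\func}{y}$, the distance from $\pnt$ to the set $\subLevel{\func}{y}$ is at most $\min(\norm{\pnt-\pntA},\norm{\pnt-\pntB})\le \norm{\pntA-\pntB}/2$. By the bounded growth condition \pcondref{bounded:growth} (more precisely, by the diameter bound that comes with it), $\grfunc_\func(y)\ge \diameter{\subLevel{\func}{y}}/\grconst \ge \norm{\pntA-\pntB}/\grconst$, so $\norm{\pntA-\pntB}\le \grconst\,\grfunc_\func(y)$, and hence $\distM{\pnt}{\subLevel{\func}{y}}\le (\grconst/2)\grfunc_\func(y)$.

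Now apply the second half of \pcondref{bounded:growth} with $\eps=\grconst/2$: if $\distM{\pnt}{\subLevel{\func}{y}}\le \eps\,\grfunc_\func(y)$ then $\func(\pnt)\le (1+\eps)y=(1+\grconst/2)y$, i.e.\ $\pnt\in\subLevel{\func}{(1+\grconst/2)y}$. Since $\pnt$ was an arbitrary point of the segment, we conclude $\pntA\pntB\subseteq\subLevel{\func}{(1+\grconst/2)y}\subseteq\subLevel{\FuncSetA}{(1+\grconst/2)y}$, as desired. One should first dispose of the degenerate case $\subLevel{\func}{y}=\emptyset$, which cannot occur here since it contains $\pntA$ (and $\pntB$); and if $\pntA=\pntB$ the statement is immediate.

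The only mild subtlety — and the one point that needs a little care rather than being purely routine — is making sure the distance estimate $\distM{\pnt}{\subLevel{\func}{y}}\le\norm{\pntA-\pntB}/2$ is used with the correct constant so that choosing $\eps=\grconst/2$ exactly produces the claimed bound $(1+\grconst/2)y$; in particular one must invoke the diameter inequality $\grfunc_\func(y)\ge\diameter{\subLevel{\func}{y}}/\grconst$ bundled into \pcondref{bounded:growth} rather than any tighter control on $\grfunc_\func$, since $\norm{\pntA-\pntB}$ can be as large as the full diameter of $\subLevel{\func}{y}$. Everything else is a direct substitution into the two clauses of the bounded growth condition.
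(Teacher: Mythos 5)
Your proof is correct and uses essentially the same argument as the paper: both set $\eps=\grconst/2$ in \pcondref{bounded:growth} and use the diameter bound $\grfunc_\func(y)\ge\diameter{\subLevel{\func}{y}}/\grconst$ to ensure the allowed dilation covers the segment. The paper phrases it by placing balls of radius $\diameter{\subLevel{\func}{y}}/2$ around $\pntA$ and $\pntB$ and noting their union covers $\pntA\pntB$; you phrase it pointwise via $\distM{\pnt}{\subLevel{\func}{y}}\le\norm{\pntA-\pntB}/2$ — these are the same computation in different clothing.
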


\begin{proof}
    If $\pntA = \pntB$, the claim is obvious.  Using bounded growth
    \pcondref{bounded:growth} with $\eps = \grconst/2$, and the
    inequality $\grfunc_\func(y) \geq
    \diameter{\subLevel{\func}{y}}/\grconst$, it follows that
    $\subLevel{\func}{y} \Mplus
    \ball{0}{\diameter{\subLevel{\func}{y}}/2} \subseteq
    \subLevel{\func}{(1+\grconst/2)y}$. Thus, $\pntA \Mplus
    \ball{0}{\diameter{\subLevel{\func}{y}}/2} \subseteq
    \subLevel{\func}{(1+\grconst/2)y}$ as well as
    $\ball{\pntB}{\diameter{\subLevel{\func}{y}}/2} \subseteq
    \subLevel{\func}{(1+\grconst/2)y}$. Since $\norm{\pntA - \pntB}
    \leq \diameter{\subLevel{\func}{y}}$ it follows that the entire
    segment $\pntA \pntB$ is in $\subLevel{\func}{(1+\grconst/2)y}$.
\end{proof}

\begin{lemma}%
    \lemlab{segment:compact:cover}%
    Let $\setA_1, \dots, \setA_m \subseteq \Re^d$ be compact connected 
    sets. Let $\pntA \pntB$ be any segment. Suppose that 
    $\pntA \pntB \cap \setA_i \neq \emptyset$ for all 
    $1 \leq i \leq k$ and $\pntA \pntB
    \subseteq \bigcup_{i=1}^k \setA_i$. Then, the sets $\setA_i, 1
    \leq i \leq k$, are connected.
\end{lemma}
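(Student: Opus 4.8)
The plan is to give a short point-set topology argument, reducing everything to the fact that a connected set contained in one side of a separation cannot meet the other side. By \defref{sets:connected}, and since each $\setA_i$ with $1 \le i \le k$ is nonempty (it meets $\pntA\pntB$) and connected, the assertion to prove is exactly that $X := \bigcup_{i=1}^{k} \setA_i$ is connected.

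First I would assume, for contradiction, that $X$ is disconnected, and write $X = U \sqcup V$ with $U$ and $V$ disjoint, nonempty, and relatively closed in $X$. (Compactness of $X$, a finite union of compact sets, even lets us take $U,V$ closed in $\Re^d$, although only the relative statements are used.) For each $i$, the decomposition $\setA_i = (\setA_i \cap U) \sqcup (\setA_i \cap V)$ is a partition of the connected set $\setA_i$ into two relatively closed pieces, so one piece is empty; hence each $\setA_i$ lies entirely inside $U$ or entirely inside $V$.

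Next I would bring in the segment. Since $\pntA\pntB$ is connected and $\pntA\pntB \subseteq \bigcup_{i=1}^{k}\setA_i = X = U \sqcup V$, the same reasoning applied to $\pntA\pntB$ forces $\pntA\pntB \subseteq U$ or $\pntA\pntB \subseteq V$; without loss of generality $\pntA\pntB \subseteq U$. Now for every $i$ with $1 \le i \le k$ we have $\pntA\pntB \cap \setA_i \neq \emptyset$, so $\setA_i$ contains a point of $U$, and since $U \cap V = \emptyset$ this rules out $\setA_i \subseteq V$; therefore $\setA_i \subseteq U$. Consequently $X = \bigcup_{i=1}^{k}\setA_i \subseteq U$, so $V = X \setminus U = \emptyset$, contradicting the nonemptiness of $V$. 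Hence $X$ is connected, which is the claim.

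The argument is entirely elementary and I do not expect a real obstacle. The only points needing a moment's care are (i) that ``connected'' for the collection means connectedness of the union, per \defref{sets:connected}, whose hypotheses hold because the relevant $\setA_i$ meet the segment and are thus nonempty; and (ii) that no connected piece can straddle the separation $U\sqcup V$, which is precisely where connectedness of each $\setA_i$ and of the segment $\pntA\pntB$ is used. Compactness is needed only to know the $\setA_i$ are closed, which keeps the relatively-closed bookkeeping clean, but is inessential to the logic.
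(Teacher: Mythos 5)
Your proof is correct and takes a genuinely different route from the paper's. The paper's own argument first replaces each $\setA_i$ by $\setA_i \cap \pntA\pntB$, forms the intersection graph of the resulting sets, groups its components into pairwise disjoint compact pieces $\setC_1,\dots,\setC_l$, uses compactness to extract a realized positive minimum distance between distinct pieces, and then derives a contradiction by walking along the segment joining the two closest points. You bypass the intersection graph entirely and argue topologically: a separation $U \sqcup V$ of the union must absorb each connected $\setA_i$, as well as the connected segment $\pntA\pntB$, entirely into a single side, and since every $\setA_i$ touches the segment they all land on the side containing $\pntA\pntB$, forcing the other side to be empty. Your version is shorter and, as you correctly observe, makes no essential use of compactness, whereas compactness is what makes the paper's minimum-distance step work. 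What the paper's route delivers as a by-product -- and what is in fact the form in which this lemma is invoked in the proof of \lemref{sketch:conn:level}, namely that the intersection graph of the $\setA_i$ is connected -- is not produced directly by your argument; but for compact $\setA_i$ connectedness of the union is equivalent to connectedness of the intersection graph (the components of the intersection graph yield pairwise disjoint compact, hence closed, pieces, which would separate the union), so your conclusion suffices for that downstream use. That equivalence is precisely where compactness re-enters, which is worth flagging if one states the lemma without it.
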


\begin{proof}
    It is sufficient to prove the claim for $\setA_i \subseteq \pntA
    \pntB$, as the truth of the claim for compact sets $\setA_i \cap
    \pntA \pntB$ implies the truth for $\setA_i$.  Thus, assume
    $\setA_i \subseteq \pntA \pntB$.  Suppose the claim is
    false. Consider the intersection graph of the $\setA_i, 1 \leq i
    \leq k$. This graph has at least two components by assumption. Let
    $\setB_1, \dots, \setB_l$ be the partition of $[1,k]$ that define
    these components i.e. for each $1 \leq i \leq l$, the sets
    $\setA_j, j \in \setB_i$ are connected, and $\setA_x \cap \setA_y =
    \emptyset$ for $1 \leq x, y \leq k$ if $x, y$ belong to different
    $\setB_i$. Denote by $\setC_i = \bigcup_{j \in \setB_i} \setA_j$
    for $1 \leq i \leq l$.  Clearly each $C_i$ is compact. By an easy
    compactness argument, there are distinct $1 \leq i_1, i_2 \leq l$
    such that for points $\pntC \in \setC_{i_1}, \pntD \in
    \setC_{i_2}$, we have that $0 < \norm{\pntC - \pntD} =
    \min\limits_{1 \leq x \neq y \leq l, \pnt \in \setC_x, \query \in
       \setC_y} \norm{\pnt - \query}$.  However, this is impossible as
    $\pntC, \pntD$ are distinct points on $\pntA \pntB$ and the
    segment $\pntC \pntD$ is therefore covered by the $\setC_i, 1 \leq
    i \leq l$. It follows that a smaller distance between distinct
    $\setC_i$ must be attainable.
\end{proof}

\begin{lemma}%
    \lemlab{sketch:conn:level}%
    Suppose we are given $\FuncSetB \subseteq \FuncSetA \subseteq
    \FuncSet$, $\delta \geq 0$ and $y \geq 0$, and $\FuncSetB$ is a
    $(\delta,y)$-sketch for $\FuncSetA$. Then, $\CR{\FuncSetB} \leq
    (1+\delta)(1+\grconst/2)\max(y,\CR{\FuncSetA})$.
\end{lemma}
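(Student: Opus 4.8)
The plan is to prove the bound with $z := (1+\delta)(1+\grconst/2)\max(y,\CR{\FuncSetA})$ by showing that at level $z$ the collection $\{\subLevel{g}{z}\mid g\in\FuncSetB\}$ is connected in the sense of \defref{sets:connected}, so that \defref{connect:level} (and the remark following it) gives $\CR{\FuncSetB}\le z$. We may assume $\cardin{\FuncSetB}\ge2$; otherwise $\CR{\FuncSetB}=0$ and the claim is trivial. Then $\cardin{\FuncSetA}\ge2$, so $\CR{\FuncSetA}>0$ by \obsref{c:l:zero}, and at level $\CR{\FuncSetA}$ every set $\subLevel{\func}{\CR{\FuncSetA}}$ with $\func\in\FuncSetA$ is nonempty. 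Put $L=\max(y,\CR{\FuncSetA})$. I would first record three facts: (a) since $L\ge\CR{\FuncSetA}$, the intersection graph $G$ of $\{\subLevel{\func}{L}\mid\func\in\FuncSetA\}$ contains the (connected) intersection graph at level $\CR{\FuncSetA}$, hence $G$ is connected, and $\subLevel{\func}{L}\supseteq\subLevel{\func}{\CR{\FuncSetA}}\ne\emptyset$ for every $\func\in\FuncSetA$; (b) since $(1+\grconst/2)L\ge L\ge y$, the sketch property \defref{sketch:def} gives $\subLevel{\FuncSetA}{(1+\grconst/2)L}\subseteq\subLevel{\FuncSetB}{z}$; (c) a finite union of compact sets is disconnected whenever its intersection graph is, so a connected union forces a connected intersection graph.

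The core of the argument is to show that the intersection graph $H$ of $\{\subLevel{g}{z}\mid g\in\FuncSetB\}$ is connected. For $p\in\subLevel{\FuncSetB}{z}$, any two functions of $\FuncSetB$ whose $z$-sublevel sets contain $p$ are adjacent in $H$, so ``the $H$-component of $p$'' is well defined. Given $g_1,g_2\in\FuncSetB$, take a path $g_1=\func_0,\func_1,\dots,\func_k=g_2$ in $G$ and witnesses $q_j\in\subLevel{\func_j}{L}\cap\subLevel{\func_{j+1}}{L}$ for $0\le j\le k-1$. For $1\le j\le k-1$ the points $q_{j-1},q_j$ both lie in $\subLevel{\func_j}{L}$ with $\func_j\in\FuncSetA$, so \lemref{segment:contained} puts the segment $q_{j-1}q_j$ inside $\subLevel{\FuncSetA}{(1+\grconst/2)L}\subseteq\subLevel{\FuncSetB}{z}$. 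Applying \lemref{segment:compact:cover} to this segment together with the finitely many sets $\subLevel{g}{z}$, $g\in\FuncSetB$, that meet it --- which are compact by \pcondref{sublevel:compact}, connected by the convention that sublevel sets are connected whenever nonempty (see the remark after \defref{connect:level}), and cover the segment since the segment lies in $\subLevel{\FuncSetB}{z}$ --- shows their union is connected, hence by (c) all the corresponding $g$ lie in one component of $H$; thus every point of the segment, and in particular $q_{j-1}$ and $q_j$, has the same $H$-component. Chaining over $j$, the points $q_0$ and $q_{k-1}$ share an $H$-component; as $q_0\in\subLevel{g_1}{L}\subseteq\subLevel{g_1}{z}$ and $q_{k-1}\in\subLevel{g_2}{L}\subseteq\subLevel{g_2}{z}$, this is the component of both $g_1$ and $g_2$. (For $k\le1$ this is immediate.) So $H$ is connected.

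Finally, every $\subLevel{g}{z}$ with $g\in\FuncSetB$ is nonempty (it contains $\subLevel{g}{\CR{\FuncSetA}}$) and connected, and by the previous paragraph and (c) their union is connected; hence the collection is connected at level $z$, and therefore $\CR{\FuncSetB}\le z=(1+\delta)(1+\grconst/2)\max(y,\CR{\FuncSetA})$. The main obstacle I anticipate is the middle paragraph: setting up the ``$H$-component of a point'' correctly and propagating it along a path by repeated use of \lemref{segment:compact:cover}, together with the routine bookkeeping for degenerate paths and the elementary connectivity fact (c).
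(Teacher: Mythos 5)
Your proof is correct and takes essentially the same route as the paper: walk along a connecting chain in the intersection graph of $\FuncSetA$ at level $L$, use \lemref{segment:contained} to place the segment between consecutive witnesses inside $\subLevel{\FuncSetA}{(1+\grconst/2)L}$, push it into $\subLevel{\FuncSetB}{z}$ via the sketch property, and apply \lemref{segment:compact:cover} to splice in paths through $\FuncSetB$'s $z$-sublevel sets. Your bookkeeping via ``the $H$-component of a point'' is a somewhat cleaner rendering of the paper's informal ``replace each occurrence of the bad sublevel set by the corresponding chain'' step, but the key lemmas and the core idea are identical.
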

\begin{proof}
    Assume that $\FuncSetA = \brc{\func_1, \dots, \func_m}$ and
    $\FuncSetB = \brc{\func_1, \dots, \func_k}$ where $k \leq m$. If
    $m = 1$ then $k = 1$ and we have by definition $\CR{\FuncSetA} =
    \CR{\FuncSetB} = 0$ and the result clearly holds true. If $m > 1$,
    we need to show that $\subLevel{\pth[]{\func_i}}{y'}$, for
    $i=1,\ldots, k$, are connected, where $y' =
    (1+\delta)(1+\grconst/2)l$ and $l = \max(y,\CR{\FuncSetA})$.  Now
    by definition, $\subLevel{\FuncSetA}{l}$ is a connected set.
    Consider any $1 \leq i \neq j \leq k$. Then there is a
    sequence of distinct indices $i = i_1, i_2, \dots, i_s = j$ such
    that $\subLevel{\pth[]{\func_{i_r}}}{l} \cap
    \subLevel{\pth[]{\func_{i_{r+1}}}}{l} \neq \emptyset$ for $1 \leq
    r \leq s-1 $. Consider any such index say $i_r$ such that $i_r >
    k$ i.e. $\func_{i_r} \notin \FuncSetB$.  Since,
    $\subLevel{\pth[]{\func_{i_r}}}{l} \cap
    \subLevel{\pth[]{\func_{i_{r-1}}}}{l} \neq \emptyset$ and
    $\subLevel{\pth[]{\func_{i_r}}}{l} \cap
    \subLevel{\pth[]{\func_{i_{r+1}}}}{l} \neq \emptyset$ we can
    choose points $\pntA \in \subLevel{\pth[]{\func_{i_{r-1}}}}{l}
    \cap \subLevel{\pth[]{\func_{i_r}}}{l}$ and $\pntB \in
    \subLevel{\pth[]{\func_{i_r}}}{l} \cap
    \subLevel{\pth[]{\func_{i_{r+1}}}}{l}$. Now the entire segment
    $\pntA \pntB \subseteq
    \subLevel{\pth[]{\func_{i_r}}}{(1+\grconst/2)l}$ by
    \lemref{segment:contained}. Since $(1+\grconst/2)l \geq y$ it
    follows by the sketch property \pcondref{sketch:small}, that
    $\pntA \pntB \subseteq
    \subLevel{\pth[]{\func_{i_r}}}{(1+\grconst/2)l} \subseteq
    \subLevel{\FuncSetB}{(1+\grconst/2)(1+\delta)l}$. By
    \lemref{segment:compact:cover} the sets in the 
    minimal cover of $\pntA \pntB$
    by the sublevel sets
    $\subLevel{\pth[]{\func_i}}{(1+\grconst/2)(1+\delta)l}, 1 \leq i
    \leq k$, are connected. It follows that
    $\subLevel{\pth[]{\func_{i_r}}}{(1+\grconst/2)l}$ can be replaced
    by a sub-collection of the
    $\subLevel{\pth[]{\func_{i}}}{(1+\grconst/2)(1+\delta)l}, 1 \leq i
    \leq k$ and the property of neighbor intersections is still valid
    in the chain. We replace each occurrence of the set
    $\subLevel{\pth[]{\func_{i_r}}}{(1+\grconst/2)l}$ for $i_r > k$ by
    the corresponding chain. It is easy to see that the resulting
    chain connects up
    $\subLevel{\pth[]{\func_{i_1}}}{(1+\grconst/2)(1+\delta)l}$ and
    $\subLevel{\pth[]{\func_{i_s}}}{(1+\grconst/2)(1+\delta)l}$. Now,
    duplicate elements can be easily removed without affecting the
    neighbor intersection property of the chain.
\end{proof}

The following testifies that a sketch approximates the \separation of
a set of functions.

\begin{lemma}%
    \lemlab{sketch:rule}%
    Let $\FuncSetB \subseteq \FuncSetA$ be sets of functions, where
    $\FuncSetB$ is a $(\delta,y_0)$-sketch for $\FuncSetA$ for some
    $\delta \geq 0$ and $y_0 \geq 0$.  Let $\query$ be a point such
    that $\sepM{\query}{\FuncSetA} \geq y_0$. Then we have that
    $\sepM{\query}{\FuncSetB} \leq (1 + \delta)
    \sepM{\query}{\FuncSetA}$.
\end{lemma}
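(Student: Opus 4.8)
The plan is a direct unwinding of the definitions, with no real combinatorial content. Set $z = \sepM{\query}{\FuncSetA}$, so that by hypothesis $z \geq y_0$. Recalling that $\sepM{\query}{\func} = \func(\query)$ and that the \separation extends to a set of functions by taking the minimum over its members, we have $\sepM{\query}{\FuncSetA} = \min_{\func \in \FuncSetA} \func(\query)$; since $\FuncSetA$ is finite and nonempty this minimum is attained, say by $\func \in \FuncSetA$ with $\func(\query) = z$. Equivalently, $\query \in \subLevel{\func}{z} \subseteq \subLevel{\FuncSetA}{z}$.

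Next I would invoke the sketch property. Because $z \geq y_0$, \defref{sketch:def} gives $\subLevel{\FuncSetA}{z} \subseteq \subLevel{\FuncSetB}{(1+\delta)z}$, so $\query \in \subLevel{\FuncSetB}{(1+\delta)z}$. By definition of the sublevel set of a family, there is then some $\funcA \in \FuncSetB$ with $\query \in \subLevel{\funcA}{(1+\delta)z}$, i.e. $\funcA(\query) \leq (1+\delta)z$. Hence
\begin{align*}
  \sepM{\query}{\FuncSetB}
  =
  \min_{\funcA \in \FuncSetB} \funcA(\query)
  \leq
  (1+\delta)z
  =
  (1+\delta)\sepM{\query}{\FuncSetA},
\end{align*}
which is exactly the claimed bound.

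There is essentially no obstacle here; the only thing to be careful about is the bookkeeping with the conventions of \secref{notation:basic:defns}: that $\query$ plays the role of a point with $\sepM{\query}{\funcA} = \funcA(\query)$, that $\sepM{\query}{\cdot}$ applied to a family is the pointwise minimum of the functions in it, and that $\func \colon \Re^d \to \Re$ guarantees each value $\func(\query)$ is finite, so the two minima above are genuinely attained. If one prefers to avoid naming a witnessing function, the same argument runs by observing $\query \in \subLevel{\FuncSetA}{z}$ directly from $\sepM{\query}{\FuncSetA} = z$ and then pushing this membership through the sketch inclusion.
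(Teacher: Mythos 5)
Your proof is correct and is essentially the same direct unwinding of \defref{sketch:def} that the paper gives: pick a witness $\func\in\FuncSetA$ with $\query\in\subLevel{\func}{z}$, push this membership through the sketch inclusion into $\subLevel{\FuncSetB}{(1+\delta)z}$, and read off a witness in $\FuncSetB$. The only cosmetic difference is that you route through $\subLevel{\FuncSetA}{z}$ rather than $\subLevel{\func}{z}$, which changes nothing.
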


\begin{proof}
    Let $l = \sepM{\query}{\FuncSetA}$ and let $\func \in \FuncSetA$
    be a witness that $\query \in \subLevel{\func}{l}$. As $l \geq
    y_0$ we have that $\subLevel{\func}{l} \subseteq \bigcup_{\funcA
       \in \FuncSetB} \subLevel{\funcA}{(1+\delta)l}$ by the sketch
    property (\defref{sketch:def}). As such there is some function
    $\funcA \in \FuncSetB$ such that $\query \in
    \subLevel{\funcA}{(1+\delta)l}$. It follows that
    $\sepM{\query}{\funcA} \leq (1+\delta) \sepM{\query}{\FuncSetA}$.
\end{proof}

\end{document}